\newtheorem{theorem}{Theorem}[section]
\newtheorem{lemma}[theorem]{Lemma}
\newtheorem{corollary}[theorem]{Corollary}
\newtheorem{definition}{Definition}[section]
\newcommand\numberthis{\addtocounter{equation}{1}\tag{\theequation}}
\newcommand\chris[1]{\textcolor{blue}{\textbf{#1}}}
\newcommand{\CZ}{\mathrm{CZ}}
\newcommand{\SWAP}{\mathrm{SWAP}}
\newcommand{\FSWAP}{\mathrm{FSWAP}}
\DeclareMathOperator{\tr}{\mathbf{Tr}}
\setlist[itemize]{leftmargin=3.5mm,itemsep=0.0mm}
\begin{document}

\title{Operator space fragmentation in perturbed Floquet-Clifford circuits}

\author{Marcell D. Kovács}
\affiliation{Department of Physics and Astronomy, University College London, United Kingdom}
\author{Christopher J. Turner}
\affiliation{Department of Physics and Astronomy, University College London, United Kingdom}
\author{Lluís Masanes}
\affiliation{Department of Computer Science, University College London, United Kingdom}
\affiliation{London Centre for Nanotechnology, University College London, United Kingdom}
\author{Arijeet Pal}
\affiliation{Department of Physics and Astronomy, University College London, United Kingdom}
\affiliation{London Centre for Nanotechnology, University College London, United Kingdom}

\maketitle

\begin{abstract}
Floquet quantum circuits are able to realise a wide range of non-equilibrium quantum states, exhibiting quantum chaos, topological order and localisation. The circuit based perspective has led to new regimes of many-body quantum dynamics with potential applications to quantum technologies.  In this work, we investigate the stability of operator localisation and the emergence of chaos in random Floquet-Clifford circuits subjected to unitary perturbations which drive them away from the Clifford limit. We construct a nearest-neighbour Clifford circuit with a brickwork pattern and study the effect of including disordered non-Clifford gates. The perturbations are uniformly sampled from single-qubit unitaries with probability $p$ on each qubit. We show that the interacting model exhibits strong localisation of operators for $0 \leq p < 1$ that is characterised by the fragmentation of operator space into disjoint sectors due to the appearance of \textit{wall} configurations. Such walls give rise to emergent local integrals of motion for the circuit that we construct exactly. We analytically establish the stability of localisation against generic perturbations and calculate the average length of operator spreading tunable by $p$. Although our circuit is not separable across any bi-partition, we further show that  the operator localisation leads to an entanglement bottleneck, where initially unentangled states remain weakly entangled across typical fragment boundaries. Finally, we study the spectral form factor (SFF) to characterise the chaotic properties of the operator fragments and spectral fluctuations as a probe of non-ergodicity. In the $p=1$ model, the emergence of a fragmentation time scale is found before random matrix theory sets in after which the SFF can be approximated by that of the circular unitary ensemble. Our work provides an explicit description of quantum phases in operator dynamics and circuit ergodicity which can be realised on current NISQ devices.
\end{abstract}



\singlespacing

\section{Introduction}

\begin{figure}[tp]
    \centering
    \includegraphics[width=0.48\textwidth]{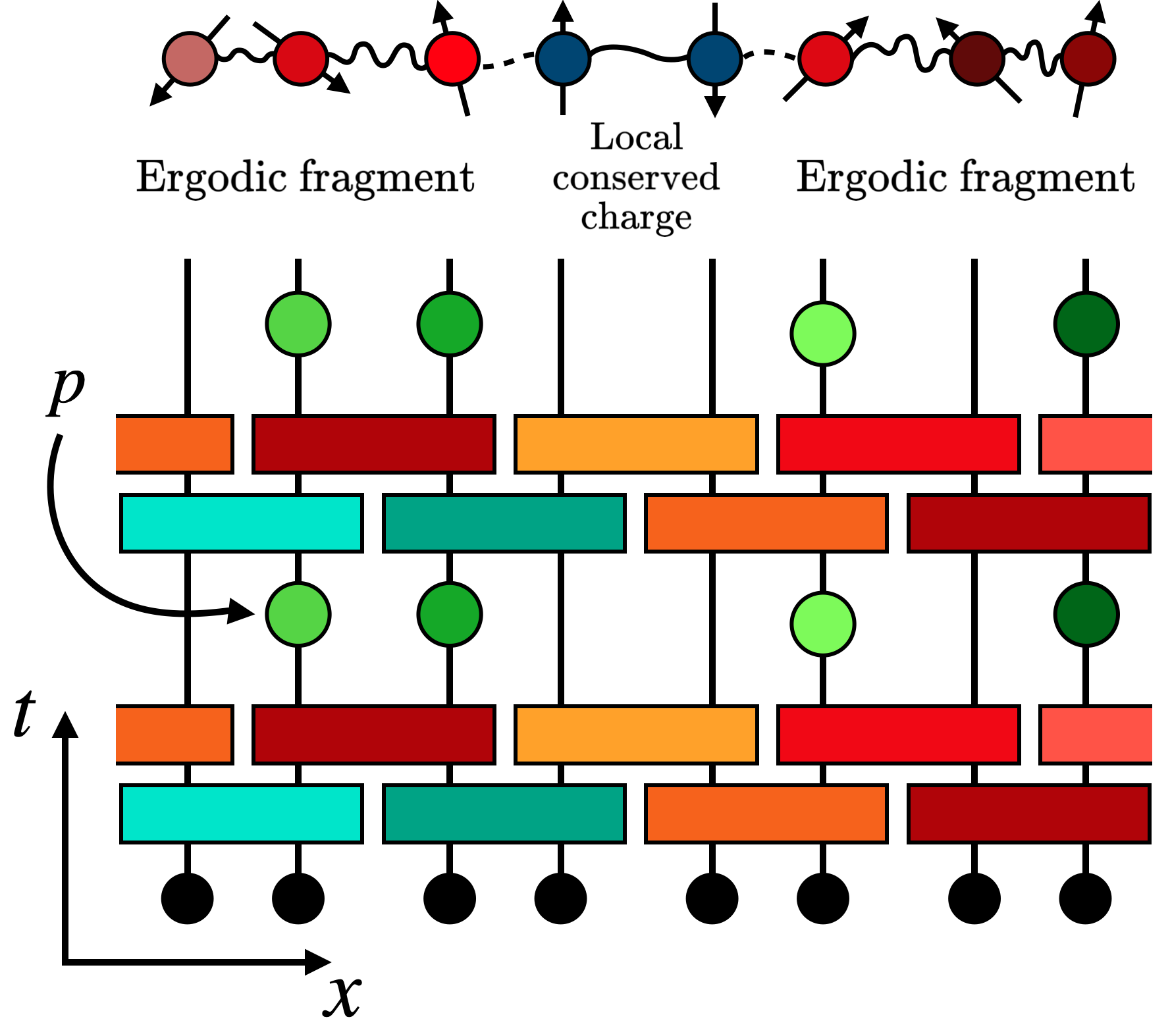}
    \caption{\small A segment of the brickwork Floquet circuit considered in this work. We define a unitary evolution operator according to Equation (\ref{eq:floquet_unitary}) on a one-dimensional qubit chain. The gates consist of randomly sampled entangling Clifford gates (rectangles) and random $\mathrm{SU}(2)$ rotation gates (coloured circles) applied stochastically with probability $p$. Same colours represent the same gate.}
    \label{fig:floquet_circuits}
\end{figure}

Probing the dynamics of quantum many-body systems is a challenging problem. Many-body systems far away from equilibrium can scramble information and generate complex patterns of entanglement which are hard to simulate using classical algorithms. Emergence of ergodicity and chaos provides a physical manifestation of the complexity of the quantum states~\cite{roberts2017chaos, Eisert2021CircuitComplexity, WenWei2022StateDesign, haferkamp2022complexity}. A circuit based simulator comprising of quantum gates provides a controlled physical platform where these complex states can be realized and tested against perturbations in the form of errors~\cite{Fisher2023, Hahn2024, Chao-Ming2020, Nahum2017, Nahum2018, Li2018, Lunt2021}. Understanding the robustness of complex entanglement can provide paths to quantum cryptography~\cite{Horodecki2009_RMP, Jennewein2000_PRL, yin2020qcrypt} and error correction~\cite{Bennett_QEC1996_PRA, Terhal_RMP2015}. The dynamics of quantum information can also shed light on the physical processes which lead to thermalization and provide new perspectives on the breaking of ergodicity~\cite{kaufman2016thermal, nandkishore2015MBL, Abanin2019_RMP}. 

An illuminating direction has been the study of toy models for ergodicity, for example, dual unitary circuits with quantum gates encoding a space-time symmetry, which serve as solvable, minimal, time-periodic models of quantum chaotic evolution~\cite{Bertini2019, Bertini2021, Claeys2020}. On the other hand, random brick-work circuits consisting of two-qubit gates are rigorously established to converge to a many-qubit Haar-ensemble, a provably ergodic case for unitary dynamics~\cite{Harrow2009, Brandao2016, Gross2007, Haferkamp2022, Huangjun2017, Chan2018MBQC}. They also provide simplified descriptions of operator growth in terms of classical stochastic processes~\cite{Nahum2018, CurtvK2018}. In contrast, non-ergodic quantum behaviour known as many-body localisation (MBL) occurring in many-body Hamiltonians with quenched disorder~\cite{Gornyi2005, Basko2006, Pal2010, Luitz2015} have been argued to exist in time-periodic (Floquet) circuits~\cite{Chandran2015, Christoph2018, Chan2021SpectralLyapunov, Farshi2022_1D, Farshi2022_2D}. They are characterised by slow growth of entanglement and provide a new regime for operating quantum circuits with potential use as a quantum memory. Therefore, quantum circuits provide a fertile playground for investigating the phenomena of many-body localisation and ergodicity in a new regime which is tractable.

Many-body localisation is characterised by the emergence of conservation laws, given by quasi-local operators, known as $l$-bits, that are not scrambled under time evolution while the growth of entanglement of pure quantum states is logarithmic in time~\cite{Serbyn2013, Huse2013b}. The existence of $l$-bits at asymptotically large disorder can be proven for a one-dimensional spin-$\frac{1}{2}$ chain~\cite{Imbrie2016}, while their susceptibility to resonances leading to an avalanche at weaker disorder is a topic of active investigation~\cite{DeRoeck2017, Luitz2017, Sels2022, Crowley2022b, Morningstar2022, Ha2023, Jeyaretnam2023_MBLresonance}. Understanding the stability of the localised subspaces incurs exponential computational costs~\cite{Wahl2017MBL-TN,  Doggen2018MBL, wahl20192DMBL,  doggen2021MBL-MPS} and requires analytic approximations which are hard to verify in the absence of an alternate proof. From this perspective, certain classes of quantum circuits can provide a simplified analytic structure for describing operator growth and where the approximations are concrete and clearly testable.



In this context, Clifford circuits have been a convenient tool for solvable models of non-trivial dynamical phases due to their analytical and simulable structure~\cite{haah2022topological, lavasani2021topological, Lunt2021, Sommers2023CQC, angelidi2023LRE, Richter2023_LRClifford}. For these circuits, both state and operator evolution can be simulated efficiently classically under the Gottesmann-Knill theorem using a binary phase space mapping technique \cite{Tolar2018, Aaronson2004}. Circumventing the entanglement barrier for a large class of quantum many-body systems allows the simulation of entangled phases of matter and their transitions \cite{Lunt2021, Li2024CliffordMonitor, Makki2024}. Although the classically simulable models provide a useful starting point to describe the phenomena, a generalised description requires understanding the role of non-Clifford perturbations~\cite{Bravyi2016nonClifford}. Recently,  the relevance of \enquote*{magic} as a  measure of non-Clifford resources has been highlighted for the study of  transitions in classes of chaotic systems and universal quantum computation~\cite{niroula2023MagicPT, bejan2023Magictrans, TarabungaPRX2023_Magic}. Since the Clifford group with the inclusion of even a single $T$-gate forms an (approximately) universal gateset, one expects that a disordered Clifford ensemble with random unitary perturbations can approach the properties of a circuit with Haar-distributed unitaries. Several results in the literature for models without spatial structure indicate so: a single $T$ gate is able to drive a spectral transition for Clifford circuits towards a random matrix ensemble and a system-size independent number of non-Clifford gates are sufficient to approximate the Haar-ensemble by asymptotically forming an approximate quantum $k$-design \cite{Gross2007, Zhou2020, Haferkamp2022, Hunter-Jones2019}. Originally developed for quantifying pseudo-randomness, $k$-designs for both random states and unitary ensembles have been investigated as model for the thermalisation dynamics of Floquet and quasi-periodically driven systems \cite{Pilatowsky-Cameo2023_CHSE_Fibonacci, Pilatowsky-Cameo2024_CHSE_CUE}. These approaches, however, do not consider the possibility of spatial localisation and its effect on ergodicity.

 Our work involve the study of localisation in Clifford circuits as the emergence of local conserved quantities (similar to MBL) and their instability to perturbations which drive the system towards ergodicity. Constructing local integrals of motion in disordered many-body Hamiltonians, while pivotal in understanding the phenomenology of MBL, is a difficult task both analytically and numerically \cite{Serbyn2013, Imbrie2016, Rademaker2016, Kulshreshtha2018, Thomson2018, Goihl2018}. Previous work has considered a semi-classical version of MBL in the circuit setting using a subset of Clifford gates generating local conservation laws~\cite{Chandran2015}, however, a more rigorous and complete analysis of the invariant subspaces occurring in random Clifford circuits and their relation to localisation phenomenon is so far lacking. Furthermore, open questions remain on the stability of localisation away from the Clifford limit and the dynamics of entanglement in this new regime. 

These considerations motivated us to study the interplay of Clifford dynamics in brickwork circuits with random non-Clifford perturbations as a toy model to understand the robustness of Floquet localisation and study the emergence of ergodicity in our model. This paper focuses on operator spreading in random circuits in one-dimension for which the coexistence of localisation and Pauli mixing was rigorously established for periodic unitaries in the Clifford group and soon after ruled our for two-dimensions \cite{Farshi2022_1D, Farshi2022_2D}. Our work not only extends these previous results but sheds light on the random Floquet model in the fully interacting regime. Clifford dynamics is associated with quasi-free particles rather than a strongly correlated system due to the classically simulable phase-space dynamics \cite{Farshi2022_1D}.

On the analytical side, we formalise the notion of operator localisation in Clifford circuits and construct irreducible gate configurations which generate arrested operator spreading (so-called $k$-walls). This generalises earlier approaches \cite{Chandran2015, Farshi2022_1D, Farshi2022_2D} by elucidating the necessary and sufficient structure of these gates and their induced dynamics as well as analytically calculating the probability of their occurrence in a brickwork circuit ensemble. From this, invariant subspaces (spatially localised fragments) of the circuit unitary can be identified which naturally leads up to the stability against non-Clifford perturbations and the estimation of localisation length in our model without requiring numerical simulations. 

An advantage of using Clifford circuits as a starting point is that conservation laws can be constructed efficiently and their efficient simulability allows their detection with polynomial computational cost. Although approximate simulation techniques exist for including sparse $T$-gate perturbations in Clifford circuits, these still suffer from an exponential computational cost with increasing circuit depth \cite{Bravy2019}. As a result, we complement our analytical results with small-scale exact numerical calculations of entanglement and chaos measures such as the spectral form factor -- a well-established probe in random matrix theory~\cite{Brezin1997, Haake1999, Chan2018}. 

We refine the previous understanding of Clifford localisation as the fragmentation of operator space into invariant subspaces corresponding to spatially localised subsystems. Fragmentation (in both state and operator space) has previously been studied in the context of open quantum systems~\cite{Essler2020OSF, YahuiLi2023_OpenHSF, Paszko2024_OpenSPT} while our work highlights the role of the phenomena in unitary circuit dynamics. For a comprehensive review on Hilbert space fragmentation, the Reader is referred to~\cite{moudgalya2022Scars_HSF}. We also show that wall configurations generating fragmented subspaces are typically associated with local conservation laws. Our main novel contribution is to show the stability of localisation against generic, disordered, unitary perturbations. We also consider uniform perturbations leading to operator percolation and chaotic evolution with connections to one-dimensional percolation and reproducing some aspects of Anderson localisation. In our model, the qubit chain breaks up into spatially localised fragments which, within themselves, evolve under an approximate Haar ensemble while the fragments share limited entanglement across their boundaries. This serves as a toy (yet not semi-classical) model of many-body localisation where local conservation laws provably occur without fine-tuning of the circuit elements. We note, however, that our model does not exhibit the $l$-bits conventionally associated with the MBL phase, even in the localised regime. This is due to the ergodic circuit regions which scramble operators as soon as they escape the unperturbed $k$-walls hosting local charges. The conserved quantities are strictly local and separate large ergodic regions. We also rule out the stability of localisation in the uniformly perturbed regime where each qubit has generic non-Clifford evolution due to the destabilisation of walls. This contrasts earlier results presented in Ref. \cite{Chandran2015} which we attribute to their finely-tuned circuit construction. Although our results focus on one-dimensional qubit circuits, we suggest geometries where localisation could be engineered for higher dimensional lattices, and also generalisations to higher local Hilbert space dimensions.

The paper is structured as follows. In the next section, we review the group theoretic formalism of Clifford and Pauli groups as the foundation of our analytical investigations. Then, we define our Floquet model and introduce the spreading of operators in Section \ref{sec:floquet_model}. Section \ref{sec:wall_configs} is dedicated to localisation in Floquet Clifford dynamics formulated in terms of the constraints on circuit elements which generate fragmented evolution. \Cref{lemma:conserved_pauli} establishes the existence of closed subspaces of the evolution which we utilise to understand the stability of localisation. We study the dynamics of entanglement in pure state evolution in Section \ref{sec:entanglement} to show the bounded entropy shared by localised fragments in the circuit using the stabiliser formalism. We study the spectral signatures of chaotic dynamics within fragments and the effect of walls on the spectral form factor in Section \ref{sec:spectral_chaos}. Lastly, we outline future directions involving generalisations to  models higher local Hilbert space dimension and higher dimensional lattices.

\section{Mathematical preliminaries \label{sec:math_prelims}}

In this section, we review the mathematical formalism of Pauli and Clifford groups with an emphasis on connections to symplectic vector spaces and the two-qubit Clifford equivalence classes \cite{Tolar2018, Grier2022}.
We also introduce the notation used in the rest of the paper for specifying gates.

Consider the $1$-qubit Pauli group $\mathcal{P}_1 = \langle I,X,Y,Z \rangle$ where $I,X,Y,Z$ are the standard Pauli matrices.
The centre of this group $K(\mathcal{P}_1) = \{1, -1, i, -i\}$ is also its commutator subgroup and is also isomorphic to the finite field $\mathrm{GF}(4)$.
The $n$-qubit Pauli group, denoted by $\mathcal{P}_n$ is the set of operators that can be written as $\lambda \sigma_1 \otimes ... \otimes \sigma_n$, where $\sigma_i \in \{I, X, Y, Z\}$ and $\lambda \in K(\mathcal{P}_1)$.
In what follows, we shall treat elements of the Pauli group equivalent if they differ only by a unit magnitude phase factor which is tantamount to forming the quotient group $\mathcal{\Bar{P}}_n = \mathcal{P}_n / K(\mathcal{P}_n)$. The elements of $\mathcal{\Bar{P}}_n$ form a complete orthogonal basis for all $n$-qubit operators under the Hilbert-Schmidt norm. Therefore, one can write any operator $O$ supported on $n$-qubits as:
\begin{equation}
    O = \sum_{P \in \mathcal{\Bar{P}}_n} c_{P} \space P,
    \label{eq:Pauli_expansion}
\end{equation}
such that the coefficients satisfy $c_P = 2^{-n}\mathbf{Tr}[O^{\dagger} P]$ \cite{Siewert2022}. Throughout this work, we refer to the \textit{support} of an operator as the set of qubits where it acts non-trivially.

The $n$-qubit Clifford group, $\mathcal{C}_n$, is the subset of unitary operators that map Pauli group elements among one another under adjoint action. More formally, $C \in \mathcal{C}_n$ if $C P C^{\dagger} \in \mathcal P_n$ for all $P \in \mathcal{P}_n$. Due to the Pauli commutation relations, $n$-qubit Pauli operators are also elements of $\mathcal{C}_n$ that fulfil: $QPQ^{\dagger} = \lambda P$ for $Q, P \in \mathcal{P}_n$ and $\lambda \in K(\mathcal{P}_n)$. Most of our analytical results will be concerned with operator spreading, that is, the change of operator support under circuit dynamics. For this, it is useful to form the quotient group of Clifford with Paulis, $\mathcal{\Bar{C}}_n = \mathcal{C}_n / \mathcal{P}_n$ in order to neglect unit magnitude phases. 

The one-qubit Pauli group can be generated by just two traceless Paulis under multiplication, say, $X$ and $Z$, such that one can write $P = X^{p} Z^{q}$ where $p, q \in \{0, 1\}$. This suggests that one can represent any $P \in \mathcal{\Bar{P}}_n$ by a binary vector of size $2n$: $\mathbf{b} = (p_1, q_1, p_2, q_2, ..., p_n, q_n)$. More formally, the group isomorphism $\mathcal{\Bar{P}}_n \cong \mathds{Z}_2^{2n}$ holds. In this discrete phase-space picture, the Cliffords are represented by binary symplectic matrices of size $2n$, $\text{Sp}(2n)$, under the group isomorphism $\mathcal{C}_n / \mathcal{P}_n \cong \text{Sp}(2n)$. The constitutive relation for symplectics is the following: $S \in \text{Sp}({2n})$ if $S J S^T = J$ where $J$ is the binary symplectic form defined as $J = \bigoplus_{i=1}^n X$. This means that the order of $\text{Sp}({2n})$ is linear in $n$ for an $n$-qubit system which is the foundation of the classical simulability of Clifford circuits in polynomial time under the Gottesman-Knill theorem \cite{Aaronson2004, Tolar2018}.

In particular, we will analyse circuits built out of the two-qubit Clifford group $\mathcal{C}_2$. From the point of view of operator spreading, we will distinguish the separable subgroup $\mathcal{C}_1 \times \mathcal{C}_1$ (which preserves the support of operators) and the rest of the group. With this in mind, we form equivalence classes of gates with the elements of $\mathcal{C}_1 \times \mathcal{C}_1$. For convenience, we will take the representative elements of the classes as the identity, Controlled-$Z$ gate ($\CZ$), the $\SWAP$ gate and $\FSWAP = \CZ \circ \SWAP $. We define these gates on how they act on the generators of the two-qubit Pauli group $\mathcal{\Bar{P}}_2$ in Table \ref{tab:clifford_classes}. Note that this assignment is not unique as one can pick other members of the class as the representative element (e.g. Controlled-$X$ instead of Controlled-$Z$ for the second class). We also show them as tensor-diagrams on Figure \ref{fig:clifford_equiv_classes}., including the single-qubit Clifford degrees of freedom. This picture will be beneficial in constructing the localisation conditions in the following section.

\begin{table}[tbp]
    \centering
    \begin{tabular}{c|cccc}
      Gate $C$ & $\mathds{1}$ & $\CZ$ & $\SWAP$ & $\FSWAP$ \\
      \hline
      $C(Z {\otimes} \mathds{1})C^{\dagger}$ & $Z {\otimes} \mathds{1}$ & $Z {\otimes} \mathds{1}$ & $\mathds{1} {\otimes} Z$ & $\mathds{1} {\otimes} Z$ \\
      $C(X {\otimes} \mathds{1})C^{\dagger}$ & $X {\otimes} \mathds{1}$ & $X {\otimes} Z$ & $\mathds{1} {\otimes} X$ & $Z {\otimes} X$ \\
      $C(\mathds{1} {\otimes} Z)C^{\dagger}$ & $\mathds{1} {\otimes} Z$ & $\mathds{1} {\otimes} Z$ & $Z {\otimes} \mathds{1}$ & $Z {\otimes} \mathds{1} $ \\
      $C(\mathds{1} {\otimes} X)C^{\dagger}$ & $\mathds{1} {\otimes} X$ & $Z {\otimes} X$ & $X {\otimes} \mathds{1}$ & $X {\otimes} Z$ \\
      \hline
      $\mathds{P}(\mathrm{Class}(C))$ & 1/20 & 9/20 & 1/20 & 9/20 \\
    \end{tabular}
    \caption{\small Representation of two-qubit Clifford equivalence classes w.r.t product gates, ie. $C \leftrightarrow (a_1\otimes a_2)C(a_3 \otimes a_4)$ for $a_i \in \mathcal{C}_1$. $\mathds{P}$ denotes the probability of drawing an element of a class under uniform measure \cite{Grier2022}. }
    \label{tab:clifford_classes}
\end{table}

\begin{figure}
    \centering
    \includegraphics[width=\linewidth]{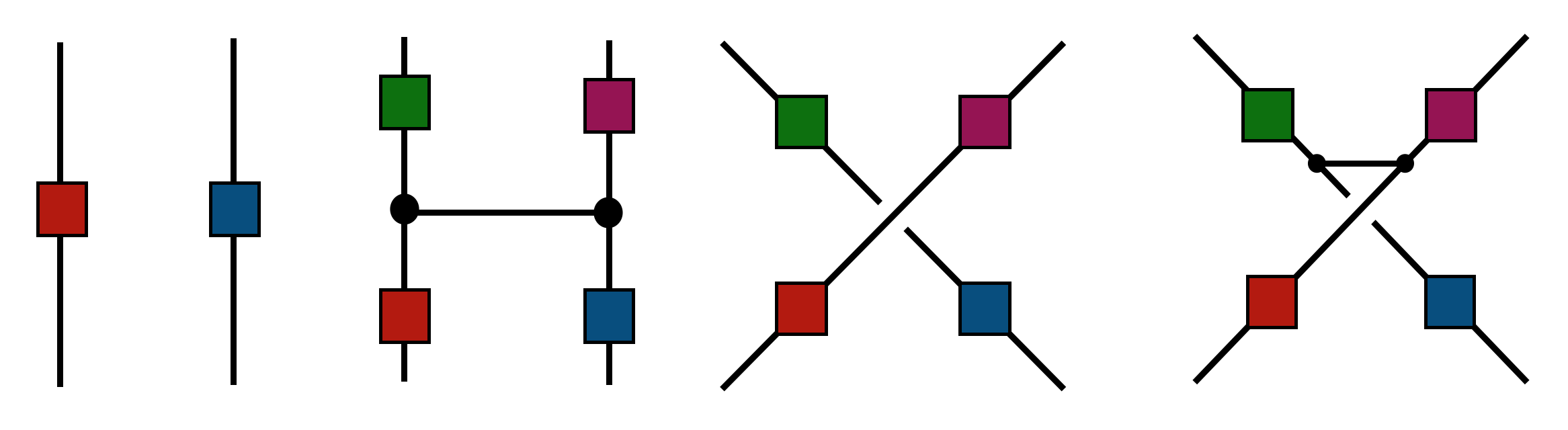}
    \caption{\small Equivalence classes of $\mathcal{C}_2$ with respect to product Cliffords as tensor diagrams. Identity-like, $\CZ$-like, $\SWAP$-like and $\FSWAP$-like gates shown respectively. Each leg represents a single qubit degree of freedom with multiple lines showing tensor product spaces. Boxes show elements of $\mathcal{C}_1$. Sampling the rectangular boxes under uniform measure on $\mathcal{C}_1$ respects the multiplicity of classes in Haar-sampling of ${\mathcal{C}}_2$ \cite{Grier2022, Mele2023}. }
    \label{fig:clifford_equiv_classes}
\end{figure}

\section{Model}
\label{sec:floquet_model}

We generalise the Floquet circuit model of \textit{Farshi et al.}~\cite{Farshi2022_1D, Farshi2022_2D} to include non-Clifford perturbations applied randomly on a one-dimensional qubit chain, as in Figure~\ref{fig:floquet_circuits}. Consider the evolution operator of a single Floquet period,
\begin{equation}
    U =
    \prod_{i=-\infty}^{+\infty} R_i
    \prod_{i\text{ odd}} C_{i, i+1}
    \prod_{i\text{ even}} C_{i, i+1}
    \text{,}
    \numberthis \label{eq:floquet_unitary}
\end{equation}
where we have used the following notation: $C_{i, i+1}$ denotes a Clifford gate which only acts on qubits $(i, i+1)$ non-trivially. Each of these is independently and identically distributed under a uniform measure on the set of \emph{non-product} Clifford gates.
We exclude the product Clifford gates as they cause the circuit to decouple into a spatial product thereby preventing operator spreading in a trivial manner.
The single-qubit unitary $R_i$ acting on the $i$-th qubit is sampled from the following distribution,
\begin{equation}
  R_i = \begin{cases}
      \mathds{1} & \text{ with probability } 1-p \\
      H_i & \text{ with probability } p,
  \end{cases}
  \label{eq:single_qubit_haar}
\end{equation}
where $H_i$ are i.i.d. from the Haar distribution uniform over $\mathrm{U}(2)$. The evolution operator up to time $t$ is given by $U^t$ by supposing Floquet dynamics.

The perturbation gates $R_i$ are potentially non-Clifford unitaries which will be used to break up the particular structure of Clifford circuits to test the robustness of the phenomena we observe. $p$ will serve as the control parameter in studying the localised/delocalised phase of the model. We choose the Haar measure for perturbations to understand the phases in the model for generic perturbations although we will remark later that other, biased, measures can be used to enhance the localisation in the model.

\section{Localisation as fragmentation}
\label{sec:wall_configs}

\begin{figure}
    \centering\includegraphics[width=0.75\linewidth]{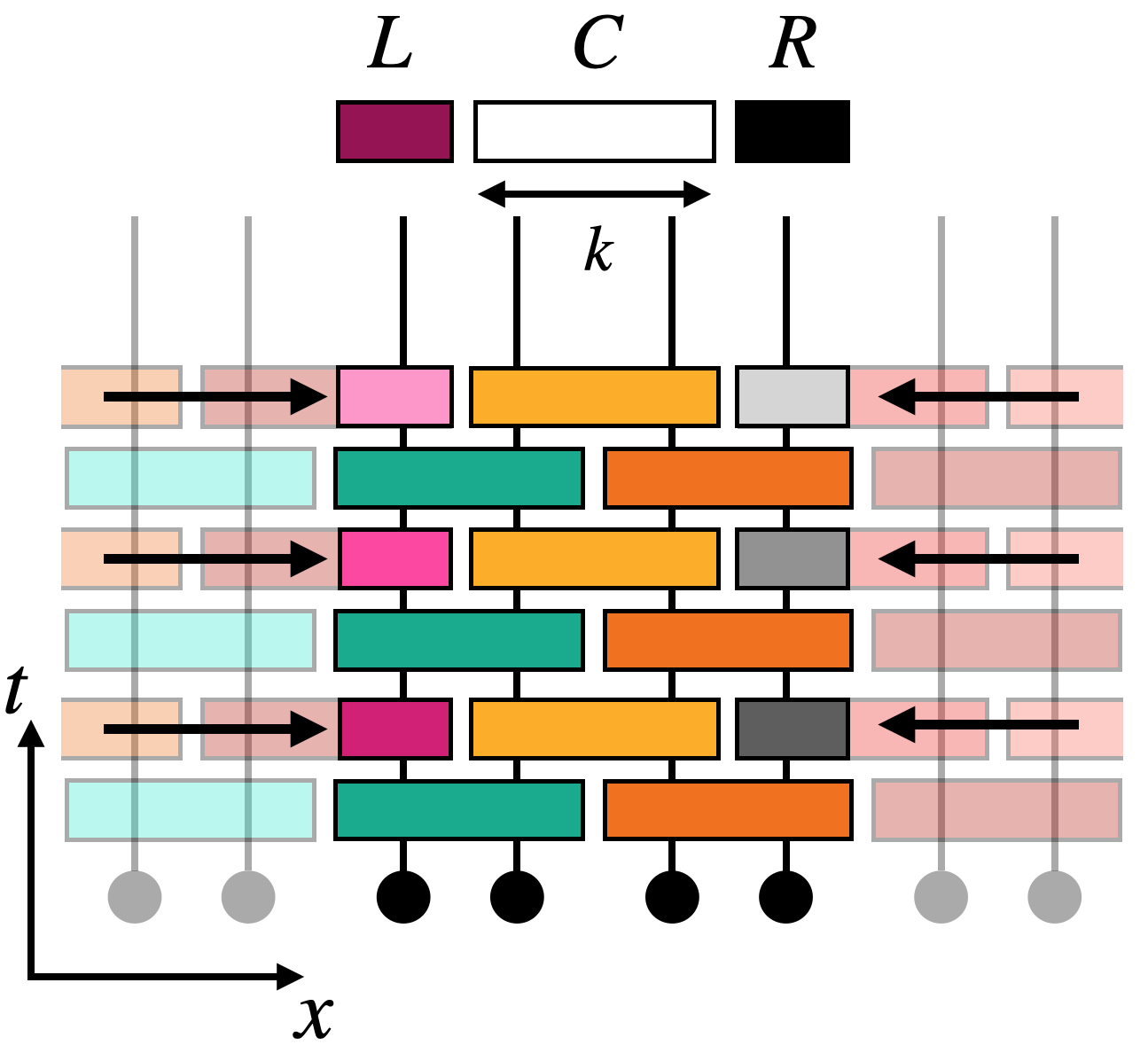}
    \caption{\small Representation of $k$-walls. We model the circuit environment as the injection of operators into the left ($L$) and right ($R$) qubit subspaces of the wall. Localisation in the circuit can be understood by constraining the inner gates to stop the spreading of arbitrary injected operators at the arrow locations. This ensures that localising gates can be stably embedded into larger circuits. }
    \label{fig:circuit_signals}
\end{figure}

Our paper concerns a particular form of dynamics in periodic systems where the operator spreading becomes exactly arrested, failing to percolate across the system. In this section, we formalise this idea with the concept of a $k$-wall as a unitary which stops the spreading of an arbitrary operators. We study the general properties of these walls in the context of Clifford circuits with special focus on the invariant subspaces of wall unitaries. This will serve two purposes. First, we show that the majority of wall gates harbour local conserved charges in the brickwork circuit model. Second, that in a random circuit ensemble, the space of Pauli operators is fragmented in the unitary dynamics where the invariant subspaces are enclosed by wall configurations. We study the structure and  sampling probability of wall configurations for two-layer Floquet circuits in \Cref{ssec:shallow_one_walls}--\ref{ssec:shallow_k_walls}. Due to the one-dimensional geometry such fragments are spatially localised qubit subspaces which are stable against random perturbations as shown in \Cref{ssec:stability}. In \Cref{ssec:localisation_length}, we calculate the typical size of fragments which is equivalent to localisation length of randomly placed local operators in the circuit.

\begin{definition}[$k$-walls]
  \label{def:walls}
  Let $C$ be a subsystem consisting of $k$ sites which splits the system into three pieces: $L$ for the left environment, $C$ for the wall subsystem and $R$ for the right environment which are all nontrivial, proper and connected subsystems.
  A circuit $U$ has a left wall of width $k$ (or $k$-wall) around $C$ if it arrests the spreading of operators in the following sense:
  for all operators $A$ on $L$ and times $t>0$ there exist operators $A'(t)$ on $LC$ such that,
  \begin{equation}
    U^t \left( A_L \otimes \mathds{1}_{C} \otimes \mathds{1}_{R} \right) U^{-t} = A'(t)_{LC} \otimes \mathds{1}_{R}
    \text{.}
  \end{equation}
  Additionally, we say that a wall is irreducible if it doesn't contain any subinterval which is itself a wall.
  Analogously we also define right $k$-walls by taking initial operators in $R$ and demanding they never spread to $L$.
\end{definition}

Let us start by discussing the trivial case of $0$-walls which decouple the qubit chain to a tensor product of unitaries.

\begin{lemma}[$0$-walls]
  \label{lemma:product_walls}
  The $0$-walls are exactly the product unitaries.
\end{lemma}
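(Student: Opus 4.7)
The plan is to prove the two implications separately. The easy direction is that if $U = U_L \otimes U_R$ is a product across the $L|R$ cut, then
\begin{equation}
U^t (A \otimes \mathds{1}_R) U^{-t} = (U_L^t A U_L^{-t}) \otimes \mathds{1}_R,
\end{equation}
so $A'(t) = U_L^t A U_L^{-t}$ is supported on $L = LC$ (the wall subsystem $C$ being empty for $k=0$), and $U$ is trivially a $0$-wall.

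For the converse, I assume $U$ is a $0$-wall and specialise the defining condition to $t=1$, letting $A$ range over all operators on $L$. This yields $U \mathcal{A}_L U^{-1} \subseteq \mathcal{A}_L$, where $\mathcal{A}_L := B(\mathcal{H}_L) \otimes \mathds{1}_R$ denotes the algebra of operators acting trivially on $R$. Since conjugation is dimension-preserving and $\mathcal{A}_L$ is finite-dimensional, the inclusion is actually an equality, and conjugation by $U$ restricts to a $*$-automorphism of $\mathcal{A}_L \cong B(\mathcal{H}_L)$. By the Skolem--Noether theorem (equivalently, the fact that every $*$-automorphism of a full matrix algebra is inner) there exists a unitary $V_L$ acting on $\mathcal{H}_L$ such that
\begin{equation}
U (A \otimes \mathds{1}_R) U^{-1} = (V_L A V_L^\dagger) \otimes \mathds{1}_R
\end{equation}
for every operator $A$ on $L$.

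The last step is to read this identity as saying that $(V_L^\dagger \otimes \mathds{1}_R) \, U$ commutes with every element of $\mathcal{A}_L$, and hence lies in its commutant $\mathds{1}_L \otimes B(\mathcal{H}_R)$. Writing $(V_L^\dagger \otimes \mathds{1}_R) \, U = \mathds{1}_L \otimes V_R$ for some unitary $V_R$ on $\mathcal{H}_R$ gives $U = V_L \otimes V_R$, as required. The only potentially non-elementary ingredient is the automorphism-is-inner step, which I regard as the main (mild) obstacle; a more hands-on alternative, better aligned with the paper's formalism, uses the Pauli expansion of Equation \eqref{eq:Pauli_expansion} applied to every generator $P \otimes \mathds{1}_R$ with $P \in \bar{\mathcal{P}}_{|L|}$. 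The $0$-wall condition at $t=1$ forces all nonzero coefficients of $U(P \otimes \mathds{1}_R)U^{-1}$ to lie on Paulis supported in $L$; the resulting map $P \mapsto U(P \otimes \mathds{1}_R)U^{-1}$ is a group homomorphism of $\bar{\mathcal{P}}_{|L|}$ into itself preserving commutation relations, so by the standard representation of Clifford automorphisms one can assemble its action into the factor $V_L$, and the commutant argument above then recovers $V_R$.
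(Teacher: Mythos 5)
Your main argument is correct and genuinely different from the paper's. The paper (Appendix B) proves the nontrivial direction by summing over orthonormal Hermitian bases of $L$ and $R$, rewriting the vanishing of the commutator norm as an out-of-time-order correlator minus a normalisation, and then using a singular value decomposition of $U$ across the $L|R$ cut (the operator Schmidt decomposition): the commutator condition forces $\left(\sum_i \sigma_i^2\right)^2 - \sum_i \sigma_i^4 = 0$, which is a vanishing operator R\'enyi $2$-entropy and implies a single nonzero singular value, i.e.\ a product unitary. Your route instead observes that the $t=1$ condition already makes $\mathrm{Ad}_U$ an automorphism of the subalgebra $B(\mathcal{H}_L)\otimes\mathds{1}_R$ (equality of dimensions forcing surjectivity), invokes Skolem--Noether / the inner-automorphism theorem to peel off a local unitary $V_L$, and then uses the commutant $(B(\mathcal{H}_L)\otimes\mathds{1}_R)' = \mathds{1}_L\otimes B(\mathcal{H}_R)$ to extract $V_R$. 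Both proofs only need the $t=1$ instance of the wall condition; yours is more structural and invokes a classical algebra theorem, while the paper's is more computational and has the virtue of being quantitative (it effectively measures deviation from product structure by an operator entanglement entropy).

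One caution about the alternative ``more hands-on'' sketch at the end of your proposal: the map $P \mapsto U(P\otimes\mathds{1}_R)U^{-1}$ is \emph{not} a map of $\bar{\mathcal{P}}_{|L|}$ into itself unless $U$ happens to be Clifford --- the image is merely some Hermitian operator supported on $L$, generally a superposition of many Paulis. So the ``group homomorphism of $\bar{\mathcal{P}}_{|L|}$'' step does not go through for generic unitary $U$ and cannot be repaired by appealing to Clifford automorphisms. Since the Skolem--Noether argument is your actual proof and is sound, this does not affect the correctness of the proposal, but the parenthetical alternative should be dropped or restricted to the Clifford case.
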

\begin{proof}
  Despite seeming physically obvious the proof is both quite involved and unrelated to what follows so we elucidate these instances in \Cref{sec:0-walls}.
\end{proof}

We remark that it is reasonable to suppose that this form of strict localisation in operator spreading is unusual, with the typical behaviour being ballistic. This is due to the fact that the majority of non-product Clifford gates (the $\mathrm{SWAP}$, $\mathrm{FSWAP}$ classes) are dual-unitary which have been discussed extensively in the literature in the context of maximally chaotic models \cite{Prozen2020, Bertini2018, Bertini2019, Bertini2021}.
In what follows, we will focus on the emergence of non-trivial walls in Clifford circuits and their destabilisation with generic single-qubit perturbations. 
These are not the only cases in which walls can appear, and these perturbations can clearly create walls albeit with probability zero.
However we will discount the possibility of walls occurring from some other mechanism under the assumption these are negligible in our particular circuit distribution.


The rationale behind studying walls is the idea that understanding configurations of a few gates that stop an arbitrary operator from spreading will enable us to embed these into an arbitrary size circuit segment which still generate localisation. This will lead to a non-trivial decoupling of a qubit chain with an associated signature in entanglement spreading and spectral statistics.

Consider cutting out a central circuit segment and an adjacent buffer region around it, as on Figure \ref{fig:circuit_signals}. To faithfully represent the dynamics within this restricted region, one would need to model the operator dynamics at the wall edges as a stochastic process which generates a time-dependent input operators from the environment within the central subspace in the spirit of the Mori-Zwanzig formalism \cite{Zwanzig2001}. We erase knowledge of the environment circuits, modelling operator flow and back-flow as an arbitrary signal injecting operators into one of the buffer regions and considering the time evolution of Pauli subgroups at the injection points. A wall that is stable against all signals will be stable in any embedding. 

In the following series of lemmas, we rigorously develop this idea by considering the left and right subspaces corresponding to the evolution of initially left/right operators. This will allow us to establish that Clifford walls stop operator spreading from either side (\Cref{lemma:twosided}). This will be instrumental in showing the existence of local conserved charges in the circuit hosted in the central subspace (\Cref{lemma:intersection}). These are single-qubit Pauli matrices for the case of $1$-walls (\Cref{lemma:conserved_pauli}), which are most likely to appear in disorder realisations in uniformly sampling the non-product Clifford operators. We will also construct local conservation laws for higher-order walls.

We will provide two definitions of walls for Clifford circuits -- one is the \emph{initial-condition} characterisation analogous to the general one given in \Cref{def:walls} and the other is the \emph{signal} characterisation.
Let $W$ denote the wall subcircuit's Clifford operator in the associated symplectic vector space. In the \emph{initial condition characterisation},
there is a subspace $G_\text{left}$ of the symplectic vector space associated to $C$ such that, for each pair $t$ and $u$, there are $v$ and $g\in G_\text{left}$ such that
\begin{equation}
    W^t \left(u \oplus 0 \oplus 0\right) = v \oplus g \oplus 0
    \text{.}
\end{equation}
We refer to $G_\text{left}$ as the (left) internal subspace and for the right wall condition we denote the corresponding internal subspace as $G_\text{right}$.

  In the \emph{signal characterisation} we allow for arbitrary operators to be inserted throughout time on either the $L$ or $R$ subsystems for left and right walls respectively. This means that all signals or functions $u(t)$, there exists $v$ and $g$ in some $G_\text{left}$, which is a subspace of the symplectic vector space associated to $C$, such that
  \begin{equation}
    \sum_t W^t (u(t) \oplus 0_C \oplus 0_R) = v \oplus g \oplus 0_R
    \text{.}
  \end{equation}

\noindent We now show that these two approaches are equivalent.
\begin{lemma}
  A wall subcircuit is a wall to input signals if and only if it is a wall to initial conditions.
  \label{lemma:signals}
\end{lemma}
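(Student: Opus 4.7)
The plan is to exploit the linearity of Clifford action on the associated binary symplectic vector space over $\mathbb{Z}_2$: under the isomorphism $\mathcal{C}_n / \mathcal{P}_n \cong \mathrm{Sp}(2n)$, the map $W$ is a linear operator and $W^t$ respects any direct-sum decomposition componentwise. Both the initial-condition and signal characterisations are then purely linear-algebraic statements about how certain subspaces behave under iteration of $W$, so the equivalence will reduce to a superposition argument on top of this linearity. I will handle the two implications separately.

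For the direction signal $\Rightarrow$ initial condition, I would specialise the signal to a Kronecker spike $u(t) = u_0 \,\delta_{t,t_0}$ supported at a single time $t_0$. The sum in the signal characterisation then collapses to the single term $W^{t_0}(u_0 \oplus 0_C \oplus 0_R)$, which by hypothesis must take the form $v \oplus g \oplus 0_R$ with $g \in G_{\text{left}}$. Letting $t_0$ range over all nonnegative integers and $u_0$ over all operators on $L$ reproduces exactly the statement required by the initial-condition characterisation.

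For the converse, I would fix an arbitrary signal $u(\cdot)$ and apply the initial-condition hypothesis termwise to write
\begin{equation}
  \sum_t W^t\bigl( u(t) \oplus 0_C \oplus 0_R \bigr) = \sum_t \bigl( v_t \oplus g_t \oplus 0_R \bigr),
\end{equation}
which by linearity equals $\bigl(\textstyle\sum_t v_t\bigr) \oplus \bigl(\textstyle\sum_t g_t\bigr) \oplus 0_R$ and thus already has trivial $R$ component. Because $G_{\text{left}}$ is a subspace of $\mathbb{Z}_2^{2k}$ it is closed under finite sums, so $\sum_t g_t \in G_{\text{left}}$, delivering the signal characterisation. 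The main subtlety — not a genuine obstacle but a bookkeeping point worth flagging — is to reconcile the internal subspace $G_{\text{left}}$ across the two definitions; I would take $G_{\text{left}}$ in either case to be the $\mathbb{Z}_2$-span of all $g$ components that arise as $u$ and $t$ vary, which is manifestly a subspace, and the argument above shows the two spans coincide, closing the equivalence.
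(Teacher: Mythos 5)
Your proof is correct and follows essentially the same route as the paper's: your Kronecker-spike specialisation is the paper's choice of monomial $p_i(x) = u_i x^t$, and your termwise application plus superposition (using closure of $G_{\text{left}}$ under $\mathbb{Z}_2$-addition) is the paper's observation that monomials span polynomials so one may take linear combinations of the initial-condition equations. The restatement in terms of delta functions rather than polynomials is a cosmetic difference, not a different argument.
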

\begin{proof}
One can formulate the \emph{initial-condition} characterisation in terms of some arbitrary polynomials $p_i$ for a basis $e_i$ of $L$,
\begin{equation}
p_i(W) (e_i \oplus 0_C \oplus 0_R) = v \oplus g \oplus 0_R
\text{.}
\end{equation}
Because we can superpose signals, this motivates $G_\text{left}$ being closed as a vector space.

From the signal characterisation we choose monomials $p_i(x) = u_i x^t$ where $u_i$ are coefficients of $u$ in the basis $\{e_i\}$ to recover the initial-condition characterisation.
In the reverse direction, we interpret each time and initial condition as a collection of monomials and together these form a basis of the polynomials so we can take linear combinations of the equation to recover the condition for arbitrary polynomials.
\end{proof}

As a consequence, for our main circuit geometry of study defined by \Cref{eq:floquet_unitary}, all $k$ walls can be understood by looking at reduced circuits of $k+2$ sites enclosing the wall with a single site for each of $L$ and $R$.
Before starting to look at the details of this geometry we have a few further general results about Clifford walls.

\begin{lemma}
  All walls in local Clifford circuits are two-sided.
  \label{lemma:twosided}
\end{lemma}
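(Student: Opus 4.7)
The plan is to work entirely in the binary symplectic vector space picture of Clifford circuits introduced in \Cref{sec:math_prelims}. Decompose the ambient symplectic space of the wall region as $L \oplus C \oplus R$ with symplectic form $J = J_L \oplus J_C \oplus J_R$, and let $W$ denote the symplectic matrix representing the wall subcircuit. By the obvious left--right symmetry (which amounts to applying the same argument to the reversed decomposition $R \oplus C \oplus L$), it suffices to show that every left wall is automatically a right wall.

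The central idea is to build the smallest $W$-invariant subspace containing $L$ and then pass to its symplectic orthogonal complement. Concretely, I would set $M := \sum_{t \geq 0} W^t L$. Because $W$ is injective on the finite-dimensional subspace $M$ into which it maps $M$, one has $WM = M$, so $M$ is $W$-invariant in both directions. The initial-condition characterisation established in \Cref{lemma:signals} then turns the left wall property into the inclusion $W^t L \subseteq L \oplus C$ for every $t > 0$, so $M \subseteq L \oplus C$.

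Next I would consider $M^{\perp}$, the symplectic orthogonal complement of $M$ under $J$. For any $v \in M^{\perp}$ and any $u \in M$, symplecticity of $W$ gives $\langle Wv, u \rangle_J = \langle v, W^{-1} u \rangle_J = 0$ since $W^{-1}u \in M$; hence $M^{\perp}$ is also $W$-invariant. A direct computation from the block structure of $J$ and the non-degeneracy of each factor gives $L^{\perp} = C \oplus R$ and $(L \oplus C)^{\perp} = R$. Combining with $L \subseteq M \subseteq L \oplus C$ and reversing inclusions under orthogonal complements yields
\begin{equation}
R \;=\; (L \oplus C)^{\perp} \;\subseteq\; M^{\perp} \;\subseteq\; L^{\perp} \;=\; C \oplus R .
\end{equation}
Iterating $W$ and using the $W$-invariance of $M^{\perp}$ then gives $W^t R \subseteq M^{\perp} \subseteq C \oplus R$ for all $t > 0$, which is exactly the right wall condition for $W$.

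The main (and essentially only) obstacle I anticipate is conceptual rather than technical: one has to notice that in the symplectic setting the right-wall invariant subspace need not be built by hand, but is simply the symplectic perpendicular of the left-wall invariant subspace. Once that move is made, the containments $R \subseteq M^{\perp} \subseteq C \oplus R$ are pure bookkeeping with orthogonal complements of the block decomposition, and $W$-invariance of $M^{\perp}$ is immediate from $W$ preserving $J$. No facts about Clifford dynamics beyond the symplectic representation of $\mathcal{C}_n/\mathcal{P}_n$ are needed.
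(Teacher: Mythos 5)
Your proposal is correct and proves the lemma, but it packages the argument in a genuinely different way from the paper. The paper's proof works pointwise: it fixes $l$, $r$ and a time $t$, computes the $J$-pairing of $r$ against $W^t l$, uses symplecticity ($W^T J W = J$) to transfer the time evolution onto $r$ with a reversed time sign, and then invokes the finiteness of orbits to convert all negative times into positive ones. Your proof instead promotes the left wall condition into a statement about the $W$-invariant subspace $M = \sum_{t \ge 0} W^t L$, which is sandwiched as $L \subseteq M \subseteq L \oplus C$, and then passes to the symplectic complement $M^\perp$, which inherits $W$-invariance from symplecticity and is sandwiched as $R \subseteq M^\perp \subseteq C \oplus R$. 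This sidesteps the explicit time-reversal and finite-orbit bookkeeping and makes two-sidedness look like a duality between $M$ and $M^\perp$ rather than a calculation along individual orbits. Both arguments ultimately rest on the same two facts — the block-diagonal form of $J$ over $L \oplus C \oplus R$, and that a symplectic $W$ carries a $W$-invariant subspace's $J$-complement to itself — but the invariant-subspace framing is arguably cleaner. One small presentational note: you invoke finite-dimensionality of $M$ to get $WM = M$ (hence $W^{-1}M = M$) before you establish $M \subseteq L \oplus C$; since that containment is what certifies $M$ is finite-dimensional independently of any assumption on the total space, it would be tidier to state $M \subseteq L \oplus C$ first (it follows immediately from the initial-condition characterisation at each $t$) and then conclude $WM = M$.
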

\begin{proof}
  Assume that for the interval $C$ the left wall condition holds in some Clifford circuit modelled by symplectic matrix $W$.
  For any $t$ and any $l$, we have $(0_L \oplus 0_C \oplus r)^T J W^t (l \oplus 0_C \oplus 0_R) = 0$ because $l$ can't spread to $R$.
  Using the symplectic property of $W$, we can transfer the time evolution from $l$ to $r$, although time is reversed, i.e. $(l \oplus 0_C \oplus 0_R)^T J W^{-t} (r \oplus 0_C \oplus 0_R) = 0$.
  The locality of the circuit means that we can always cut out a finite circuit and equivalently study that instead.
  In that finite circuit the time evolution orbit of any symplectic vector is finite so from all negative times we can make all positive times as required.
  We can take $l$ from a complete basis for the left subspace, hence the time evolution of $r$ is $J$-orthogonal to the left subspace.
  Since the left subspace is symplectic, it is disjoint from its symplectic complement and therefore the time evolution of $r$ is always orthogonal to the left subspace.
  This is equivalent to the right wall condition and the reverse implication follows by symmetry.
\end{proof}

\begin{lemma}[Local conserved charges]
  A wall in a local Clifford circuit with a non-trivial intersection $G_\mathrm{left} \cap G_\mathrm{right}$ of internal subspaces has corresponding non-trivial local conserved quantities.
  \label{lemma:intersection}
\end{lemma}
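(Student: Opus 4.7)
My plan is to identify a non-trivial $W$-invariant subspace of symplectic vectors supported entirely on $C$; this subspace directly encodes local conserved quantities of the Floquet dynamics.

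First, I promote the internal subspaces to their full $W$-orbits. Let $\mathcal{L}$ be the smallest $W$-invariant subspace containing every left-supported vector $u \oplus 0_C \oplus 0_R$, and define $\mathcal{R}$ analogously from vectors $0_L \oplus 0_C \oplus r$. By \Cref{def:walls} combined with the signal/initial-condition equivalence of \Cref{lemma:signals}, the left wall condition forces $\mathcal{L} \subseteq L \oplus C$, and using two-sidedness from \Cref{lemma:twosided} we similarly get $\mathcal{R} \subseteq C \oplus R$. Moreover, by the signal characterisation the $C$-projections of $\mathcal{L}$ and $\mathcal{R}$ are exactly $G_\text{left}$ and $G_\text{right}$.

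The key step is to observe that $\mathcal{M} := \mathcal{L} \cap \mathcal{R}$ is $W$-invariant (as an intersection of invariant subspaces) and is supported on $C$ alone, because $(L \oplus C) \cap (C \oplus R) = C$. To show $\mathcal{M}$ is non-trivial whenever $G_\text{left} \cap G_\text{right}$ is, I fix a non-zero $g \in G_\text{left} \cap G_\text{right}$ and use the definition of $G_\text{left}$ to produce some $l = l_L \oplus g \oplus 0_R \in \mathcal{L}$. Since $l_L \oplus 0_C \oplus 0_R$ lies in $L \oplus 0_C \oplus 0_R \subseteq \mathcal{L}$ by construction, subtracting gives $0_L \oplus g \oplus 0_R \in \mathcal{L}$. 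The mirror construction on the right yields $0_L \oplus g \oplus 0_R \in \mathcal{R}$, hence in $\mathcal{M}$. In fact this inclusion identifies $\mathcal{M}$ with $0_L \oplus (G_\text{left} \cap G_\text{right}) \oplus 0_R$ as subspaces.

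Finally, I extract conserved Pauli operators from $\mathcal{M}$. As a finite-dimensional $\mathrm{GF}(2)$-vector space acted on invertibly by $W|_\mathcal{M}$, the restriction has some finite multiplicative order $\tau$, so every non-zero $q \in \mathcal{M}$ satisfies $W^\tau q = q$. In the operator picture, the corresponding Pauli supported on the $k$ sites of $C$ commutes with $U^\tau$ and is therefore a strictly local integral of motion of the stroboscopic dynamics. The only delicate point I anticipate is whether these charges are invariant already under a single period rather than only $U^\tau$: this reduces to the linear-algebra question of whether $\ker(W - I)|_\mathcal{M}$ is non-trivial over $\mathrm{GF}(2)$, which may fail for a cyclic block of $W|_\mathcal{M}$ of order coprime to $2$. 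For the lemma as stated it suffices that $\mathcal{M}$ itself is a non-trivial invariant subspace of strictly local operators, which is what the construction above delivers.
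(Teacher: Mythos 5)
Your argument is correct and follows essentially the same route as the paper: identify a representative of $g \in G_\text{left} \cap G_\text{right}$ whose full $W$-orbit stays confined to $C$, then exploit finiteness of that orbit to extract a local conserved quantity. Your invariant-subspace framing of $\mathcal{L}$, $\mathcal{R}$ is a clean algebraic restatement of the paper's signal language (the smallest $W$-invariant subspace containing $L \oplus 0_C \oplus 0_R$ is precisely what arbitrary left signals can reach), and your subtraction step that removes the $L$-component of $l_L \oplus g \oplus 0_R$ plays exactly the role of the paper's ``final part of the signal chosen to make the projection onto the input subspace trivial''. The one place the paper is slightly sharper is the endgame: rather than settling for a Pauli conserved only by $U^\tau$, it forms the discrete Fourier-weighted orbit sum $Q_M = \sum_{t=0}^{\tau-1} \mathrm{e}^{-i\theta t/\tau} M(t)$, producing a genuine eigenoperator of $\mathrm{Ad}_U$ that is conserved up to a phase under a single Floquet period. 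By passing to complex linear combinations of Paulis this sidesteps the $\ker(W-I)\big|_{\mathcal{M}}$ obstruction you correctly flag over $\mathrm{GF}(2)$; you could strengthen your final paragraph by adopting that orbit-averaging step.
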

\begin{proof}
  Choose some $g \in G_\text{left} \cap G_\text{right}$.
  From membership of these subspaces, there exists some left input signal and some right input signal that independently produce $g$ in the centre at some time, together with some additional operator content either to the left or the right respectively.
  Hence, an impulse can be added to each signal at that time to remove that additional operator content and create a pair of left- and right- signals which both produce $0 \oplus g \oplus 0$ after some time.
  Subsequently, the signals are chosen to be zero, so the evolution is exactly the evolution starting from $0 \oplus g \oplus 0$.
  From the two different ways of starting this evolution we know that it must be both blocked from reaching the right and blocked from reaching the left -- so it must remain in the central subsystem $C$ at all times.
  By integrating the corresponding operator along the orbit together with some phase function we can produce an operator on which the evolution acts as a scalar.
  A projection onto this one-dimensional representation is then a conserved superoperator.

  Let us go through the construction of conserved quantity more explicitly.
  Let $M$ be an operator whose simplectic representation is $g$. This fixes $M$ up to a global phase. Also, we know that it is inside the wall, so $M\in \mathcal{P}_k$.
  Since $g$ remains within the wall for all times, so does $M(t) = U^t M U^{-t}$ where $U$ is the unitary defining the circuit.
  Due to finiteness, at some point the operator comes back to itself at some finite time $\tau > 0$ up to a phase $\mathrm{e}^{i\theta}$ for some $\theta \in \mathbb{R}$,
  \begin{equation}
    M(\tau) = \mathrm{e}^{i\theta} M(0)
    \text{.}
  \end{equation}
  As a result, one can generate a quantity $Q_M$ by integration along the orbit,
  \begin{equation}
    Q_M = \sum_{t=0}^{\tau-1} \mathrm{e}^{-i\theta t / \tau} M(t)
    \text{,}
  \end{equation}
  which satisfies $U^t Q_M U^{-t} = \mathrm{e}^{i \theta t / \tau}Q_M$ for all times $t$.
  This is a closed one-dimensional subspace of the operator algebra and the corresponding projection superoperator is a conserved charge of the operator dynamics.
  The operator $Q_M$ will either be conserved, or will undergo coherent phase oscillations.
  For our purposes, we will treat these coherently oscillating quantities as conserved charges with the view that overall phases do not matter.
\end{proof}

\noindent We remark that conserved quantities constructed in this manner have also been recently discussed in the literature for Floquet dual-unitary circuits \cite{HoldenDye2023}.

\begin{lemma}[Internal orthogonality]
  For a wall in a local Clifford circuit, the two internal subspaces are $J$-orthogonal, i.e. $G_\text{left} J G_\text{right} = 0$.
  Equivalently, considered as Pauli subgroups, the two groups pairwise commute.
  \label{lemma:internal_orthogonality}
\end{lemma}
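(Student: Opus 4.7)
The plan is to turn the symplectic-orthogonality claim into a single computation of $(W^{t_L} x_L)^T J (W^{t_R} x_R)$ evaluated two different ways for carefully chosen phase-space inputs. Given arbitrary $g_L \in G_\text{left}$ and $g_R \in G_\text{right}$, by bilinearity of the symplectic form together with closure of each internal subspace under superposition (\Cref{lemma:signals}), it is enough to treat the case where $g_L$ and $g_R$ each arise from a single initial condition. Thus I would pick $x_L = u_L \oplus 0_C \oplus 0_R$ and $x_R = 0_L \oplus 0_C \oplus u_R$ together with nonnegative times $t_L, t_R$ such that $W^{t_L} x_L = v_L \oplus g_L \oplus 0$ and $W^{t_R} x_R = 0 \oplus g_R \oplus v_R$, using that both wall conditions are available simultaneously by \Cref{lemma:twosided}.

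The first (direct) evaluation exploits the block-diagonal form $J = J_L \oplus J_C \oplus J_R$: since the first vector has no $R$-component and the second has no $L$-component, the pairing collapses to exactly $g_L^T J_C g_R$. The second (indirect) evaluation uses the symplectic property $W^T J W = J$, equivalently $(W^t)^T J = J W^{-t}$, to slide all time evolution onto a single factor. Writing $s = t_R - t_L$, the pairing equals both $x_L^T J W^s x_R$ and $(W^{-s} x_L)^T J x_R$. If $s \geq 0$, the right wall condition forces $W^s x_R$ to have zero $L$-component, so its $J$-pairing with $x_L \in L$ vanishes. If $s < 0$, the left wall condition instead forces $W^{-s} x_L$ to have zero $R$-component, so its pairing with $x_R \in R$ vanishes. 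Either way the pairing is zero, giving $g_L^T J_C g_R = 0$ and hence $G_\text{left} J_C G_\text{right} = 0$.

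Finally, I would invoke the isomorphism $\mathcal{\Bar{P}}_n \cong \mathds{Z}_2^{2n}$ from \Cref{sec:math_prelims}, under which two Pauli operators commute if and only if their representative binary vectors are symplectic-orthogonal; the $J$-orthogonality of the internal subspaces is then precisely the pairwise commutation of the corresponding Pauli subgroups. The only nontrivial point I anticipate is having to handle both signs of $s$ at once, since neither wall condition alone constrains negative-time evolution and a finite-orbit workaround of the type used in \Cref{lemma:twosided} would add clutter. The trick of pushing the time shift to whichever factor makes the relevant forward-time wall condition applicable sidesteps this completely, keeping the argument purely algebraic.
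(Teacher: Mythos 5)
Your proof is correct and takes the same basic approach as the paper: pair a left-evolved vector against a right-evolved vector under $J$, observe that block-diagonality kills the $L$ and $R$ contributions, and then show the total pairing vanishes by transferring time evolution via the symplectic property. The one place you differ is in how you dispose of the time shift. The paper simply cites the ``same techniques'' as \Cref{lemma:twosided}, which rely on the finite-orbit argument to convert negative-time evolution into positive-time evolution. Your observation that you can shed the time shift onto whichever of the two factors renders the exponent nonnegative (using \Cref{lemma:twosided} only to have both the left and right wall conditions available) bypasses that workaround entirely, giving a tighter argument that never needs to reason about orbit finiteness. This is a small but genuine simplification of the mechanism, even though the overall strategy and the block-decomposition step match the paper's proof exactly.
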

\begin{proof}
  Using the same techniques as in \Cref{lemma:twosided}, we can \emph{partially} transfer the time evolution to see that all vectors generated by left and right signals are $J$-orthogonal.
  Hence for any $l$, $r$, $g \in G_\text{left}$ and $g' \in G_\text{right}$ we have that
  $(l \oplus g \oplus 0_R) J (0_L \oplus g' \oplus r) = 0$.
  This equation breaks up into three terms for each of the symplectic subspaces of $L$, $C$ and $R$.
  The terms for $L$ and $R$ individually vanish because at least one of the vectors in the $J$-product is zero when projected into that subspace.
  Hence, the $C$ term must also individually vanish which establishes $J$-orthogonality for the internal subspaces.
\end{proof}

Finally we have some results specific to Clifford $1$-walls that are independent of the circuit geometry.

\begin{lemma}[Internal subspace for 1-walls]
  Let $W$ be an irreducible $1$-wall.
  The left and right internal subspaces of the wall are identical, have dimension 1 and therefore corresponds to a group generated by some single traceless Pauli matrix.
  \label{lemma:internal_pauli}
\end{lemma}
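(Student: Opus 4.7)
The plan is to analyse the possible $\mathbb{F}_2$-dimensions of the internal subspaces $G_\text{left}$ and $G_\text{right}$, which sit inside the symplectic vector space $V_C\cong \mathbb{F}_2^2$ of the one-qubit central region $C$, and to eliminate every case except the one asserted, using irreducibility together with \Cref{lemma:internal_orthogonality}. Since $\dim V_C = 2$, each internal subspace has dimension $0$, $1$, or $2$, so there are only a handful of cases to rule out.

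First I would dispose of the degenerate possibility $G_\text{left}=0$. If this held, every vector injected from $L$ would neither leave a trace on $C$ nor cross to $R$ (by the wall condition), so the left symplectic subspace would be $W$-invariant; by symplecticity this would force $W$ to factor as $W_L \otimes W_{CR}$, which by \Cref{lemma:product_walls} is a genuine $0$-wall sitting at the left boundary of $C$. Embedding a $0$-wall inside the $1$-wall region contradicts irreducibility in the sense of \Cref{def:walls}. The left--right symmetric argument rules out $G_\text{right}=0$.

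Next I would eliminate the maximal case. If $\dim G_\text{left}=2$, i.e.\ $G_\text{left}=V_C$, then by \Cref{lemma:internal_orthogonality} $G_\text{right}$ must lie in its $J$-orthogonal complement; but $J$ restricted to a single qubit is the nondegenerate matrix $X$, so this complement is $\{0\}$, forcing $G_\text{right}=0$ which was just excluded. The symmetric argument gives $\dim G_\text{right}<2$. Hence both internal subspaces are exactly one-dimensional.

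The final step is to show that two $J$-orthogonal $1$-dimensional subspaces of $V_C$ must coincide. For nonzero $v\in G_\text{left}$ and $v'\in G_\text{right}$, \Cref{lemma:internal_orthogonality} gives $v^T J v'=0$; combined with the fact that over $\mathbb{F}_2$ every vector satisfies $v^T J v = 0$ under the alternating form, this makes $v$ $J$-orthogonal to both $v$ and $v'$. If $v$ and $v'$ were linearly independent they would span $V_C$, so $v$ would be $J$-orthogonal to all of $V_C$, contradicting nondegeneracy. Hence $v'$ is a scalar multiple of $v$, so $G_\text{left}=G_\text{right}$, and this common one-dimensional subspace is generated by a single element of $\mathcal{\Bar{P}}_1$, i.e.\ a traceless Pauli matrix. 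The main subtlety I anticipate is the use of irreducibility in step two: one has to make precise that a tensor-product factorisation at the boundary of $C$, produced via \Cref{lemma:product_walls}, really counts as a forbidden subwall of the $1$-wall, which requires reading \Cref{def:walls} slightly more broadly than ``subinterval of $C$'' alone.
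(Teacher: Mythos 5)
Your proof is correct and follows the same overall skeleton as the paper's (case analysis on $\dim G_\text{left}\in\{0,1,2\}$, eliminating the extremes via irreducibility, then using \Cref{lemma:internal_orthogonality} to identify the two subspaces), but the middle case is handled by a genuinely different argument. For $\dim G_\text{left}=2$, the paper argues directly that one can use left signals to populate all of $V_L\oplus V_C$, hence the central site can be absorbed into $L$ to produce a $0$-wall, again contradicting irreducibility. You instead invoke \Cref{lemma:internal_orthogonality}: since $J\restriction_{V_C}$ is nondegenerate, $G_\text{left}=V_C$ forces $G_\text{right}=0$, which you have already excluded. Your route is more economical (it reuses the $\dim=0$ case and needs no new signal construction), though it leans more heavily on the orthogonality lemma, which the paper reserves for the final identification step. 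In the $\dim=0$ case you also take a small detour through \Cref{lemma:product_walls}: you deduce $W$-invariance of $V_L$, hence a symplectic factorisation $W=W_L\oplus W_{CR}$, and then cite \Cref{lemma:product_walls} to identify the product structure with a $0$-wall. The paper simply re-partitions ($C'=\emptyset$, $R'=C\cup R$) without going through the factorisation; both work, and the subtlety you flag about whether the empty interval counts as a ``subinterval of $C$'' for the purposes of irreducibility applies equally to the paper's own proof of this step. One further thing worth noting: the paper remarks that its first-part argument generalises to show non-triviality and properness of internal subspaces for irreducible $k$-walls with $k>1$; your $\dim=0$ argument (via invariance and factorisation) and your $\dim=2k$ argument (via nondegeneracy forcing the opposite internal subspace to vanish) both generalise in exactly the same way, so no generality is lost.
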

\begin{proof}
  First suppose that $G_\text{left}$ is the trivial subspace.
  In this case we could remove the central site and join it to the $R$ subsystem to create a $0$-wall, contradicting the wall being irreducible.
  Next suppose that $G_\text{left}$ is the full $\mathbb{Z}_2^2$.
  In this case we could clearly find some signal to generate any element of $G_\text{left}$ combined with any element of the left symplectic subspace.
  This would allow us to join the central site to the $L$ subsytem forming a $0$-wall and again contradicting that the wall be irreducible.
  The internal subspace must therefore be a proper and non-trivial subspace of symplectic subspace for $C$, these all correspond to the group generated by a single Pauli matrix.
  By symmetry the same is true for the right internal subspace, but it remains to show that the subspaces coincide.
  Using \Cref{lemma:internal_orthogonality}, we can see that $G_\text{left} = G_\text{right}$ because these groups are their own centralizers.
\end{proof}

Note that the first part of the proof generalises to $k\,{>}\,1$, where it shows that the internal subspaces are non-trivial proper subspaces for any irreducible $k$-wall.

\begin{corollary}
  Every $1$-wall conserves some Pauli matrix.
  \label{lemma:conserved_pauli}
\end{corollary}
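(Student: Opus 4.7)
This corollary is a direct consequence of the two immediately preceding results, so the plan is to chain them together. First I would invoke \Cref{lemma:internal_pauli} for the $1$-wall, which pins down that the left and right internal subspaces coincide as a common one-dimensional symplectic subspace, generated by a single traceless Pauli $P$ supported on the one-qubit central region $C$.

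Because $G_\text{left} \cap G_\text{right} = G_\text{left}$ is automatically nontrivial in this setting, the hypothesis of \Cref{lemma:intersection} is satisfied. I would then lift the explicit construction from that proof: take $M \in \mathcal{P}_1$ corresponding to $P$, note that its Heisenberg orbit $M(t) = U^t M U^{-t}$ is confined to $C$ for all $t$, and observe that because $C$ is a single qubit the orbit can visit at most three distinct traceless Paulis (with possible sign flips) before closing up at some finite time $\tau$. Packaging these into the phase-weighted sum $Q_M = \sum_{t=0}^{\tau-1} \mathrm{e}^{-i\theta t/\tau} M(t)$ from the proof of \Cref{lemma:intersection} produces an operator supported on $C$ which is invariant under conjugation by $U$ up to a global phase, and hence is the claimed Pauli-supported conserved charge.

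The only delicate point, and the one I would take care to spell out in the write-up, is the interpretation of "conserves some Pauli matrix": generically the orbit length is $\tau = 1$ and $Q_M$ is literally a single-qubit Pauli invariant under $U$ up to a phase, whereas in the non-generic cases $\tau \in \{2,3\}$, $Q_M$ is instead a one-dimensional invariant combination of at most three single-site Paulis on $C$. I do not anticipate a genuine technical obstacle; the corollary is essentially combinatorial bookkeeping on top of \Cref{lemma:internal_pauli} and \Cref{lemma:intersection}.
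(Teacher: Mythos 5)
Your proposal correctly sets up the chain of lemmas but has a genuine gap at exactly the point you flag as "delicate." You say the Heisenberg orbit of $M$ "can visit at most three distinct traceless Paulis" because $C$ is a single qubit, and you concede that when $\tau \in \{2,3\}$ the resulting $Q_M$ would be a linear combination of Paulis rather than a single Pauli. But the corollary asserts that a Pauli matrix itself is conserved, not merely some one-dimensional invariant combination --- so as written, your argument does not prove the stated claim in the "non-generic" cases you describe.

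The missing observation, which is the actual content of the paper's proof, is that the orbit of $M$ is confined not just to the central site $C$ but to the internal subspace $G_\text{left} = G_\text{right}$ itself. This follows from the signal characterisation used in \Cref{lemma:intersection}: at every time $t$ the evolved vector $0 \oplus g(t) \oplus 0$ can be reached both by a (now-terminated) left signal and by a right signal, so $g(t)$ remains in $G_\text{left} \cap G_\text{right}$ for all $t$. For a $1$-wall that intersection is the one-dimensional subspace of $\mathbb{Z}_2^2$ generated by a single traceless Pauli, which contains no other traceless element. Hence the orbit cannot wander among $\{X, Y, Z\}$; it sits on that single Pauli, possibly acquiring phases, and the orbit-integration construction collapses to the Pauli itself. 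In other words, the cases $\tau \in \{2,3\}$ with distinct Paulis in the orbit cannot actually occur --- ruling them out is the step your write-up omits and the one you would need to supply to close the argument.
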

\begin{proof}
  From the \Cref{lemma:internal_pauli}, the intersection of internal subspaces is generated by a single traceless Pauli operator.
  So we use \Cref{lemma:intersection}) to see that the integral along the orbit of this operator is a local conserved charge.
  The orbit must however remain inside the internal subspace but there are no other traceless elements so this Pauli operator must itself be the local conserved charge.
\end{proof}

\subsection{Structure of shallow 1-walls \label{ssec:shallow_one_walls}}

\begin{figure}
  \centering\includegraphics[width=\linewidth]{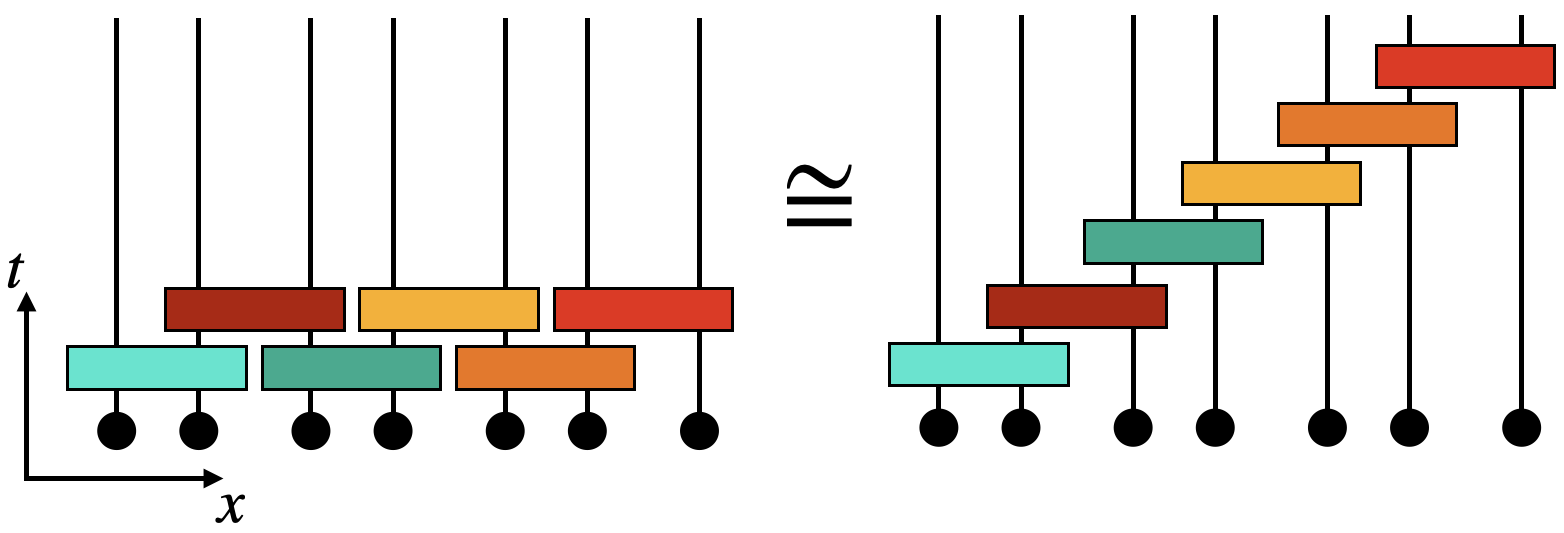}
  \caption{\small In the $p=0$ (Clifford) limit, the two Floquet operators shown in the figure give rise to the same evolution operator, except at finite time boundaries. The Floquet operator in \enquote{staircase} form (right-hand side) reflects the causal structure of operator spreading.
  }
  \label{fig:staircase_circuit}
\end{figure}

We will now turn to Clifford circuits of the form given by  \Cref{eq:floquet_unitary}, where constituent gates are sampled uniformly but with separable Clifford gates excluded.  
In the following we will use \emph{shallow} to mean exactly as shallow as this Floquet circuit layer, rather than other notions such as more general finite depth circuits.
The term \emph{shallow} will also disallow the Clifford gates being separable consistent with the measure. We will also describe the general properties of $k>1$ walls which will allow us to bound their probability of formation and comment on the conserved charges they host.

A useful technique is to redefine the Floquet operator without changing the evolution operator. Specifically, note that the Floquet operator depicted in the left-hand side of \Cref {fig:staircase_circuit} gives rise to the same brickwork circuit than the Floquet operator depicted in the right-hand side of \Cref {fig:staircase_circuit}, except at the time boundaries. The staircase form of the Floquet operator is useful to analyse the spreading of local operators and to characterise the presence of $k$-walls.
Using this idea we can provide some constraints on the classes of gates which constitute an irreducible $k$-wall for these shallow circuits.

\begin{lemma}
  For any $k > 0$, a shallow Clifford irreducible $k$-wall begins and ends with $\CZ$-class gates.
\end{lemma}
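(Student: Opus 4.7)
The plan is to rule out the $\SWAP$- and $\FSWAP$-classes for the leftmost staircase gate $G_1$; the rightmost gate follows by the left-right symmetry of \Cref{def:walls}. Since the sampling measure excludes product gates, these two classes are the only alternatives to $\CZ$. I would factorise $W = V G_1$ with $V = G_{k+1}\cdots G_2$ acting trivially on $L$, and argue that in either forbidden class the subinterval $C' = C\setminus\{C_1\}$ is itself a wall for $W$, contradicting irreducibility.

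Concretely, I would examine how $G_1$ conjugates Paulis supported on $C_1$. For $G_1 \in \SWAP$-class, Table~\ref{tab:clifford_classes} (after absorbing the single-qubit Cliffords on the legs) gives $G_1(I_L \otimes A_{C_1})G_1^\dagger = A'_L \otimes I_{C_1}$ for every Pauli $A_{C_1}$, where $A'_L$ is a single-qubit Clifford image of $A_{C_1}$ on $L$. Since $V$ is trivial on $L$, this gives $W(I_L \otimes A_{C_1})W^{-1} = A'_L$, and hence $W^t(I_L \otimes A_{C_1})W^{-t} = W^{t-1}A'_L W^{-(t-1)}$, which has no $R$-support by the original wall condition applied to the $L$-operator $A'_L$. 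For $G_1 \in \FSWAP$-class, Table~\ref{tab:clifford_classes} yields $G_1 X_{C_1} G_1^\dagger = X_L Z_{C_1}$ and $G_1 Z_{C_1} G_1^\dagger = Z_L$, so $W X_{C_1} W^{-1} = X_L \cdot V Z_{C_1} V^\dagger = X_L \cdot W Z_L W^{-1}$ and $W Z_{C_1} W^{-1} = Z_L$. Iterating and factorising gives $W^t X_{C_1} W^{-t} = W^{t-1}X_L W^{-(t-1)} \cdot W^t Z_L W^{-t}$, whose two factors both have no $R$-support by the original wall condition on $L$-inputs.

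Extending by multiplicativity of conjugation over products $A_L \otimes A_{C_1}$, the iterate $W^t$ preserves the no-$R$-support property for every Pauli on $L' = L \cup C_1$, establishing the sub-wall for $C'$ and hence reducibility. The main subtlety I anticipate is the bookkeeping of single-qubit Clifford dressings on $G_1$ for the $\FSWAP$ class: with generic dressings the $C_1$-components of $G_1(I_L \otimes A_{C_1})$ are single-qubit Clifford images of $X$ and $Z$ rather than $X, Z$ themselves, so one has to identify, for each such dressed $C_1$-Pauli, an $L$-operator whose $G_1$-image has the matching $C_1$-component. This matching is always possible because $\FSWAP$-class gates send $L$-Paulis onto all three nontrivial $C_1$-Paulis as the $L$-input varies; once this is set up the argument closes exactly as in the representative case.
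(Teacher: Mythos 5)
Your proof is correct and rests on the same core idea as the paper's — a forbidden boundary gate class forces $C' = C\setminus\{C_1\}$ to be a wall, contradicting irreducibility — but approaches it from a complementary direction. The paper tracks the composite map that sends an $L$-Pauli through $\mathrm{Ad}_{G_1}$ and then projects to the symplectic subspace of $C_1$; this is a group homomorphism, so its image is a subgroup of $\mathds{Z}_2^2$, and the paper eliminates rank $0$ (separability) and rank $2$ (full image $\Rightarrow$ $L$-signals can simulate any $C_1$-signal $\Rightarrow$ $C'$ is a $(k-1)$-wall), leaving rank $1$, which is precisely the $\CZ$-class. Because only the rank of the image matters, the single-qubit dressing is absorbed automatically. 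You instead look in the opposite direction — at the $G_1$-image of $C_1$-supported Paulis — and exhibit explicit factorisations reducing their forward evolution to $L$-inputs already covered by the original wall condition, which makes the dressings something you have to track. Your anticipated subtlety does resolve: for a dressed $\FSWAP$-class $G_1$, the $C_1$-component of $G_1 A_{C_1} G_1^\dagger$ is always the same conjugated $Z$ whenever it is nontrivial, and the $\FSWAP$-class always maps exactly one $L$-Pauli (the conjugate of $Z$) to a pure $C_1$-operator with matching $C_1$-part and trivial $L$-part, so the factorisation $W A_{C_1} W^{-1} = \tau_L\cdot W B_L W^{-1}$ goes through. One small point you might flag: for $k=1$ the putative $C'$ is empty, so ``$C'$ is a wall'' means $W$ is a $0$-wall and hence a product unitary; alternatively one can just check directly that for $k=1$ a $\SWAP$/$\FSWAP$-class $G_1$ already violates the wall condition at $t=1$, since $G_1$ transports every $L$-Pauli entirely onto $C_1$ where the non-product $G_2$ spreads some of them to $R$. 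Either way the conclusion holds, and the two proofs are genuinely equivalent in content — the paper's homomorphism formulation just sidesteps the bookkeeping.
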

\begin{proof}
  Consider the reduced circuit with a single qubit for each of $L$ and $R$.
  An initial operator on $L$ encounters the first gate and produces an operator on $L$ and the first site of $C$.
  We take this and project the corresponding symplectic vector to the subspace for that first site of $C$.
  Both of these steps are group homomorphisms so the image of $\mathds{Z}_2^{2n}$ through the composition is a subgroup of $\mathds{Z}_2^{2n}$.
  The image subgroup cannot be the trivial group because the gate is by assumption not separable.
  If the image group is the full group $\mathds{Z}_2^{2}$ then we can use a signal on $L$ to produce any signal output on the first site of $C$ to produce a $(k-1)$-wall, which contradict irreducibility.
  Hence, for an irreducible wall the image group is the proper subgroup generated by a single Pauli.
  This can only be achieved by a $\CZ$-class gate.
  The same argument applies on the right boundary to the wall.
\end{proof}

\begin{figure}
  \centering
  \includegraphics[width=0.48\textwidth]{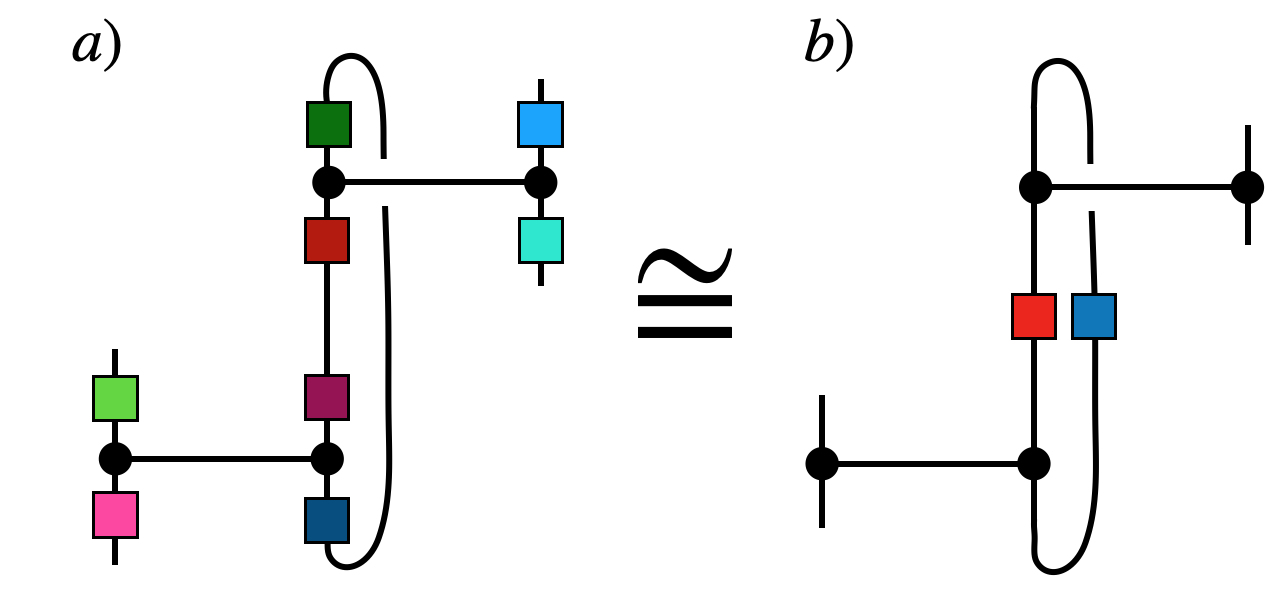}
  \caption{\small General form of $1$-walls by attaching two members of the $\CZ$-class. The loop on the central qubit captures the periodic nature of the circuit. Two compatible $\CZ$-like gates form a one-wall if their single-qubit Clifford degrees of freedom preserve the conserved Pauli on the central qubit as per \Cref{lemma:conserved_pauli}. Sampling under uniform $\mathcal{C}_1$ measure gives the same probability of sampling the configuration on  \textit{a)} and \textit{b)}, by Haar-invariance. }
  \label{fig:1wall_structure}
\end{figure}

As a result, all $1$-walls are made out of $\CZ$-class gates, shown on \Cref{fig:1wall_structure}.
The looping back connection simply represents the feeding of operators into the next layer of the circuit and shouldn't always be interpretted as a trace.
The simplest example is two $\CZ$ gates joined together as a staircase.
Since both the $\CZ$ gates leave the $Z$ operator fixed, as per \Cref{tab:clifford_classes}, this is the local conserved charge.

We can now formulate the structure of all shallow $1$-walls, using \Cref{lemma:conserved_pauli} together with the idea of circuit reshaping.
Essentially, two $\CZ$-like gates will make a wall if their single-qubit Clifford degrees of freedom are \enquote*{compatible}, i.e.~they respect the conservation of the central qubit's internal subspace.
Using this we may calculate the probability of forming a wall under uniform measure on non-product Cliffords by counting equivalences of $\CZ$-like gates.
To start, we can take the gate pair and decompose it into $\CZ$ gates together with a number of $\mathcal{C}_1$ unitaries, which we can combine (and amputate on the outside as they are irrelevant) to simplify the circuit.
Reshaping means that even on all the intermediate wires between Floquet layers for the central site there is a conserved charge.
We know that this conserved operator must always be proportional to $Z$ because otherwise they would escape through the $\CZ$ gates.
The remaining $\mathcal{C}_1$ fulfils this constraint with probability $1/3$, because that is the fraction of $\mathcal{C}_1$ which preserves $Z$ up to phases.
Sampling two $\CZ$-class members (without product gates) has probability $(9/19)^2$ yielding as final probability for sampling a $1$-wall,
\begin{equation}
  \mathds{P}(\text{$1$-wall}) = \frac{1}{9}\left( \frac{9}{19} \right)^2 \approx 0.025. 
\end{equation}
This is in exact agreement with the previously reported wall probabilities in \cite{Farshi2022_1D, Farshi2022_2D} (which were performed by counting wall instances in the symplectic group representation of two-qubit Clifford pairs), accounting for the removal of product gates. This corroborates that we have enumerated all wall instances for $k\,{=}\,1$.
From the tensor diagram picture, we can also see that these walls are \textit{two-sided}, see \Cref{lemma:twosided}, ie. they block the spreading of right localised operators as well as left ones since mirroring Figure \ref{fig:1wall_structure} horizontally respects the topology of the tensor.

\subsection{Structure of shallow 2-walls \label{sscec:shallow_two_walls}}

We are also able to explicitly construct the form of all $2$-walls and count their sampling probability. Naturally, any gate in the staircase form will make a $2$-wall if it also has a $1$-wall embedded into it. Here, we only consider walls which are not reducible to $1$-walls in this sense.
From the arguments of the previous section, these configurations necessarily have two $\CZ$-like gates at their left and right ends.

\begin{lemma}
  All the gates interior to a shallow Clifford irreducible wall are $\SWAP$ and $\FSWAP$ class gates.
\end{lemma}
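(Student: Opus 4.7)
The strategy is proof by contradiction. Suppose some interior gate $g_j$ (with $2 \le j \le k$) of a shallow irreducible $k$-wall is $\CZ$-class, and work in the staircase picture of \Cref{fig:staircase_circuit} where $g_j$ acts on adjacent central sites $j-1,j$. The goal is to exhibit either $\{1,\ldots,j-1\}$ or $\{j,\ldots,k\}$ as a proper sub-wall, contradicting irreducibility.

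First I extract the key structural feature of $\CZ$-class gates from \Cref{tab:clifford_classes}: after absorbing the surrounding single-qubit Cliffords into the neighbouring staircase gates, $g_j$ preserves a 2-dimensional Pauli subgroup $\langle A_{j-1}, B_j\rangle$, with $A_{j-1}$ a single traceless Pauli on site $j-1$ and $B_j$ a single traceless Pauli on site $j$. Moreover, Paulis outside this subgroup are spread across $g_j$ in a strongly restricted way: any non-$A_{j-1}$ Pauli on site $j-1$ picks up exactly a $B_j$ factor on site $j$, and symmetrically on the opposite side. Thus $g_j$ can only transmit information between its two sites along the single Pauli directions $A_{j-1}$ and $B_j$.

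Using the signal characterisation (\Cref{lemma:signals}), let $S_{j-1}$ be the subgroup of Paulis reachable on site $j-1$ from left-injected signals through $g_1,\ldots,g_{j-1}$, and $S_j$ the analogous subgroup on site $j$ from right-injected signals through $g_{k+1},\ldots,g_{j+1}$. There are two cases. In Case (i), $S_{j-1}\subseteq\langle A_{j-1}\rangle$: then $g_j$ acts trivially on every left-generated Pauli arriving at site $j-1$, so the left sub-staircase $g_1,\ldots,g_{j-1}$ defines a closed Floquet dynamics on $L\cup\{1,\ldots,j-1\}$ whose operators never spread past site $j-1$, making $\{1,\ldots,j-1\}$ a $(j-1)$-wall. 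In Case (ii), $S_{j-1}\not\subseteq\langle A_{j-1}\rangle$: then the full wall condition forces the $B_j$-valued signal that $g_j$ delivers on site $j$ not to reach $R$ through $g_{j+1},\ldots,g_{k+1}$. Two-sidedness (\Cref{lemma:twosided}) together with the symmetric argument yields $S_j\subseteq\langle B_j\rangle$, since otherwise a Pauli on site $j$ outside $\langle B_j\rangle$ would cross $g_j$ with an $A_{j-1}$ factor on site $j-1$ and, once propagated through $g_{j-1},\ldots,g_1$, leak into $L$. Combining these bounds with the signal characterisation of the right sub-staircase, $\{j,\ldots,k\}$ meets the $(k-j+1)$-wall condition. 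In either case, the wall is reducible, contradicting the hypothesis. Since the only Clifford equivalence classes are Identity, $\CZ$, $\SWAP$ and $\FSWAP$, and product (identity-class) gates are excluded by our sampling measure, every interior gate must be of $\SWAP$ or $\FSWAP$ class.

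The main technical obstacle is the verification of Case (ii), specifically that the pincer constraints $S_{j-1}\not\subseteq\langle A_{j-1}\rangle$ and $S_j\subseteq\langle B_j\rangle$ together imply the full wall condition on the right sub-staircase for \emph{all} input signals, not merely those transmitted by $g_j$. The cleanest route is to combine the restricted bandwidth of $g_j$ with the $J$-orthogonality of internal subspaces (\Cref{lemma:internal_orthogonality}), which simultaneously pins both $S_{j-1}$ and $S_j$ inside their respective one-generator subgroups whenever the escape route of Case (i) fails. Once that reduction is in hand, the identification of the sub-wall on $\{j,\ldots,k\}$ follows directly from the definition.
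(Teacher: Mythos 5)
The overall strategy matches the paper's: argue by contradiction from an interior $\CZ$-class gate $g_j$, exploiting the \emph{restricted bandwidth} of $\CZ$-class gates (only one traceless Pauli direction crosses each side), and casing on whether $g_j$ is ever activated from the left. Your Case (i) is exactly the paper's first step: if $S_{j-1}\subseteq\langle A_{j-1}\rangle$, the gate is never activated and $\{1,\ldots,j-1\}$ is already a wall, contradicting irreducibility.

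Case (ii) is where your route genuinely diverges. The paper stays on the left side: by superposing and time-shifting the left initial conditions that activate $g_j$, it generates \emph{every} $B_j$-valued activation sequence at site $j$; the original wall condition blocks all of these from reaching $R$, and since the restricted bandwidth of $g_j$ means any signal injected to the left of site $j$ enters site $j$ only as some $B_j$-sequence, $\{j,\ldots,k\}$ is immediately a left wall. You instead argue from the right: bound $S_j\subseteq\langle B_j\rangle$ by two-sidedness, then note that $B_j$ doesn't cross $g_j$, making $\{j,\ldots,k\}$ a right wall. Both are valid routes; the paper's is lighter because it never has to pin down any internal subspace.

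Two specific issues with your sketch for closing the gap, though. First, \Cref{lemma:internal_orthogonality} is not the right tool. It constrains $G_\text{left} J G_\text{right}$ on the central subsystem of the \emph{full} wall; the projections $S_{j-1}$ and $S_j$ live on distinct sites, where $J$-orthogonality holds trivially and gives you nothing. What you actually need is the symplectic-transfer step underlying \Cref{lemma:twosided}: since $\langle A_{j-1}\rangle$ is Lagrangian, any $\tau\in S_{j-1}\setminus\langle A_{j-1}\rangle$ has $\tau^{T}JA_{j-1}=1$; combined with a left signal producing $\tau$ at site $j-1$, and finiteness of Floquet orbits, this forces the orbit of $A_{j-1}$ (placed on site $j-1$) to meet $L$ at some positive time — which is the leak you want. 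Second, the claim that the pincer constraints ``simultaneously pin both $S_{j-1}$ and $S_j$ inside their respective one-generator subgroups'' is not what Case (ii) yields: by hypothesis $S_{j-1}$ is \emph{not} pinned there, only $S_j$ ends up pinned. With those two corrections your argument goes through, but the paper's activation-sequence route is the cleaner proof of this lemma.
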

\begin{proof}
  Suppose there is some $\CZ$-class gate part way through the staircase of such an ireducible wall.
  With a left initial condition we can cause the interior $\CZ$-class gate to emit an operator to the right, otherwise the wall is clearly reducible -- we call this activating the gate.


  Now we focus on the behaviour of the right half of the circuit from the interior $\CZ$-gate to the right end.
  By superposing left initial conditions which activated the gate first at some time, we can produce any sequence of activation times of the $\CZ$-gate.
  If no non-trivial operator reaches $R$ for any sequence of activation times then clearly no signal directly placed on the left-side of the interior gate can spread to $R$ either.
  This means that there is an interior wall from the interior $\CZ$ to the right boundary, contradicting the assumption of irreducibility.
  Therefore there can be no interior $\CZ$-class gate.
\end{proof}

\begin{figure*}
    \centering
    \includegraphics[width=\textwidth]{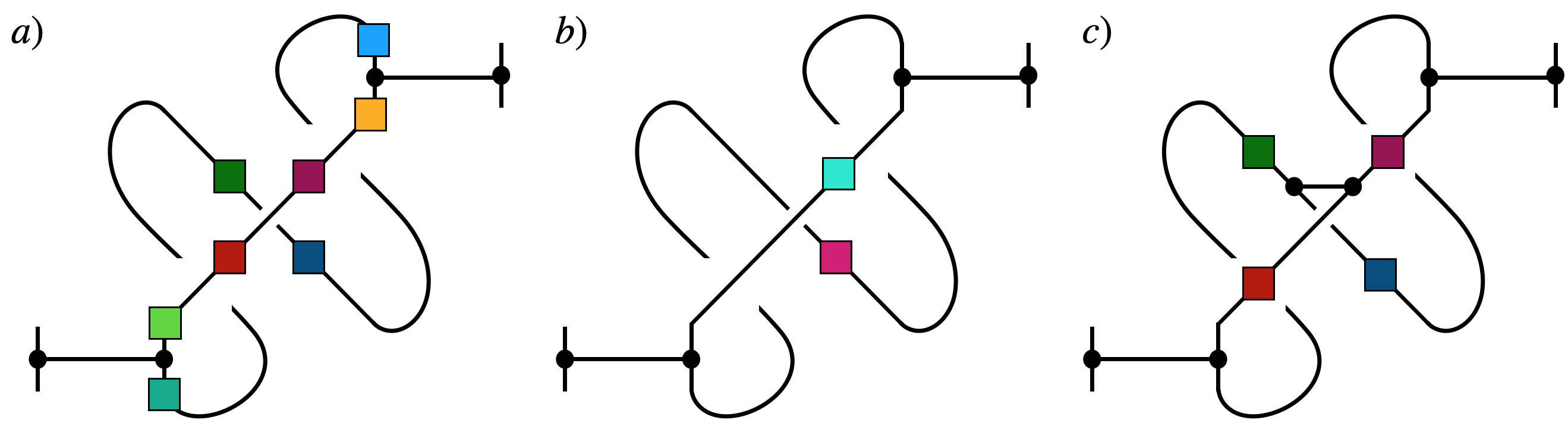}
    \caption{\small Structure of $2$-walls. Using the Haar-invariance of sampling $\mathcal{C}_1$ elements, we collapse the single-qubit Cliffords on \textit{a)} while respecting the topology of the tensor. This reduces $\SWAP$-like $2$-walls to the $\CZ$-like case on 
    \textit{b)}. For $\FSWAP$-like walls on \textit{c)}, we have manually constructed the consistency conditions in Appendix \ref{app:fswap_consistency} to find the probability of wall formation. }
    \label{fig:2wall_structure}
\end{figure*}

Therefore, the structure of all shallow 2-walls will have either a $\SWAP$ or $\FSWAP$ class gate in the middle, as shown in \Cref{fig:2wall_structure}.
We treat these cases separately. The challenge of constructing the wall instances is assigning the single-qubit Clifford degrees of freedom consistently, ie. to preserve that only a single traceless Pauli may arrive on the output $\CZ$-gate in order to preserve localisation. 

For the $\SWAP$ case, the intermediate wires decouple therefore the probability of obtaining a wall for uniformly sampled assignments of the single-qubit Cliffords is the same as for the $1$-walls. By inserting the probability of sampling a $\SWAP$-like gate under the model's Clifford measure, we get:
\begin{equation}
    \mathds{P}(\text{SWAP-like 2-wall}) = \frac{1}{9} \left(\frac{1}{19}\right)\left( \frac{9}{19} \right)^2.   
\end{equation}
These configurations therefore lack any \enquote*{interference} between Paulis on wires corresponding to forward and backward propagation. One may also count equivalent gate configurations by assigning a single-qubit Pauli subgroup $\{ \mathds{1}, \sigma\}$ to each wire independently to preserve localisation. This picture does not always hold for $\FSWAP$s.

The constructions we have found so far share a simplifying idea which we call \emph{interference-freedom}.
This is when the internal subspaces splits as a product over the interior qubits of $C$ so that the conserved Pauli subgroup is a Cartesian product of groups generated by a single traceless Pauli for each qubit.
In a interference-free wall, we can break apart the operator at any point in its evolution into individual Pauli operators and cast each forward in time independently.
None of these will escape the wall.
Furthermore, the looped diagrams we write in this case \emph{can} be interpreted as a trace, where the lines carry a vector space of dimension three corresponding to the three different Pauli subgroups they can be assigned to.
This idea allows one to calculate all non-interfering walls for any $k$ by a one-dimensional transfer matrix method.
Unfortunately, this does not generalise to all shallow walls.

For an \emph{interfering} wall, naturally, this does not work and the wall can only be seen by looking at the Pauli operators across $C$ in combination.
Viewing the symplectic vector space as a $\mathbb{Z}_2$-valued `wavefunction' over a discrete phase-space, the operator is a superposition of basis vectors and it is only the interference between these that ensures that they do not escape to the wrong side of the wall.
This motivates our terminology while the process is analogous to the destructive interference of real-space single-particle wavefunctions in Anderson localisation.
For a particular example, the 2-walls with a $\FSWAP$-like interior can exhibit interference.

In Appendix \ref{app:fswap_consistency}, we have enumerated all the consistent assignments for the single-qubit degree freedoms of $\FSWAP$-like gates.
Remarkably, this results in the same probability of obtaining compatible gates as before,
\begin{equation}
    \mathds{P}(\text{FSWAP-like 2-wall}) = \frac{1}{9^2} \left(5+4\right)\left(\frac{9}{19}\right)\left( \frac{9}{19} \right)^2
    \text{.}
    \label{eq:sampling_prob_fswap}
\end{equation}
The first term comes from the $5$ interference-free diagrams whereas the remaining $4$ contribution comes from circuits exhibiting interference.
We have numerically verified that the above conditions are necessary and sufficient for constructing $2$-walls which yield the total probability to be of similar order to that of $1$-walls,
\begin{equation}
    \mathds{P}(\text{2-wall}) = \frac{1}{9} \left(\frac{10}{19}\right)\left( \frac{9}{19} \right)^2 \approx 0.013
    \text{.}
\end{equation}

One can also construct at most $2$-local conservation laws for $k=2$. For the $\SWAP$-like case, this gives:
\begin{equation}
    Q_M^{\SWAP} = \mathds{1} \otimes \left ( \sigma_c \otimes  \mathds{1}  +  \mathds{1} \otimes \sigma_c \right )\otimes \mathds{1}, 
\end{equation}
where we have used the diagrammatic construction of Figure \ref{fig:2wall_structure} and note that the $\SWAP$-gates preserved the locality of the conserved quantity. For the non-interfering cases with $\FSWAP$, an analogous construction works as well. For the interfering $\FSWAP$-like cases, one can't construct non-local conservation laws as these instances don't have non-trivial intersections under \Cref{lemma:intersection}. 

\subsection{Structure of shallow \texorpdfstring{$k$}{k}-walls \label{ssec:shallow_k_walls}}

Using the previous constructions, one can generate arbitrary long walls by increasing the number of $\SWAP$/$\FSWAP$-like gates between the wall edges (that are not reducible to smaller ones). Accounting for the interference permitted by the $\FSWAP$-like gates, however, becomes an increasingly difficult task as $k$ increases which prevented us for calculating the exact probability of forming $k$-walls. We note that for any $k$-wall made up from $\SWAP$-like gates only, Haar-invariance gives the consistency probability to be $1/9$, as for $k=2$.

Counting only \emph{interference-free} walls gives the following lower bound for any $k$,
\begin{equation}
  \left(\frac{9}{19}\right)^2 \frac{1}{9}\left(\frac{6}{19}\right)^{k-1}
  \le \mathds{P}(\text{$k$-wall})
  \text{,}
\end{equation}
and the following upper bound can be found using only the probability of sampling the correct sequence of equivalence classes required to form an irreducible wall of this length,
\begin{equation}
  \mathds{P}(\text{$k$-wall}) \leq \left(\frac{9}{19}\right)^2 \left(\frac{10}{19}\right)^{k-1}
  \text{.}
  \label{eq:k-wall_prob_bound}
\end{equation}
This shows that forming large width walls are exponentially unlikely and, in particular, $k>2$ walls are one order of magnitude less likely than shorter ones. We expect that as $k$ increases even this bound won't be saturated as consistent assignments become increasingly more constrained. The probabilities calculated in this section will be utilised in Section \ref{ssec:stability} to consider the effect of perturbations and to calculate the typical length-scales over which operators delocalise.


We have constructed infinite families of non-interfering $k$-walls of both the $\mathrm{SWAP}$ and $\mathrm{FSWAP}$ kind which host $k$-local conserved quantities. These are formed from the oscillations of a local charge within the wall subspace. For the interfering families, we have found instances in which there is no $k$-local conserved quantity restricted to the central subspace. As a result, the existence of a $k$-wall doesn't imply the existence of a $k$-local conserved charges while a general $k$-wall (containing both $\mathrm{SWAP}$ and $\mathrm{FSWAP}$-like gates) may host conserved quantities which are not constructed from a local operator's orbit. We elaborate on these instances in Appendix \ref{app:fswap_consistency}.

Deciding whether an $n$-qubit disorder instance of the model contains a $k$-wall (and a $k$-local conservation law) is straightforward at least in the computational sense. This is due to the Clifford nature of the circuit, in which the dynamics can be represented in a $2n$-dimensional phase space as noted in Section \ref{sec:math_prelims}. Therefore, finding invariant subspaces in the circuit instance is a polynomial time complexity problem which can be performed by a primary decomposition of the symplectic image of the unitary operator of the circuit \cite{Steel1997}.

\subsection{Stable localisation in \texorpdfstring{$p<1$}{p<1} model as fragmentation}
\label{ssec:stability}

\begin{figure}
    \centering
    \includegraphics[width=\linewidth]{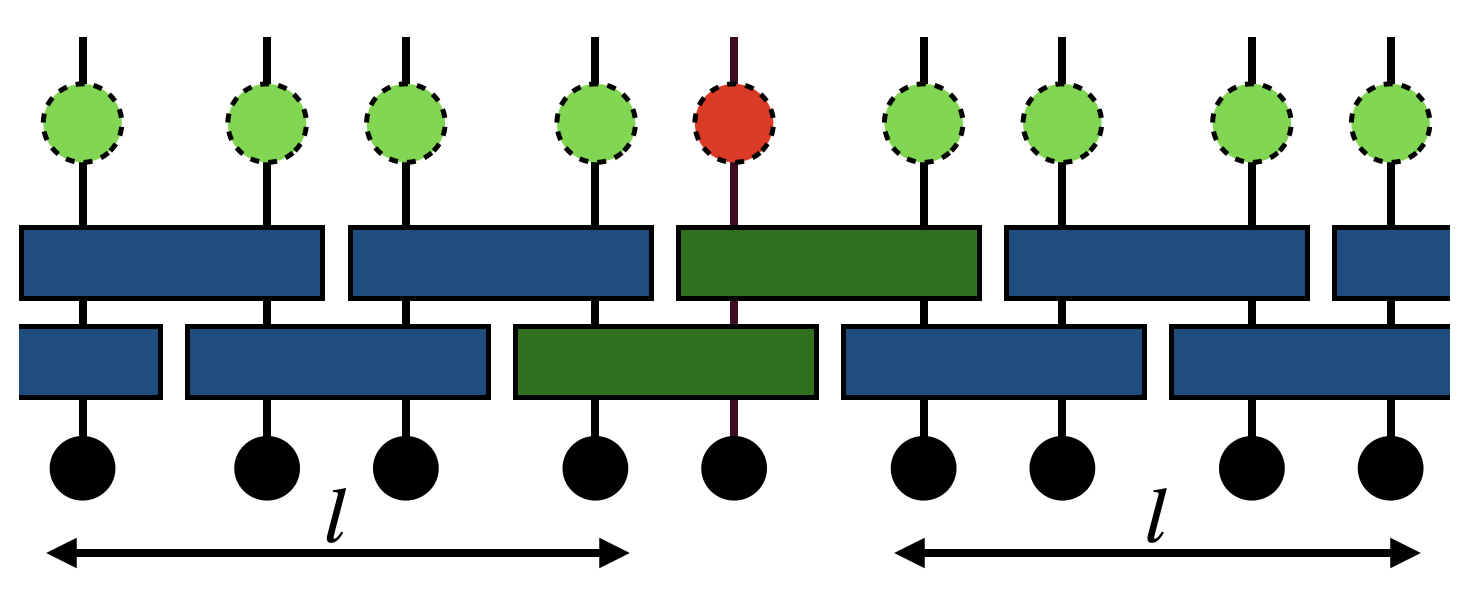}
    \caption{\small 
    Circuit whose unperturbed part consists of exactly one 1-wall at the central qubit: green (blue) rectangles represent Clifford gates in a 1-wall (non-wall) configuration. Arbitrary single qubit perturbations can be added at the light-green sites to preserve non-ergodicity. The red single-qubit gate potentially disrupts the wall by mixing subspaces, inducing delocalisation of operators and restores ergodicity in the circuit for late times.}
    \label{fig:left_right_subspaces}
\end{figure}
In this section, we embed walls in to larger circuits and study the subspaces spanned by them in relation to the stability of localisation. We develop the idea of fragmentation of a random circuit which will be the basis of understanding the role of perturbations in the model and its implications for the dynamical features of localised/ergodic phases. Consider a circuit region of $2l+1$ qubits bi-partitioned by $1$-wall, as in Figure \ref{fig:left_right_subspaces}. Let us define the subspace of left-conserved and right-conserved ($2l+1$)-qubit operators of the form:
\begin{align}
    \mathcal{L} &= \{\mathcal{\Bar{P}}_l \otimes \{\mathds{1}, \sigma_{\mathrm{c}} \}\otimes \mathds{1}\} \backslash \mathds{1},\label{eq:left_inv_subspaces}\\
    \mathcal{R} &= \{\mathds{1} \otimes \{\mathds{1}, \sigma_{\mathrm{c}} \}\ \otimes \mathcal{\Bar{P}}_l  \}\backslash \mathds{1}, \label{eq:right_inv_subspaces}
\end{align}
where $\sigma_c$ denotes the conserved Pauli operator on the central qubit as a consequence of \Cref{lemma:conserved_pauli}. We will refer to these spatially localised subspaces as fragments. Since $\mathcal{\Bar{P}}_l$ forms an operator basis, the following holds for arbitrary left/right localised operators $Q_{\mathrm{L}} \in \text{span} \mathcal{L} $ and $Q_{\mathrm{R}} \in \text{span} \mathcal{R} $:
\begin{align}
    U^t Q_{\mathrm{L}} U^{-t} \in \text{span}  \mathcal{L} \text{ for all } t>1, \\
    U^t Q_{\mathrm{R}} U^{-t} \in \text{span}  \mathcal{R} \text{ for all } t>1.
\end{align}

Consequently, any perturbation in the subspace spanned by operators in $\mathcal{L}$ preserves the localised support of an operator since it only mixes operators within the localised subspace of the wall and similarly for the right conserved subspace. This is illustrated on Figure \ref{fig:left_right_subspaces}. Perturbations which mix operators between $\mathcal{L}$ (or $\mathcal{R}$) and the rest of operator space will de-stabilise the wall and allow the spreading of support outside of these subspaces (and across the qubit chain). We will only be concerned with local random rotation perturbations as a minimal example but one may conceive more fine-tuned models where the stability of localisation may be enhanced/decreased. As an example, a prototypical $1$-wall made out of $\CZ$ gates is stable even against $Z$-rotation perturbations on the central qubit which, however, have zero probability in sampling under the Haar measure. This example also shows that walls need not be Clifford gates.

Evolution under single-qubit unitaries creates a superposition of traceless Paulis according to:
\begin{equation}
    R_i q R_i^\dagger  = \alpha_x \sigma_x + \alpha_{y} \sigma_y + \alpha_z\sigma_z,
\end{equation}
where $q$ is an arbitrary traceless operator local to site $i$ and $\alpha_i$ are randomly sampled coefficients. Such a rotation acting at the centre qubit of the wall is able to transform operators outside of $\mathcal{L}$ and $\mathcal{R}$ and thus de-stabilize the wall. Under the Haar-measure, the average amplitude retained on in the conserved subspace after a rotation gate applied on the wall is given by the auto-correlator of the conserved Pauli $\sigma_c$ is
\begin{align}
    \langle |\alpha_c|^2 \rangle_\mathds{U} &= \left \langle \frac{1}{4} |\mathbf{Tr}[U \sigma_c U^\dagger \sigma_c]|^2 \right \rangle_{U \in \mathds{U}}
    = \frac{1}{3}
    \text{.}
    \label{eq:rotation_auto_correlator}
\end{align} This shows that random rotations create, on average, a uniform mixture between local Paulis and thus in typical circuit realisations, rotations will de-stabilise the walls exponentially quickly in time.

A similar analysis establishes that any perturbation within the $k$ qubits between wall edges will break localisation for $k>1$. Although for larger width walls, multiple Paulis may propagate in between wall edges, the $k$-qubit subspace of the inner qubits cannot span the full Pauli group $\mathcal{\Bar{P}}_k$. This constraint however, is necessary broken if non-fine tuned perturbations are applied, otherwise the wall would be reducible. Therefore, the fact that a $k>1$ width walls may have a more complicated conserved subspace does not affect the above stability analysis. In fact, $k$-walls are less stable as $k$ increases as the probability that none of inner qubits receive a perturbations exponentially decreases as $(1-p)^k$.

In the large-system limit, randomly applied perturbations will generate an unperturbed wall in the $p<1$ model as the probability of all walls receiving a perturbation decreases exponentially in system size. As a result, our model exhibits a localised phase for any amount of random variation in applying gates from the $\CZ$-class, similarly to Anderson localisation where infinitesimal disorder localises a one-dimensional system's eigenstates \cite{Anderson1958}. We note that this behaviour gets sharper in the $n \rightarrow \infty$ limit. For $p=1$, $k$-walls of all orders delocalize. In this limit, typically operators reach infinite support as $t\rightarrow\infty$. Our model thus also resembles one-dimensional percolation with a trivial phase transition as $p \rightarrow 1^-$.

\subsection{Typical localisation length of operators \label{ssec:localisation_length}}

Using the probabilities of wall configurations from Sections \ref{sec:wall_configs}, we are now in a position to estimate the typical localization length of operators. An operator localised within some region of the circuit will spread in either direction during the time evolution until it encounters an unperturbed wall. As a result, the typical localisation length of operators (ie. the volume of a finite region in which they retain their total operator norm) is the typical spacing between unperturbed walls. We estimate this length scale in the following.

We are only going to consider $1$-walls and $2$-walls as their sampling probability is of the same order, whereas higher order walls form with exponentially decreasing probability with $k>2$ walls having one order of magnitude less probability at most. Additionally, higher $k$-walls are more prone to being destabilized. Let us define the stopping probability of an operator in at most $2$ steps due to unperturbed walls:
\begin{equation}
    s(p) = \mathds{P}(\text{$1$-wall})(1-p) + \mathds{P}(\text{$2$-wall})(1-p)^2 + \mathcal{O}(p^3).
\end{equation}
The probability of operator localisation reduces to a Bernoulli process in this approximation. Therefore allowing the propagation of a local operator up to $x+2$ steps from a reference point on the chain has the following distribution (along either direction):
\begin{equation}
    S(x) = (1-s)^{(|x|-1)}s = \frac{s}{1-s} \exp\left (-\frac{|x|}{\mu} \right ),
\end{equation}
which immediately gives the typical distance between walls:
\begin{equation}
    \mu(p) = \frac{1}{|\log(1-s(p))|} \sim \frac{44}{(1-p)}. 
    \label{eq:loc_length}
\end{equation}
Therefore, the typical volume of localised fragments are $44$ qubits which is tunable by the probability of applying perturbations. As $p \rightarrow 1^-$, $\mu(p)$ exhibits a divergence indicating that walls may only form at infinity. Note that the Clifford-only ($p=0$) circuit's wall distance is significantly higher than was previously seen in numerical calculations \cite{Farshi2022_2D}. The reason for our increased fragment length is the removal of product Cliffords from the circuit which greatly reduces the wall formation probability.

We note that any $\CZ$-like gate is able to form a wall with another $\CZ$-like gate at arbitrarily length scales provided the consistency conditions are met for the intermediate gates. This means that wall formation is a correlated process across all positions due to the existence of arbitrary width $k$-walls. In the previous calculation, we have ignored this effect, similarly to the method shown in the Appendix of \cite{Farshi2022_2D}. Using the exponential bound of Equation (\ref{eq:k-wall_prob_bound}), this will make a negligible error in estimating fragment sizes. 

\subsection{Fragmentation in the thermodynamic limit}

We previously saw how the presence of a $1$-wall leads to the operator algebra decomposing into invariant subspaces which we called $\mathcal{L}$, $\mathcal{R}$ implying that $\mathcal{LR}$ is also invariant.
We can generalise this idea to an extended system with potentially many $1$-walls.
As before, we'll assume that all the walls in the system are $1$-walls which gives a lower bound on the true number of fragments while the true fragmentation, by taking into account all walls, will be slightly more fine-grained.
Between each of these, operators become trapped producing a space of operators local to this subsystem fragment, and identity outside of it.
In the same manner as for constructing $\mathcal{LR}$ products of these spaces form further invariant operator subspaces.

We can formalise this idea by introducing projection superoperators onto each of these local operator fragments, and observing that these are conserved superoperators.
There will also be invariant subspaces for the conserved Paulis and corresponding projection superoperators.
The conserved superoperators can be put together into a commutant algebra and the invariant subspaces understood through representation theory in the manner of Ref.~\cite{Moudgalya2022}.

In the scaling limit as $n \rightarrow \infty$ and for $p<1$, with high probability, the number of walls is extensive in $n$.
This means that the number of fragment subspaces grows exponentially.
The additional fragmentation due to conserved quantities within walls and longer-range walls also gives an extensive number of projections and does not qualitatively increase the fragmentation beyond the exponential.

We could form a basis of stabiliser states and connect the operator-space fragmentation experienced by the stabiliser generators to fragmentation in a basis of stabiliser states.
In this way, the operator-space fragmentation that we discuss here is perhaps just a Heisenberg picture viewpoint to Hilbert-space fragmentation~\cite{moudgalya2022Scars_HSF}.

\section{Entanglement dynamics \label{sec:entanglement}}

Based on the previously introduced operator space fragmentation, we now turn to investigating its effect on the dynamics of a pure state's entanglement entropy across fragment boundaries. Our basic setup will be to consider how much entanglement is created across walls for an initially unentangled pure state. In particular, we consider bi-partitions of the circuit around $1$-walls (as on Figure \ref{fig:left_right_subspaces}). Since our circuit does not contain product unitaries, we expect that entanglement flows across walls even though operators cannot escape. We analytically show that across localised subsystems, entanglement may only increase by a finite amount at all times since walls host a Pauli subgroup in their intermediate region. We verify our results in finite-size numerical simulations.

\subsection{Walls bound entropy}

\subsubsection{The \texorpdfstring{$p=0$}{p=0} limit}

The Clifford nature of the $p=0$ model naturally suggests to consider the entanglement of stabiliser states as operator spreading is equivalent to entanglement spreading in that case. An $n$-qubit stabiliser state $\ket{\psi}$ is one which fulfils $g_i \ket{\psi} = \ket{\psi}$ for the generators of an Abelian subgroup $\{g_i\}$ of the $n$-qubit Pauli group. A pure state at all times is represented by $n$ independent Pauli strings under multiplication. At time $t$, we write the reduced density matrix of a bi-partite state  $\rho_{L}= \mathbf{Tr}_{{R}}[U^t\ket{\psi}\bra{\psi}U^{-t} ]$. The Von-Neumann entropy is defined as $S^{\text{VN}}(\rho_L) = -\mathbf{Tr}[\rho_L \log_2 \rho_L]$. It was previously shown that stabiliser entanglement takes the following form \cite{Nahum2017}:
\begin{equation}
    S^{\text{VN}}(\rho_L) = |L| - N_L, 
\end{equation}

\noindent where $N_L$ is the number of independent stabilisers in the reduced set $\mathbf{Tr}_{R}[\{ g_i \}]$ traced over subsystem $R$ and $|L|$ is the complement subsystem's size.

Consider the circuit instance on Figure \ref{fig:left_right_subspaces} and take a bi-partition right to the wall, ie. between the central qubit and the next one to the right. By virtue of \Cref{lemma:conserved_pauli}, there exists a subspace of $l$-qubit operators for $x\leq l$ on the chain which is closed under the Floquet evolution of all times. This ensures that one can always choose $l$ independent stabilisers from this subspace, that is, $l\leq N_L$. For our subsystem $L$, $|L| = l+1$ giving an upper bound $N_L \leq l+1$. This yields an entropy bound for all stabiliser states across the wall to be $0\leq S^{\text{VN}}(\rho_L) \leq 1 $ at all times. Therefore, disorder instances of the model containing $1$-walls, entanglement is bounded as disjoint fragments may only share entropy through the wall gates (that act as a \enquote*{bottleneck}). It is worth noting that the above calculation predicts the existence of states which remain unentangled across the wall at all times. In particular, these will be stabiliser states where there are $(l+1)$ independent stabilisers in $L$ for all times which means that the stabilisers of the state are all elements of the left-conserved subspace $\mathcal{L}$. For the prototypical $1$-wall made out of two $\CZ$ gates, such a state will be $\ket{0}^{\otimes 2l+1}$ which is stabilized by operators with a local $Z$ on each site.

\subsubsection{Average entanglement for \texorpdfstring{$p>0$}{p>0}}

We argue that the above bound persists across walls which have no perturbation on the central qubit even when $p>0$. Starting from a stabiliser state, the single-particle perturbations generate a superposition of stabilisers similarly to a branching process. If the peturbations act within the conserved subspace $\mathcal{L}$, all stabiliser generators across the branches of the evolved state are either $\{\mathds{1}, \sigma_c\}$ on the central qubit (upon tracing out the right subsystem). In fact, one can choose a generator set such that only one Pauli has non-trivial support on the central qubit for each stabiliser state. The entanglement spreading is then controlled by the spreading of this operator under the wall evolution which is determined whether its local support is $\sigma_c$ or not. Again this yields $0\leq S^{\text{VN}}(\rho_L) \leq 1$ at all times, even though the circuit evolution creates an exponentially large superposition of stabilisers. Therefore, in the $p<1$ model, there are are still weakly entangled bi-partitions. 

Consider now the average entanglement $\langle S^{VN}(\rho_L) \rangle$ of the circuit over the disordered Clifford ensemble. Although the earlier entanglement bound holds for any circuit instance with an unperturbed wall, one can make a tighter estimate for the ensemble average. Assuming that a sufficiently random ensemble on the single-particle rotations creates an equi-partition of stabiliser states across samples, $1/3$ of which will have $\sigma_c$ on the central qubit (and therefore remains unentangled) and the rest will share a single unit of entropy. We thus expect that the steady-state (\enquote*{thermalised}) entanglement will follow $\langle S^{\mathrm{VN}}(\rho_L) \rangle \leq 2/3$.

 We only consider $1$-walls in our analysis as higher order walls are more prone to being destabilised even in the $p<1$ model as \textit{any} perturbation breaks the localisation conditions when it acts on the wall's intermediate qubits. By a similar bi-partition of the circuit across a higher order $k$-wall, we expect a similar, but less stringent bound to hold on entanglement entropy as the inner subspace of the wall may have more independent stabilisers (e.g. $4$ in the case of a $2$-wall) which would give a looser bound although still rendering the fragments weakly entangled considering typical fragment sizes are on the order of tens of qubits when $p \ll 1$.

For bi-partitions within fragments, one expects that the random perturbations generate a chaotic state evolution and thus volume law entanglement with increasing subsystem size. For perturbations that de-stabilize walls, there exists an average timescale over which the fragment subspaces mix due to the rotations and therefore there is an early-time signature of localisation. This timescale depends on the probability distribution of perturbations in the following way. Considering a probability measure biased towards the identity will require several Floquet iterations until the perturbations take effect so that significant operator support is transferred onto delocalised Pauli operators (as measured, for example, in operator norm). Our choice of Haar-random perturbations create an equipartition of Paulis therefore leading to rapid mixing between subspaces and therefore shortening the transient timescale of mixing between localised/delocalised subspaces. 

 \subsection{Entanglement signature of \texorpdfstring{$1$}{1}-walls}

\subsubsection{Circuit regions \label{sec:entanglement_numerics}}

In this section, we present exact diagonalisation results for finite-size circuits to study the spreading of entanglement across wall with and without perturbations. The lack of symmetries in random unitaries limits the numerically attainable system sizes in exact diagonalisation to small qubit numbers. We resort to exact methods to be able to probe the long-time thermalisation dynamics as well as spectral statistics of the circuit. Based on the localisation length estimate in Equation (\ref{eq:loc_length}), randomly sampling the non-separable Clifford group would only show localisation on $n \sim 40$, thus one cannot simulate the model exactly without biasing the sampling. To circumvent this, we instead focus on different regions expected to occur in the large system limit, as detailed below. We take the finite-size circuit of Figure \ref{fig:left_right_subspaces} in the following setups:

\begin{itemize}
    \item \textit{Localisation}. A randomly sampled $1$-wall is placed at half-chain such that the centre receives no-perturbation. With probability $p<1$, the remaining qubits receive a random rotation perturbation. For $p=1$, all qubits except the centre receives a perturbation in the spirit of the $p\rightarrow1^-$ limit.
    \item \textit{Perturbed wall}. A randomly sampled $1$-wall is placed at half-chain such that the centre receives a perturbation in all realisations. For probability $p>0$, the remaining qubits receive a random rotation perturbation. For $p=0$, no qubits except the centre receive a perturbation in the spirit of the $p\rightarrow 0^+$ limit.
    \item \textit{Transport}. None of the neighbouring Clifford pairs are $1$-walls. Rotations are applied with probability $p$ throughout the chain. These instances represent the \enquote*{bulk} system away from wall edges that equilibrate at the maximum rate.
\end{itemize}

\noindent Note that for an even number of qubits $n = 2l$, we take an equal bi-partition of the circuit while for $n = 2l+1$ we let the left subsystem to have $l+1$ qubits. The brickwork is truncated by applying single-qubit Cliffords at the edges. We calculate the ensemble average bi-partite entanglement $\langle S^{\text{VN}} \rangle $ of the evolved states $\ket{\psi(t)} = U^t \ket{0}^{\otimes n}$  over the disorder realisations. We define the sample-to-sample fluctuations of entropy, 
\begin{equation} \delta S^{\text{VN}} = \sqrt{\langle (S^{\text{VN}})^2 \rangle - \langle S^{\text{VN}} \rangle ^2},
\end{equation}
\noindent to characterise the entanglement distribution more accurately and to probe the localised subspaces in the dynamics. We focus on the cases $p=0.5$ and $p=1$ as the limited system sizes will not allow us to meaningfully differentiate dynamical features for instances where the probabilities of perturbations are close together.
\begin{figure}
    \centering
    \includegraphics[width=\linewidth]{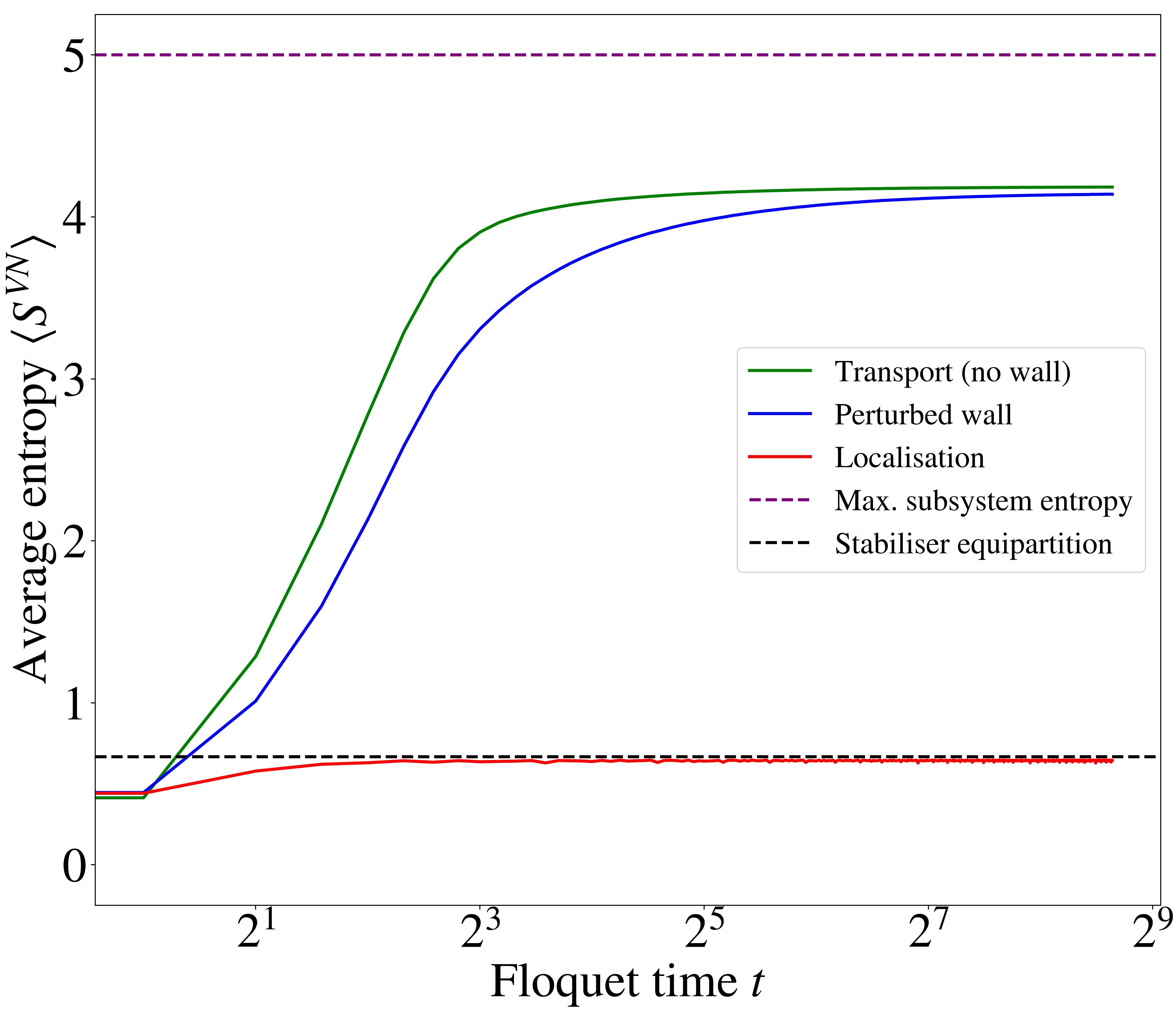}
    \includegraphics[width=\linewidth]{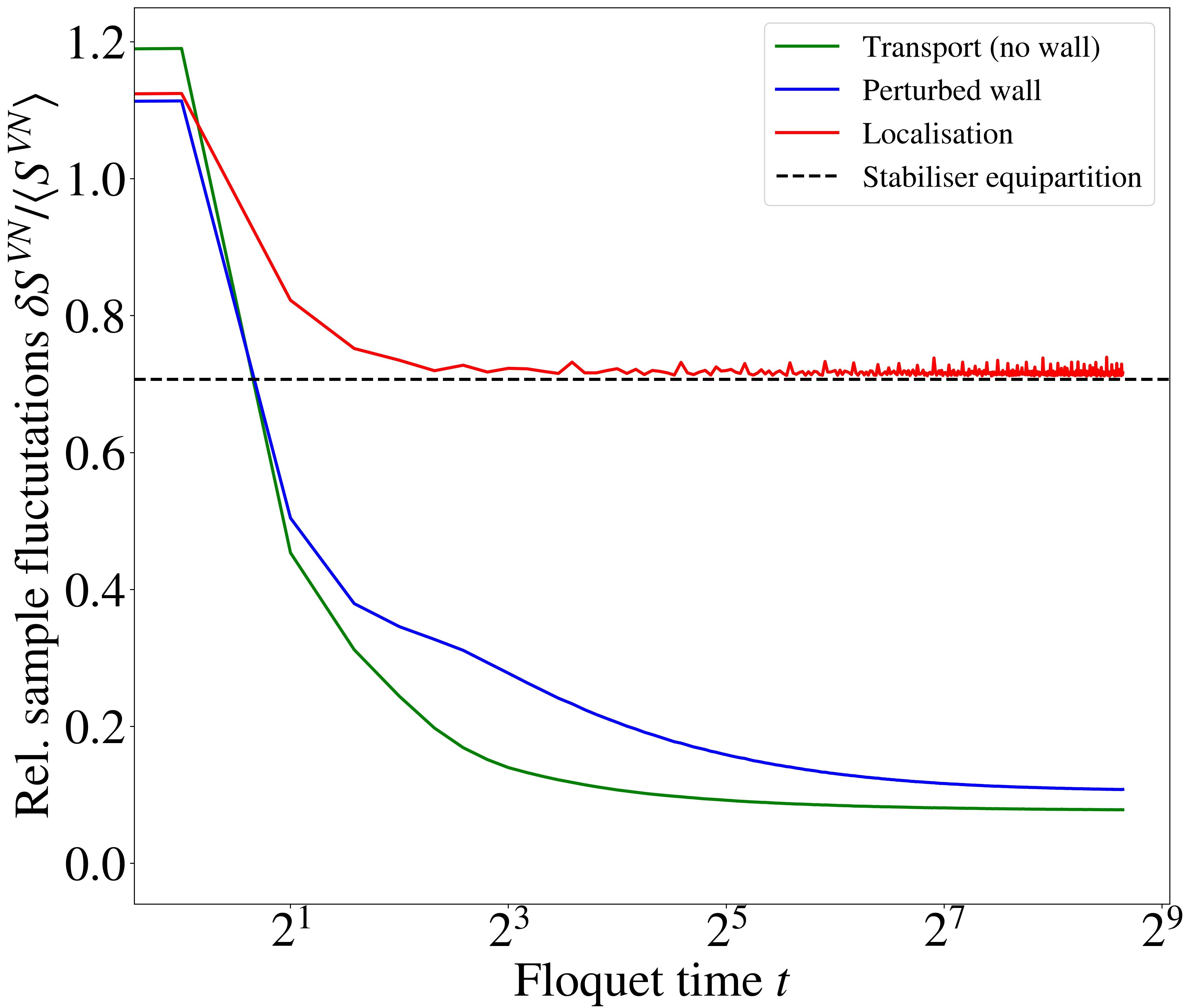}

    \caption[width=\textwidth]{\textit{(Top)} Average entanglement entropy across $1$-walls in an $n=10$ qubit chain for $p=0.5$, averaged over $10^4$ disorder realisations. The $1$-wall in this circuit limits the entropy growth to a constant amount for localised instances in accordance with the theoretical bound. Under rotation perturbation, stabiliser states escape the localised subspace of the wall and share near-maximal entanglement across the wall. \textit{(Bottom)} Large fluctuations of entropy probe localisation for unperturbed wall reminiscent from the discrete entanglement production due to Cliffords.
    }
    \label{fig:L11_entropies}
\end{figure}
\begin{figure}
    \centering
    \includegraphics[width=1\linewidth]{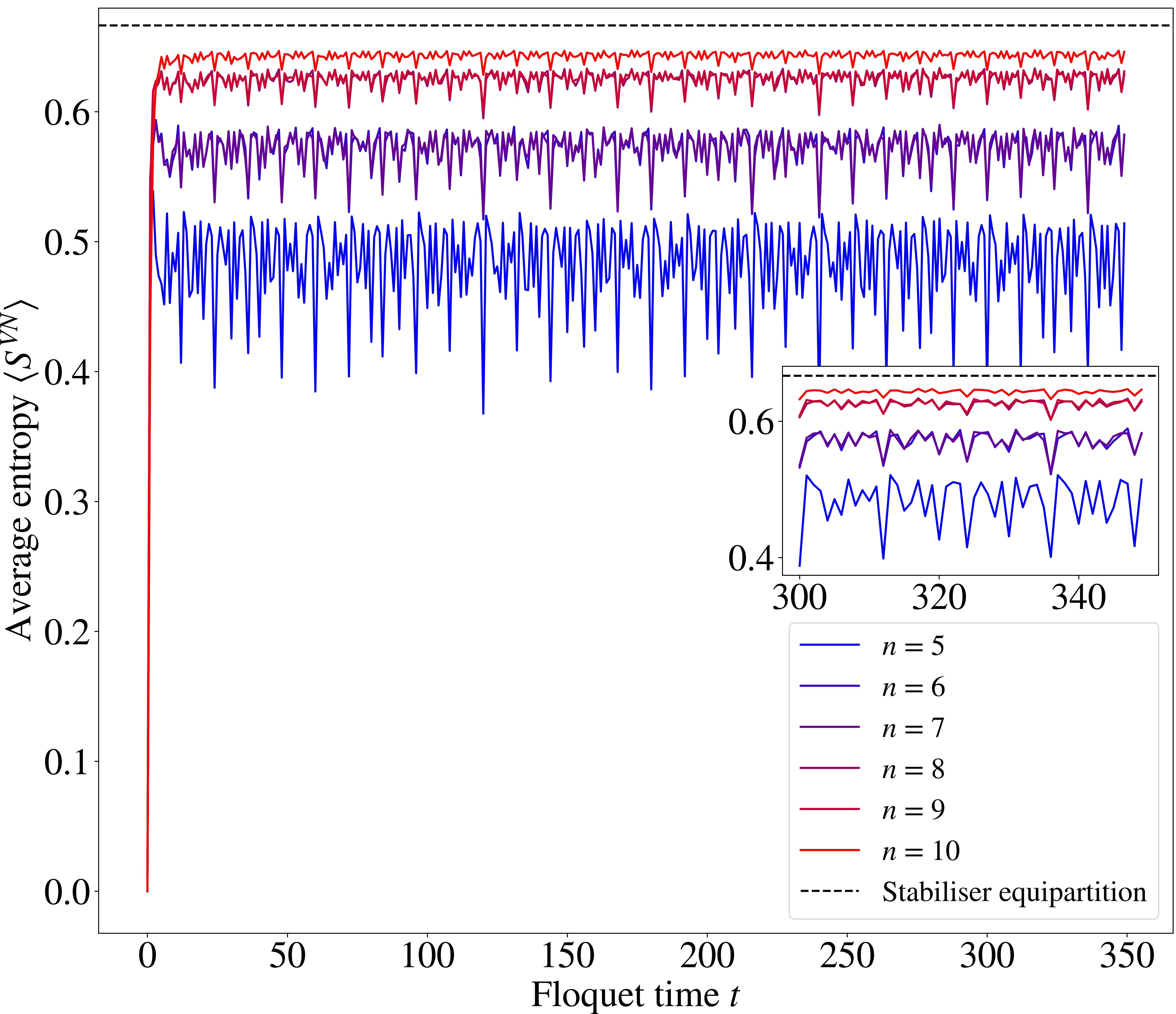}
    \includegraphics[width=1\linewidth]{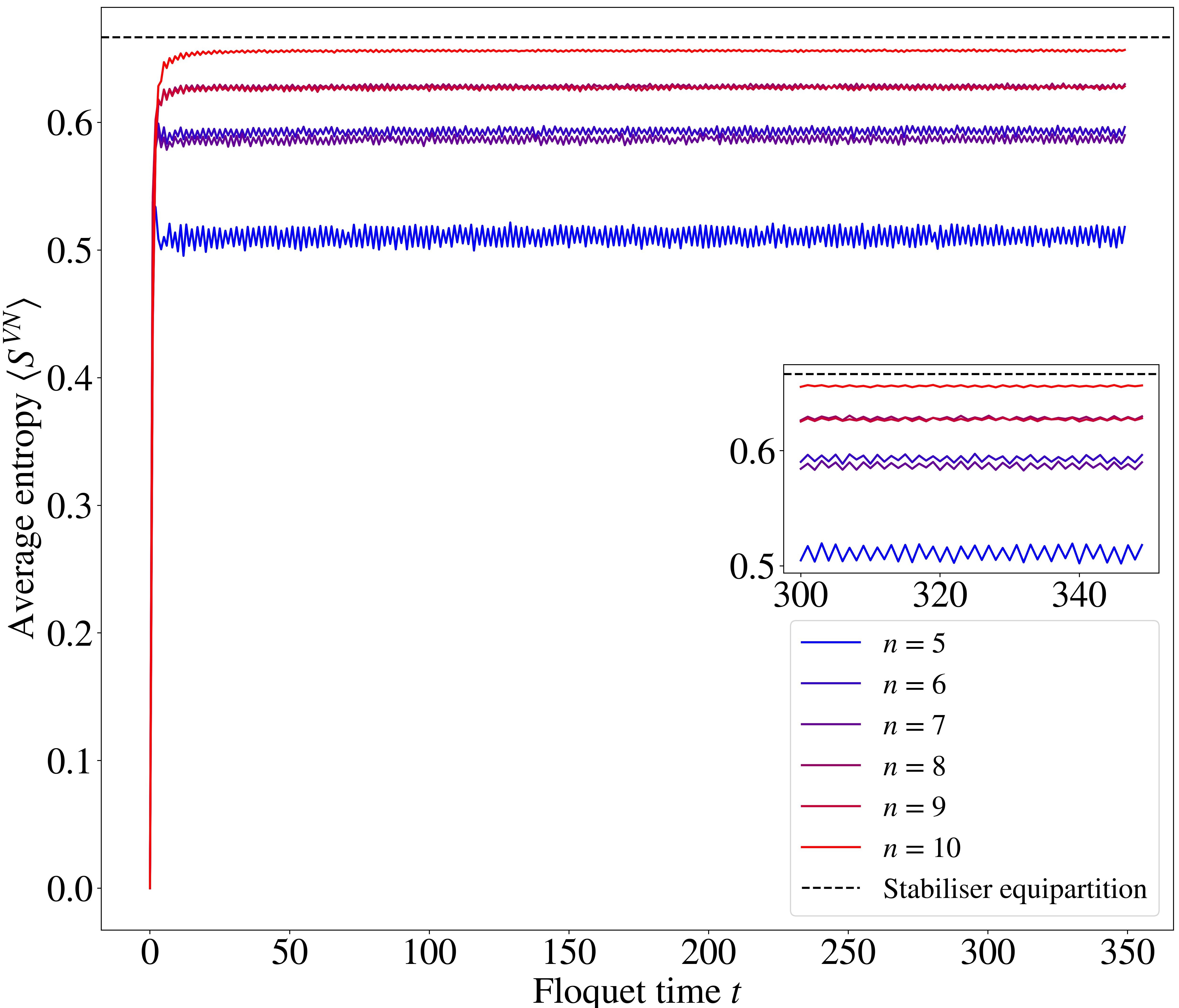}
    \caption[width=\textwidth]{Scaling of entanglement entropy for an $n=10$ localizing qubit chain for $p=0.5$ \textit{(top)} and $p=1$ \textit{(bottom)}, averaged over $10^4$ disorder realisations. As $n$ increases, the entropy approaches the equilibrium value expected for an equi-partition of stabiliser states, two-thirds of which generate maximal entropy (of one bit) while states in the wall's localised subspace have no entanglement.
    }
    \label{fig:entropy_scaling}
\end{figure}
\subsubsection{Entropy growth across the wall}

We show the time-evolution of average entanglement entropy and entropy fluctuations in Figure \ref{fig:L11_entropies} for $p=0.5$. For an unperturbed wall, the bound based on stabiliser equipartition is corroborated reflecting the fact that circuit disorder generate walls with $X, Y, Z$ conserved subspaces in equal numbers on average and therefore the stabilisers of the initial states anti-commute with the conserved Pauli in $2/3$ of the realisations. Clearly this is consistent with the unit entropy bound for any disorder realisation presented in the previous section.
In the transport limit without any walls, the entropy initially grows then saturates due to the finite system size. Although the growth rate is hard to accurately extract due to finite size effects, we expect at maximum ballistic scaling $S \sim t$ to hold for early times due to the chaotic evolution respecting the locality of the circuit. 
Notably, the entropy does not saturate to the maximum subsystem entropy as one would expect from volume-law scaling, although it is consistently bounded by the Page-corrected entanglement entropy of uniform random states on the Bloch sphere. The lack of maximum entropy is arising from the limited randomness of our model indicative that the circuit unitary may only approximate a Haar-random unitary.

For the case of perturbed walls, the growth rate of entropy is reduced from the instances without walls. This is due to the fact that delocalisation is not instantenous, ie. several iterations of the Haar-random perturbation are needed until an appreciable number of operators escape the localised subspace. Using a different measure for the single-particle rotations, the timescale for coupling localised subspaces can be elongated, in essence by biasing the probability distributions towards identity, that we have verified (results not shown). As $t \rightarrow \infty$, the average entropy of perturbed wall instances approaches that of the transport instances indicating that at late times, the effect of fragmentation vanishes as subspaces mix. We expect the gap between these curves to be exponentially small with increasing system size.
\subsubsection{Fluctuations in the entropy distribution}
 Localisation is also observable from the large sample fluctuations of the entropy on Figure \ref{fig:L11_entropies}. For an unperturbed wall, entropy may only be generated across fragments through the Clifford wall gates, therefore the entropy of the state will be between $0$ and $1$ therefore producing large fluctuations in the disorder ensemble. 
 The resulting fluctuations are therefore reminiscent of a binomial stochastic process with success probability $p_s = 2/3$. This is the expected fraction of instances where the wall charge is not $Z$ and therefore non-zero entropy is produced with the initial stabiliser state. We expect the relative fluctuations to be \begin{equation} \frac{\delta S^{\mathrm{VN}}}{\left \langle S^{\mathrm{VN}} \right \rangle}
  \approx \sqrt{\frac{1-p_s}{p_s} }= \frac{\sqrt{2}}{2}.
 \end{equation}

\noindent This value is shown on Figure \ref{fig:L11_entropies} for $\delta S^{\mathrm{VN}}$ which is corroborated by the numerical results. The deviation from the analytical expectation arises from imperfect mixing of stabiliser states for $p=0.5$ such that an anomalously low number of non-commuting stabilisers are produced on the central qubit. This effect may also arise from rare disorder instances where perturbations are weaker than typical. 

 The plateau of fluctuations in the localised limit is in sharp contrast to the perturbed case and the transport distribution where the exponentially large superposition of stabilisers smoothens out fluctuations without the restriction of entropy to the previous two values. In these circuit instances, the fluctutations decay to a constant set out by the distribution of perturbations. For $p=0.5$, the fluctuations don't reach zero indicating that the Clifford-dynamics is still observable in this limit. For $p=1$, fluctuations decay to zero within the timescales considered $t =  \dim \mathcal{H} /2$ (results not shown).

\subsubsection{Convergence to the localised entropy bound}
We also performed a scaling analysis of reaching the theoretical bound on average entanglement for $p=0.5$ and $p=1$ shown on Figure \ref{fig:entropy_scaling}. We observe the approach to the theoretical bound in steps of two in the system size. This finite size effect is due to our selection of bi-partite subsystems. As the minimum subsystem size upper bounds the Von-Neumann entropy, subsystems of $l+1$ and $l$ qubits have close together entanglement, e.g. for $n=6, 7$. The analytical bound is corroborated by all our results supporting the stabiliser equipartition hypothesis generated by the random perturbations. By increasing the system size, steady-state entanglement approaches the bound exponentially with factors of $1/2$. With the addition of a qubit, the size of Pauli space grows by a factor of $4$ but dynamics has to respect the wall constraint on the centre qubit which restricts the growth of effective Pauli space to a factor of $2$. 

 We remark that even with high disorder averaging, time-to-time fluctuations are observable in the steady state entanglement with a period of $2$ Floquet cycles, across all system sizes considered with decreasing amplitude as $n$ increases. This effect we attribute to rare disorder instances in which the gates adjacent to the wall are anomalously close to identity (e.g. additional $\mathrm{CZ}$-like gates with near-identity rotations) that generate oscillations of operators in the vicinity of the wall.

 \subsubsection{Limitations}

We didn't control the formation of higher order walls in the current numerical results. As a result, here might be circuit instances which contain localised subspaces we haven't explicitly accounted for. Such rare instances decrease the average entropy, although we expect this to be a secondary effect.
 
Accounting for higher order $k$-walls in the circuit will not change the results of this section qualitatively, ie. that circuit regions are weakly entangled across fragment boundaries. For extended walls, one has to account for a larger invariant subspace when considering entropy which will lead to less stringent bounds on stabiliser entanglement and a different fluctuation profile in the distribution. Nevertheless, $k$-walls may have a richer entanglement dynamics for bi-partitions \emph{within} the wall subspace in the presence of the constraint on Pauli evolution.

The numerical results on entanglement entropy support the view that the circuit disorder leads to weakly entangled fragmentation of the qubit chain which is manifest in the obstruction of information flow across wall configurations (for $p<1$). We emphasise that our model is not separable across any bi-partition (which would imply that states supported across fragments would remain unentangled) yet the localisation of operators corresponds to limited entanglement spreading, even with non-Clifford perturbations.  

The existence of localised subspaces will lead to area-law scaling of entanglement for $p<1$ in the following sense. By using  the size of a finite subystem as an entropy constraint, one may place a probabilistic upper bound on the average entanglement by the average fragment size of the model (see \Cref{ssec:localisation_length}) that is tunable by $p$. In exponentially rare instances with increasing system size (e.g. sampling only dual-unitary Cliffords in the circuit), this bound may be violated. 

For $p=1$ model, we expect volume law scaling that we probed through the perturbed and no-wall instances. In these, finite-size effects are prohibitive in reaching the scaling regime in which volume-law could be argued which is why we have omitted showing these numerical results. To understand the thermalised entanglement of the model in this limit, one would need to look at the interplay of dual-unitary classes in Clifford with the $\mathrm{CZ}$-class more quantitatively. 

\section{Spectral probes of chaos \label{sec:spectral_chaos}}
In order to understand the dynamics of the localised fragments beyond the limitations of looking at entropy distributions, this section is dedicated to the spectral properties of the disordered Floquet model. Spectral probes are a useful way of quantifying the (ergodic) properties of equillibrium ensembles as they converge quickly in the number of qubits ~\cite{Wigner1967, Mehta2004}. We show that the localised fragments evolve chaotically under an approximate Haar-random ensemble and study the ramifications of localisation for the form factor fluctuations and the shape of the ramp. We take the Haar-ensemble as a working definition of ergodicity in unitary circuits. Brickwork random circuits of two-qubit Haar-random gates have been recently considered as minimal model to form approximations of the Haar-ensemble in polynomial depth \cite{Brandao2016} as well as (non-periodic) random Clifford circuits with non-Clifford perturbations to generate Haar-random dynamics. 

Our model, however, has several barriers to reaching a thermal ensemble in this sense. First, the brickwork design of Clifford gates greatly limits how well we may approximate random states even without Floquet symmetry -- the $n$-qubit Clifford group is able to form up to an exact unitary $3$-design (ie. it can produce a discrete ensemble which have up to moments equal to the Haar-ensemble on three replicas) \cite{Huangjun2017, Haferkamp2022}. In the perturbed model, although this barrier is lifted, the Floquet nature of our circuit provably prevents the convergence of the Haar-ensemble if one averages over time as the recurrence of quasi-energy states is incompatible with the random matrix spectra of the CUE ensemble \cite{Roberts2017, Pilatowsky-Cameo2024_CHSE_CUE}. In our calculations, we ignore this subtlety and focus instead on the spectral correlations within the first, randomly sampled, layer of the circuit -- for which it is meaningful to quantify the emergence of chaos by comparing with CUE.

\subsection{Spectral form factor \& fluctuations}

\begin{figure*}
    \centering
    \includegraphics[width=0.49\textwidth]{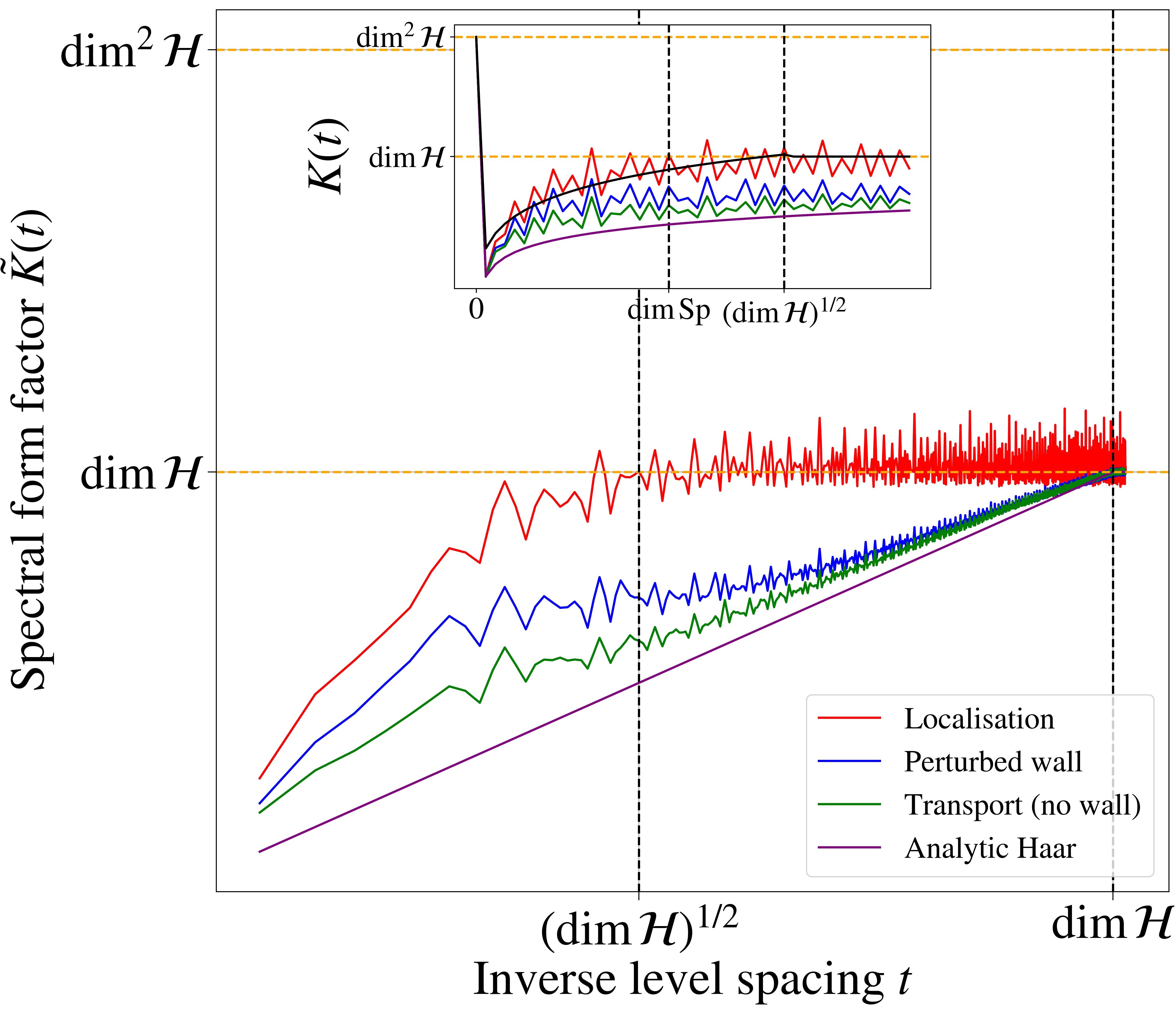}
    \includegraphics[width=0.49\textwidth]{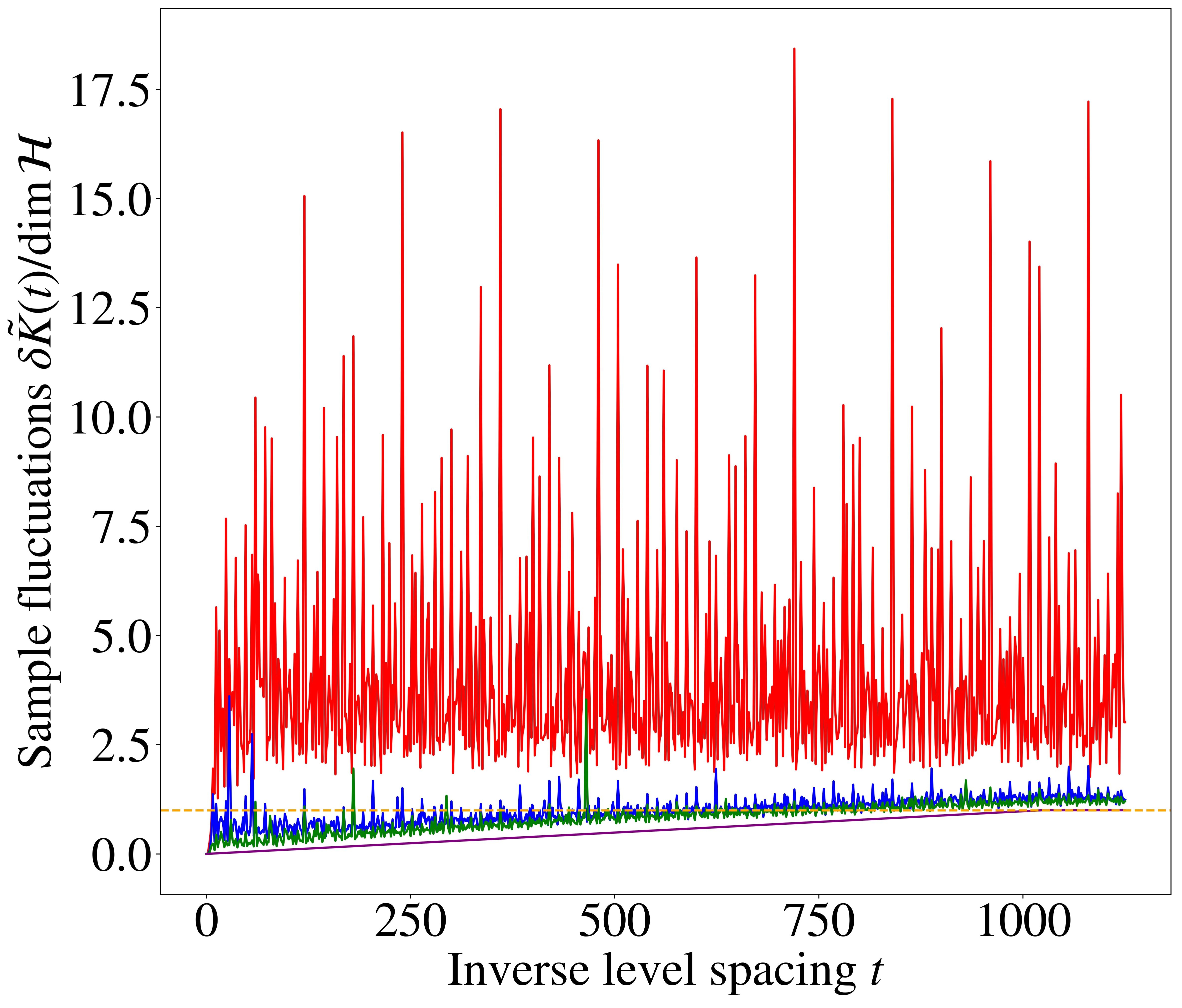}
    \includegraphics[width=0.49\textwidth]{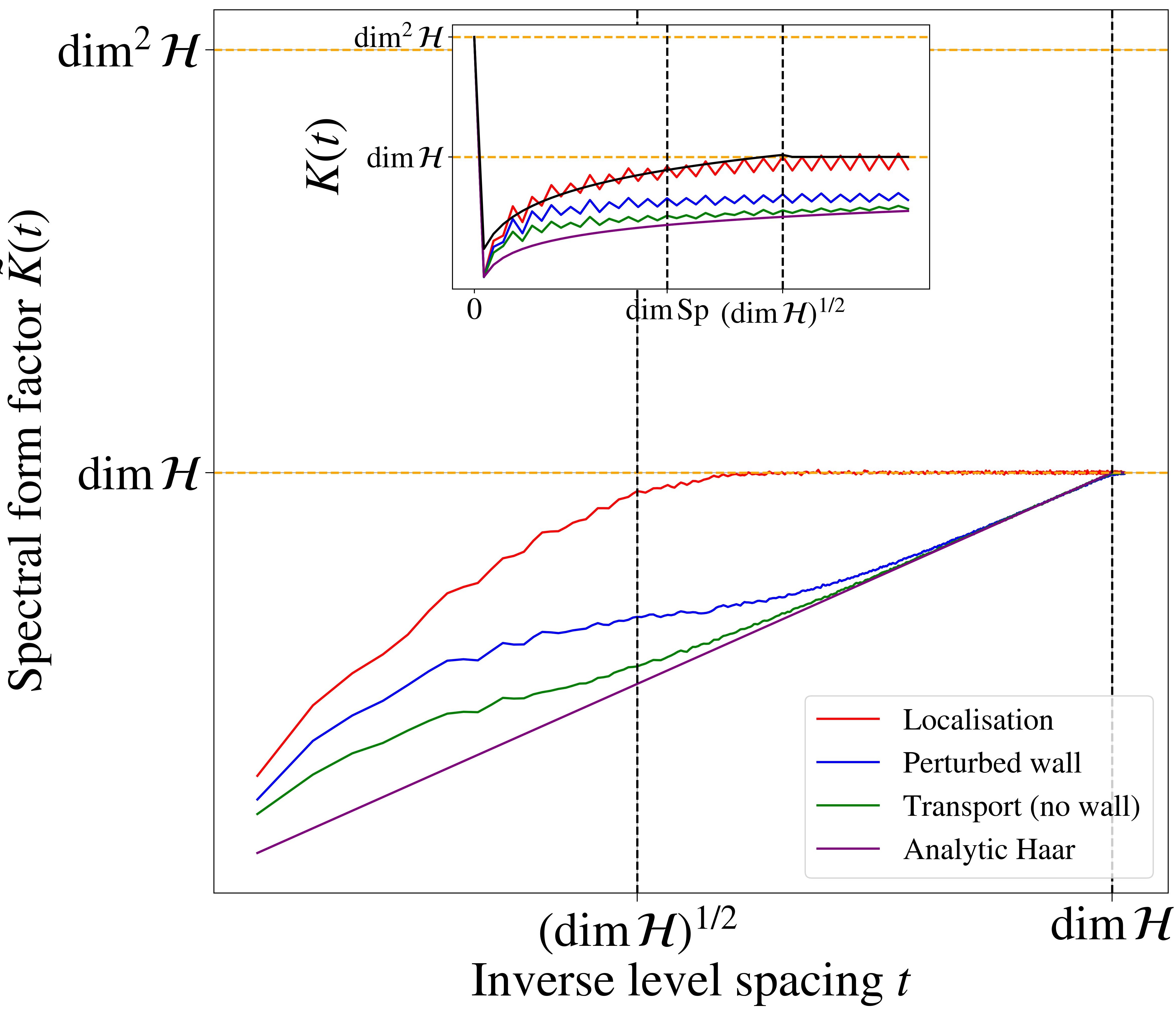}
    \includegraphics[width=0.49\textwidth]{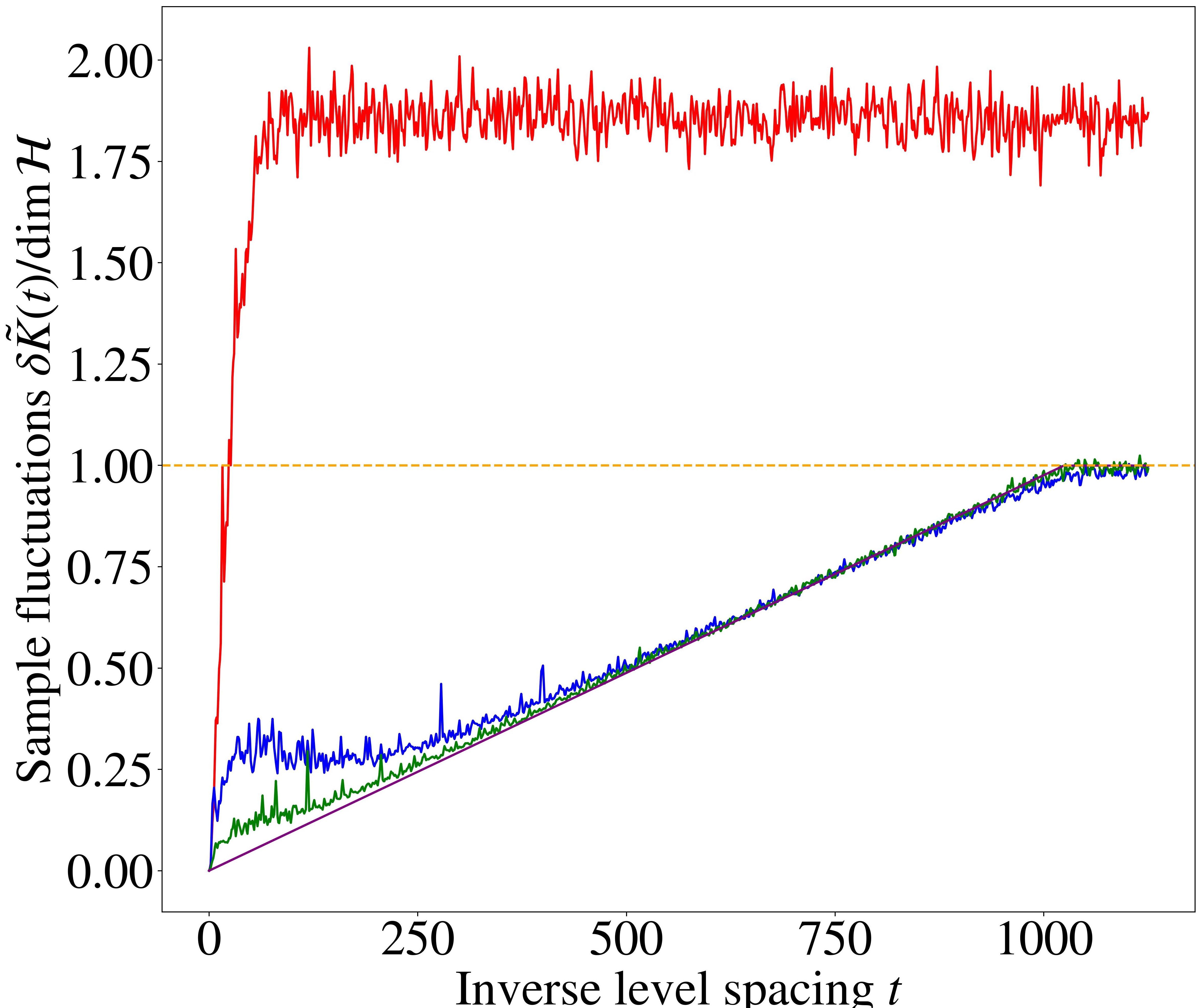}
    \caption{\small Smeared spectral form factor $\Tilde{K}(t)$ \textit{(left)} and sample fluctuations $\delta  \Tilde{K}(t)$ \textit{(right)} for $n=10$ qubits in the $p = 0.5$ model \textit{(top)} and $p=1$ (\textit{bottom}). Main figures are log-log scale while the left insets are semi-log scale in $\Tilde{K}(t)$ showing early time evolution of the form factor without Gaussian smearing. Data averaged over $10^4$ realisations and shown with Gaussian smearing with time window $\Delta t = 1$. Inset shows early time ramp of $K(t)$. Comparison is shown with the analytical curves from Equations (\ref{eq:sff_haar_avg}) and (\ref{eq:sff_haar_std}). The \textit{Fragmentation} curve shows the ansatz $t^2 + 4t$ (see main text) while $\dim \mathrm{Sp}=2n$ is the dimension of the symplectic phase space.
    }
    \label{fig:SFF}
\end{figure*}

 Our main probe will be the \textit{spectral form factor}, that we define for an ensemble of unitaries $ \mu$ using the $t$-moments: $K(t) = \langle |\mathbf{Tr}[U^t]|^2\rangle_{\mu}$ where $\langle \cdot \rangle$ denotes an ensemble average. By writing the uni-modular spectrum of  $U\in \mu$ as $\{ e^{i\theta_a}\}$ where $\theta_a \in \mathds{R}$ are real angles, we find:
\begin{equation}
    K(t) = \left \langle \sum_{a, b = 0}^{\dim \mathcal{H}}\exp(i(\theta_a - \theta_b)t)\right \rangle_{\mu}.
    \label{eq:sff_from_spectrum}
\end{equation}
\noindent As the above form suggests,$K(t)$ probes spectral repulsion in an quasi-energy window $\Delta \theta = 1/t$. Another useful representation of the form factor is expressed as the average auto-correlation function of Pauli strings by inserting a complete operator basis in the trace: $K(t) = \left \langle \sum_{P \in \mathcal{\Bar{P}}} \mathbf{Tr}[U^t PU^{-t} P] \right \rangle_{\mu}$. This representation will become particularly useful in understanding the time-to-time fluctuations of $K(t)$ in Clifford evolution. We also calculate the sample-to-sample fluctuations of the form factor defined as: $\delta K(t) = \sqrt{\left \langle |\mathbf{Tr}[U^t]|^4 \right\rangle_{\mu} - K^2(t)}, 
    \label{eq:sff_sigma}$
 that probes $4$-point spectral correlations in the ensemble (repulsion between pairs of eigenvalue pairs). For the circular unitary ensemble, these quantities can be found analytically \cite{Haake1999}:
\begin{align}
    K_{\text{CUE}}(t) &=      \begin{cases} D^2 & \text{if } t = 0 \\
      t & \text{if } t \leq D\\
      D & \text{if } t > D
    \end{cases} \label{eq:sff_haar_avg}   \\
    \delta K_{\text{CUE}}(t) &= \begin{cases}
        t &  \text{ if } t < D/2 \\
        \sqrt{t^2 - 2t + D} & \text{ if } D/2 \leq t < D   \\
        \sqrt{D^2-D}  & \text{ if } D \leq t\label{eq:sff_haar_std}
    \end{cases}
\end{align}
where $D = \dim \mathcal{H}$.
From the above expressions, the \enquote*{dip} of the form factor occurs at $t=1$ indicating that the evolution is characterised by a universal RMT ensemble at all times. Additionally, the Heisenberg time occurs at $\tau = \dim \mathcal{H}$, indicating the timescale over which average level repulsion becomes manifest.

\subsection{Emergence of CUE}

We calculate the spectral form factor and its sample fluctuations for our circuit shown in \Cref{fig:SFF} for $p\,{=}\,0.5$ and $p\,{=}\,1$.
We use the three different circuit setups which probe the spectral behaviour in our circuit between fragments with/without unperturbed walls (\emph{Localisation/Perturbed wall}) and within fragments without walls (\emph{Transport}) as detailed in Section \ref{sec:entanglement_numerics}.
Similarly to Gaussian random models, $K(t)$ displays a dip-ramp-plateau structure although the unitary circuit has a sharp dip at $t=0$. 

In \cite{Farshi2022_2D}, it was found that a Clifford ergodicity is signalled by the plateau time occurring at the dimension of the phase space: $\dim \mathrm{Sp} = 2n$ indicating chaotic dynamics in phase space which also implies that the ramp is exponential in time. In the presence of an unperturbed wall, however, localisation prevents the emergence Clifford chaos due to the existence of fragments. All simulation setups deviate from an exponential ramp shown on the early-time form factor evolution with significant deviations from the linear ramp expected for the many-body chaotic CUE ensemble, too.
 
For the simulation results shown, one expects the ramp to be quadratic in time to leading order which we argue below. Due to fragmentation, we may decompose the Pauli space as a direct sum of localised spaces:
 
\begin{equation}
    \mathcal{\Bar{P}}_n = \mathcal{L} \oplus \mathcal{R} \oplus \left ( \mathcal{L} \mathcal{R} \right) \oplus \left (\mathds{1} \otimes  \{\mathds{1}, \sigma_c\} \otimes \mathds{1}\right) \oplus \mathcal{P}_{\mathrm{rest}},
\end{equation}

\noindent where we have used the left/right invariant subspaces from Equations (\ref{eq:left_inv_subspaces}) and (\ref{eq:right_inv_subspaces}) and $\mathcal{P}_{\mathrm{rest}}$ denotes the remainder of the Pauli space not explicitly shown. The spectral form factor of a product space is the product of form factors within the individual subspaces. Assuming a large density of perturbations, $K(t) \sim t$ within a fragment (at least for early times), the form factor takes a dominant contribution $K(t) \sim t^2$ that modifies the plateau time $\tau_{\mathrm{pl}} \sim \sqrt{\dim \mathcal{H}}$. Accounting for the remaining subspaces, we plot the following ansatz $K(t) = t^2 + 4t$ which is in qualitative agreement with the trend of $K(t)$ for early times, corroborating the quadratic ramp. 

Once a wall is broken, our numerical results show a crossover between a super-linear and linear ramp. This indicates that there is a finite timescale over which fragments couple and the corresponding Pauli subspaces mix. This is likely due to the sampling measure of our perturbations. Deviations from linear ramp are also observable for circuit instances without $1$-walls indicating that the circuit might have higher order walls or invariant subspaces we haven't explicitly constructed. With the existence of multiple, sufficiently chaotic, fragments within the circuit instance, we expect the form factor to be polynomially dependent on time with the degree calculated from the largest product subspace (ie. the number of chaotic fragments). 

The prominent late-time fluctuations of $K(t)$ in time also signal the existence of localisation. This is reminiscent from Clifford dynamics ($p=0$) where non-zero contributions to $K(t)$ come from the recurrences of Pauli operators in the dynamics driving large fluctuations due to the broad distribution of recurrence times for the ensemble of Pauli strings. For $p>0$, this is reduced to approximate recurrences (finite overlap of $U^t P U^{-t}$ and $P$ under the Hilbert-Schmidt norm) although the existence of fragments enhances the rate of recurrences for operators within them. The large temporal fluctuations in form factor seen previously in numerical results in \cite{Farshi2022_2D} are thus smoothened in the perturbed model, although they still signal the lack of phase space chaos of the brickwork Clifford model and the effect of localisation.

We also illustrate the non-ergodicity of the perturbed model inherited from Clifford circuits through local time-averaging. In particular, in the localised model, neither a disorder average or a time-average can reduce a localised Clifford system to the ensemble-average. To show this, we performed Gaussian smearing in a finite time window $\Delta t$ via:

\begin{equation}
    \Tilde{K}(t) = \frac{1}{2}K(t) + \frac{1}{2}\sum_{t' = t-\Delta t}^{t+\Delta t} K(t') N(t'), 
\end{equation}

\noindent where $N(t')$ is a Gaussian random variable drawn with mean $K(t)$ and variance according to the spread of the set $\{ K(t'-\Delta t), ..., K(t'+ \Delta t) \}$. The smeared fluctuations stay consistently above the Haar-value for the localised instances and generally for the $p=0.5$ results. Although $\Tilde{K}(t)$ approaches $\dim \mathcal{H}$ for $p=0.5$, the deviation from the Haar curve in the fluctuations signal that the ensemble is spectrally distinct from CUE in higher order level correlations.  A more detailed quantification of how well the perturbed Clifford ensemble can approximate the CUE ensemble (e.g. through calculating frame potentials \cite{Roberts2017}), we leave for future work. 

In the instances where the wall is either absent or perturbed, $\Tilde{K}(t)$ is consistently above the Haar curve for $p=0.5$. We attribute this to the fact that the limited single-qubit randomness insufficient to approximate CUE in first order level correlations. For $p=1$, $\Tilde{K}(t)$ approaches the Haar curve for sufficiently late times that we may interpret as an emergent Thouless timescale of reaching a chaotic thermal state beyond which the effects of localised subspaces are ignorable. In these cases, the Gaussian smearing is able to decrease the form factor fluctuations at late times to the ensemble average in sharp contrast to the localised case. The numerical results support that the localised phase in our model is a non-ergodic quantum phase while the uniformly perturbed model induces not only an operator delocalisation transition but also the restoration of unitary ergodicity of the circuit ensemble.

\section{Discussion \& Outlook \label{sec:discussion}}

\begin{figure}
    \centering
    \includegraphics[width=\linewidth]{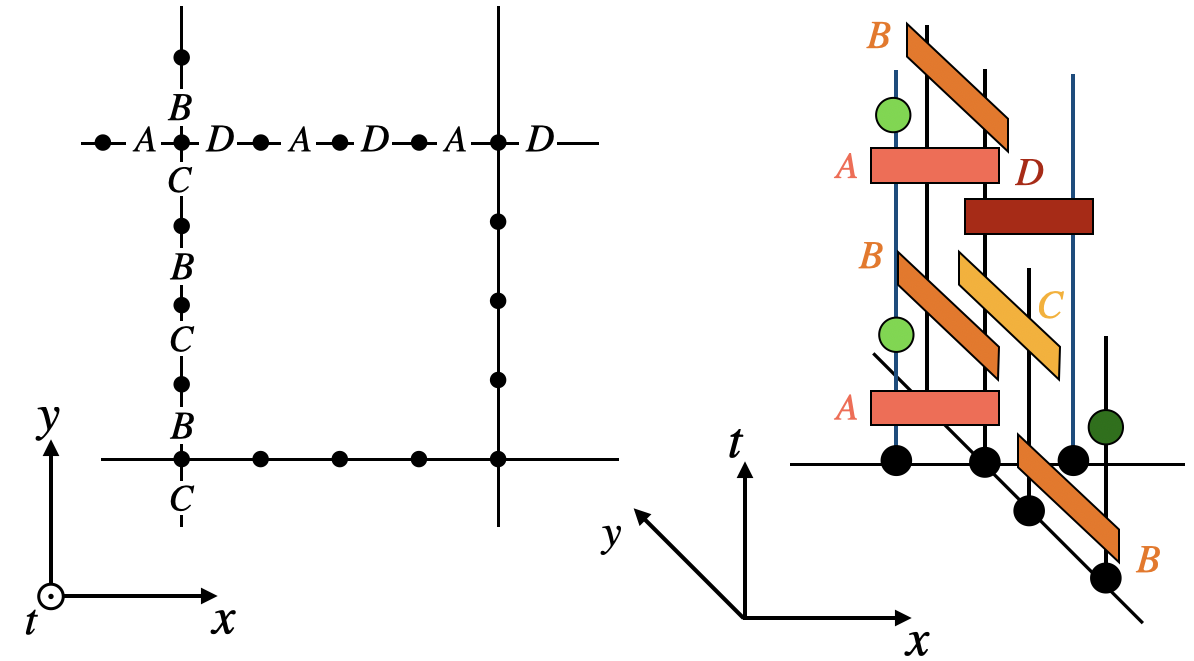}
    \caption{\small 
    We consider a two-dimensional generalisation of the model on a square lattice with $m$ qubits on each edge ($m=4$ in the figure). The Floquet operator is composed of four layers (A,B,C,D) of two-qubit Clifford gates. Hence, the dynamics within each edge is identical to that of the one-dimensional model, where walls can appear with some probability that depends on $m$. This gives rise to a percolation model which produces ergodic or localising dynamics depending on $m$. As in the one-dimensional model, walls can be destroyed by the inclusion of non-Clifford perturbations, represented by circles in the figure.
    }
    \label{fig:2D_extension}
\end{figure}

To conclude, we have constructed an interacting random Floquet-Clifford circuit exhibiting operator localisation and studied its stability against generic unitary perturbations. The Floquet symmetry permits strict localisation of the support of operators, even in the interacting case. This is due to the interplay between the locally interacting Clifford circuit controlling the spreading of local operators and the circuit layer of perturbations that scrambles local information within fragments. We formalised the mechanism of operator localisation due to the formation of shallow $k$-walls in the circuit. Furthermore, we showed that these typically lead to emergent conserved charges in the disorder dynamics by studying the invariant subspaces defined by the walls. Our model is thus an analytically tractable limit of interacting, non-integrable quantum dynamics that exhibits features of both Anderson and many-body localisation. 

By looking at the equivalence classes of the Clifford group, we found that localisation arises from the competition between the dual-unitary classes ($\SWAP$ and $\FSWAP$) that generate ballistic (maximal speed) operator spreading and the $\CZ$-class that probabilistically generate invariant subspaces which are spatially localised due to the locality of our circuit. From the $\CZ$-classes, we have also constructed emergent local integrals of motion in our circuit. In our model, localisation still permits large circuit segments to be ergodic within the localised regions. Our exact numerical results provide evidence that a sufficiently large random circuit fragments into weakly entangled regions separated by a sequence of gates forming a wall, which allow a finite but bounded entanglement to spread across the weak link. We have also shown that fragments evolve under an approximate Haar-ensemble and are restricted to reach the CUE ensemble due to the limited randomness of the model. 

There are two crucial properties of our model which are necessary for the fragmentation behaviour to be observed: time-periodicity of the circuit (so that arrested spreading persists up to arbitary times) and the possibility to create walls from the gateset used in the circuit. In the case of perturbed Clifford circuits, this is due to the finite probability of sampling the controlled unitary class. Although models without disorder in sampling can also exhibit walls, we opted for a more generic class of Clifford circuits in which the thermodynamic limit is non-ergodic. The general conditions of time-periodic circuits beyond the Clifford group to exhibit walls is an interesting question we leave for future work. We have not considered the generalisation of our model to (non-Clifford) two-qubit perturbations, because this would not change the localisation phenomenology.

Our model can be generalised to higher spin dimension using the Clifford group on qudits. Since this alternative model has analogous mathematical structure, we expect Floquet localisation to occur. At the same time, and due to the larger set of equivalence classes of qudit Clifford gates, we expect a richer variety of wall configurations.

We consider the generalisation of this model to two-dimensional lattices with different geometries, and show that the chosen geometry determines whether the model has ergodic or localising dynamics. We consider a square lattice with $m$ qubits on each edge, and Floquet operator composed of four layers of two-qubit Clifford gates as illustrated in \Cref{fig:2D_extension}. This interaction pattern ensures that the dynamics within every edge is identical to that of the one-dimensional model, so that walls can appear in some edges, giving rise to a percolation model. The wall probability depends on $m$, so by adjusting this parameter we can engineer a percolation phase transition. In the delocalised phase operators grow indefinitely, and in the fragmented phase operators get stuck within a finite region. The extreme case with $m=1$ is essentially the model introduced in \cite{Farshi2022_2D}, with minor differences, which is in the ergodic phase. Also, as in the one-dimensional model, the walls in the edges of the square lattice can be broken if we perturb the system with non-Clifford single-qubit gates.

Finally, our understanding of localisation rested on the analytical characterisation of $k$-wall gates in Clifford circuits but we emphasise that the notion of a $k$-wall is applicable more generally.
There are some important outstanding questions:
What is the most general class of brickwork circuits exhibiting arrested operator spreading?
What is the relationship between local conserved quantities and operator spreading in circuits?
Beyond circuits, we believe that dynamics of certain topologically ordered Hamiltonians (such as the toric code \cite{Kitaev2003} and $\mathrm{ZXZ}$-model) would also exhibit walls and the associated dynamical features due to the mutually commuting Hamiltonian terms. Understanding the interplay between topological properties and the signatures of dynamical localisation could establish the impact of this work's findings beyond the circuit setting.

\section*{Acknowledgements}

M.D.K.~would like to thank Nicholas Hunter-Jones, Anushya Chandran, Daniel Mark and Saul Pilatowsky-Cameo for insightful discussions. M.D.K.~is supported by the UK Engineering and Physical Sciences Research Council (EPSRC) [grant no.  EP/S021582/1].
C.J.T.~is supported by an EPSRC fellowship (Grant Ref. EP/W005743/1).
A.P.~is funded by the European Research Council (ERC) under the EU's Horizon 2020 research and innovation program via Grant Agreement No. 853368.
Statement of compliance with EPSRC policy framework on research data: this publication is theoretical work that does not require supporting research data.

\bibliography{export}

@article{Siewert2022,
  abstract  = {Decomposition of (finite-dimensional) operators in terms of orthogonal bases of matrices has been a standard method in quantum physics for decades. In recent years, it has become increasingly popular because of various methodologies applied in quantum information, such as the graph state formalism and the theory of quantum error correcting codes, but also due to the intensified research on the Bloch representation of quantum states. In this contribution we collect various interesting facts and identities that hold for finite-dimensional orthogonal matrix bases.},
  author    = {Jens Siewert},
  doi       = {10.1088/2399-6528/ac6f43},
  issn      = {23996528},
  issue     = {5},
  journal   = {J. Phys. Commun.},
  month     = {5},
  publisher = {Institute of Physics},
  title     = {{On orthogonal bases in the Hilbert-Schmidt space of matrices}},
  volume    = {6},
  year      = {2022},
  pages     = {055014}
}

@article{Horodecki2009_RMP,
  title     = {Quantum entanglement},
  author    = {Horodecki, Ryszard and Horodecki, Pawe\l{} and Horodecki, Micha\l{} and Horodecki, Karol},
  journal   = {Rev. Mod. Phys.},
  volume    = {81},
  issue     = {2},
  pages     = {865--942},
  numpages  = {0},
  year      = {2009},
  month     = {Jun},
  publisher = {American Physical Society},
  doi       = {10.1103/RevModPhys.81.865},
  url       = {https://link.aps.org/doi/10.1103/RevModPhys.81.865}
}

@article{yin2020qcrypt,
  title     = {Entanglement-based secure quantum cryptography over 1,120 kilometres},
  author    = {Yin, Juan and Li, Yu-Huai and Liao, Sheng-Kai and Yang, Meng and Cao, Yuan and Zhang, Liang and Ren, Ji-Gang and Cai, Wen-Qi and Liu, Wei-Yue and Li, Shuang-Lin and others},
  journal   = {Nature},
  volume    = {582},
  number    = {7813},
  pages     = {501--505},
  year      = {2020},
  publisher = {Nature Publishing Group UK London},
  doi       = {10.1038/s41586-020-2401-y},
  url       = {https://doi.org/10.1038/s41586-020-2401-y},
}

@article{Jennewein2000_PRL,
  title     = {Quantum Cryptography with Entangled Photons},
  author    = {Jennewein, Thomas and Simon, Christoph and Weihs, Gregor and Weinfurter, Harald and Zeilinger, Anton},
  journal   = {Phys. Rev. Lett.},
  volume    = {84},
  issue     = {20},
  pages     = {4729--4732},
  numpages  = {0},
  year      = {2000},
  month     = {May},
  publisher = {American Physical Society},
  doi       = {10.1103/PhysRevLett.84.4729},
  url       = {https://link.aps.org/doi/10.1103/PhysRevLett.84.4729}
}

@article{kaufman2016thermal,
  title     = {Quantum thermalization through entanglement in an isolated many-body system},
  author    = {Kaufman, Adam M and Tai, M Eric and Lukin, Alexander and Rispoli, Matthew and Schittko, Robert and Preiss, Philipp M and Greiner, Markus},
  journal   = {Science},
  volume    = {353},
  number    = {6301},
  pages     = {794--800},
  year      = {2016},
  publisher = {American Association for the Advancement of Science},
  doi       = {10.1126/science.aaf6725},
}

@article{Abanin2019_RMP,
  title     = {Colloquium: Many-body localization, thermalization, and entanglement},
  author    = {Abanin, Dmitry A. and Altman, Ehud and Bloch, Immanuel and Serbyn, Maksym},
  journal   = {Rev. Mod. Phys.},
  volume    = {91},
  issue     = {2},
  pages     = {021001},
  numpages  = {26},
  year      = {2019},
  month     = {May},
  publisher = {American Physical Society},
  doi       = {10.1103/RevModPhys.91.021001},
  url       = {https://link.aps.org/doi/10.1103/RevModPhys.91.021001}
}

@article{nandkishore2015MBL,
  title     = {Many-body localization and thermalization in quantum statistical mechanics},
  author    = {Nandkishore, Rahul and Huse, David A},
  journal   = {Annu. Rev. Condens. Matter Phys.},
  volume    = {6},
  number    = {1},
  pages     = {15--38},
  year      = {2015},
  publisher = {Annual Reviews},
  doi       = {10.1146/annurev-conmatphys-031214-014726},
}

@article{Bennett_QEC1996_PRA,
  title     = {Mixed-state entanglement and quantum error correction},
  author    = {Bennett, Charles H. and DiVincenzo, David P. and Smolin, John A. and Wootters, William K.},
  journal   = {Phys. Rev. A},
  volume    = {54},
  issue     = {5},
  pages     = {3824--3851},
  numpages  = {0},
  year      = {1996},
  month     = {Nov},
  publisher = {American Physical Society},
  doi       = {10.1103/PhysRevA.54.3824},
}

@article{Terhal_RMP2015,
  title     = {Quantum error correction for quantum memories},
  author    = {Terhal, Barbara M.},
  journal   = {Rev. Mod. Phys.},
  volume    = {87},
  issue     = {2},
  pages     = {307--346},
  numpages  = {40},
  year      = {2015},
  month     = {Apr},
  publisher = {American Physical Society},
  doi       = {10.1103/RevModPhys.87.307},
  url       = {https://link.aps.org/doi/10.1103/RevModPhys.87.307}
}

@article{Bertini2018,
  title     = {Exact Spectral Form Factor in a Minimal Model of Many-Body Quantum Chaos},
  author    = {Bertini, Bruno and Kos, Pavel and Prosen, Toma\ifmmode \check{z}\else\v{z}\fi{}},
  journal   = {Phys. Rev. Lett.},
  volume    = {121},
  issue     = {26},
  pages     = {264101},
  numpages  = {6},
  year      = {2018},
  month     = {Dec},
  publisher = {American Physical Society},
  doi       = {10.1103/PhysRevLett.121.264101},
  url       = {https://link.aps.org/doi/10.1103/PhysRevLett.121.264101}
}

@book{Zwanzig2001,
    author = {Zwanzig, Robert},
    title = {Nonequilibrium Statistical Mechanics},
    publisher = {Oxford University Press},
    year = {2001},
    month = {04},
    abstract = {This is a presentation of the main ideas and methods of modern nonequilibrium statistical mechanics. It is the perfect introduction for anyone in chemistry or physics who needs an update or background in this time-dependent field. Topics covered include fluctuation-dissipation theorem; linear response theory; time correlation functions, and projection operators. Theoretical models are illustrated by real-world examples and numerous applications such as chemical reaction rates and spectral line shapes are covered. The mathematical treatments are detailed and easily understandable and the appendices include useful mathematical methods like the Laplace transforms, Gaussian random variables and phenomenological transport equations.},
    isbn = {9780195140187},
    doi = {10.1093/oso/9780195140187.001.0001},
    url = {https://doi.org/10.1093/oso/9780195140187.001.0001},
}

@article{Steel1997,
  title    = {{Conquering inseparability: Primary decomposition and multivariate factorization over algebraic function fields of positive characteristic}},
  journal  = {Journal of Symbolic Computation},
  volume   = {40},
  number   = {3},
  pages    = {1053-1075},
  year     = {2005},
  issn     = {0747-7171},
  doi      = {10.1016/j.jsc.2005.03.002},
  url      = {https://www.sciencedirect.com/science/article/pii/S0747717105000751},
  author   = {Allan Steel},
  keywords = {Algebraic function field, Non-perfect field, Inseparability, Primary decomposition, Polynomial factorization, Gröbner basis},
  abstract = {Algebraic function fields of positive characteristic are non-perfect fields, and many standard algorithms for solving some fundamental problems in commutative algebra simply do not work over these fields. This paper presents practical algorithms for the first time for (1) computing the primary decomposition of ideals of polynomial rings defined over such fields and (2) factoring arbitrary multivariate polynomials over such fields. Difficulties involving inseparability and the situation where the transcendence degree is greater than one are completely overcome, while the algorithms avoid explicit construction of any extension of the input base field. As a corollary, the problem of computing the primary decomposition of a positive-dimensional ideal over a finite field is also solved. The algorithms perform very effectively in an implementation within the Magma Computer Algebra System, and an analysis of their practical performance is given.}
}

@article{Kulshreshtha2018,
  title     = {Behavior of l-bits near the many-body localization transition},
  author    = {Kulshreshtha, Abishek K. and Pal, Arijeet and Wahl, Thorsten B. and Simon, Steven H.},
  journal   = {Phys. Rev. B},
  volume    = {98},
  issue     = {18},
  pages     = {184201},
  numpages  = {7},
  year      = {2018},
  month     = {Nov},
  publisher = {American Physical Society},
  doi       = {10.1103/PhysRevB.98.184201},
  url       = {https://link.aps.org/doi/10.1103/PhysRevB.98.184201}
}

@article{Serbyn2013,
  title     = {{Local Conservation Laws and the Structure of the Many-Body Localized States}},
  author    = {Serbyn, Maksym and Papi\ifmmode \acute{c}\else \'{c}\fi{}, Z. and Abanin, Dmitry A.},
  journal   = {Phys. Rev. Lett.},
  volume    = {111},
  issue     = {12},
  pages     = {127201},
  numpages  = {5},
  year      = {2013},
  month     = {Sep},
  publisher = {American Physical Society},
  doi       = {10.1103/PhysRevLett.111.127201},
  url       = {https://link.aps.org/doi/10.1103/PhysRevLett.111.127201}
}

@article{Huse2013b,
  author    = {David A. Huse and Rahul Nandkishore and Vadim Oganesyan},
  doi       = {10.1103/PhysRevB.90.174202},
  issn      = {1550235X},
  issue     = {17},
  journal   = {Phys. Rev. B},
  month     = {11},
  pages     = {174202},
  publisher = {American Physical Society},
  title     = {Phenomenology of fully many-body-localized systems},
  volume    = {90},
  url       = {https://journals.aps.org/prb/abstract/10.1103/PhysRevB.90.174202},
  year      = {2014}
}

@article{HoldenDye2023,
  title         = {Fundamental charges for dual-unitary circuits},
  author        = {Tom Holden-Dye and Lluis Masanes and Arijeet Pal},
  year          = {2025},
  journal       = {Quantum},
  volume        = {9},
  pages         = {1615},
  doi           = {10.22331/q-2025-01-30-1615},
}

@article{angelidi2023LRE,
  title   = {Stabilization of symmetry-protected long-range entanglement in stochastic quantum circuits},
  author  = {Angelidi, Iosifina and Szyniszewski, Marcin and Pal, Arijeet},
  journal = {Quantum},
  volume  = {8},
  year    = {2004},
  pages   = {1430},
  doi     = {10.22331/q-2024-08-02-1430},
}

@article{haah2022topological,
  title   = {{Topological phases of unitary dynamics: Classification in Clifford category}},
  author  = {Haah, Jeongwan},
  journal = {Commun. Math. Phys.},
  volume  = {406},
  pages   = {76},
  year    = {2025},
  doi     = {10.1007/s00220-025-05239-z},
}

@article{lavasani2021topological,
  title     = {Topological order and criticality in (2+ 1) D monitored random quantum circuits},
  author    = {Lavasani, Ali and Alavirad, Yahya and Barkeshli, Maissam},
  journal   = {{Phys. Rev. Lett.}},
  volume    = {127},
  number    = {23},
  pages     = {235701},
  year      = {2021},
  publisher = {APS},
  doi       = {PhysRevLett.127.235701},
}

@article{Chan2018MBQC,
  title     = {Solution of a Minimal Model for Many-Body Quantum Chaos},
  author    = {Chan, Amos and {De Luca}, Andrea and Chalker, J. T.},
  journal   = {Phys. Rev. X},
  volume    = {8},
  issue     = {4},
  pages     = {041019},
  numpages  = {17},
  year      = {2018},
  month     = {Nov},
  publisher = {American Physical Society},
  doi       = {10.1103/PhysRevX.8.041019},
  url       = {https://link.aps.org/doi/10.1103/PhysRevX.8.041019}
}

@article{Sommers2023CQC,
  title     = {Crystalline Quantum Circuits},
  author    = {Sommers, Grace M. and Huse, David A. and Gullans, Michael J.},
  journal   = {PRX Quantum},
  volume    = {4},
  issue     = {3},
  pages     = {030313},
  numpages  = {40},
  year      = {2023},
  month     = {Jul},
  publisher = {American Physical Society},
  doi       = {10.1103/PRXQuantum.4.030313},
  url       = {https://link.aps.org/doi/10.1103/PRXQuantum.4.030313}
}

@article{wahl20192DMBL,
  title     = {Signatures of the many-body localized regime in two dimensions},
  author    = {Wahl, Thorsten B and Pal, Arijeet and Simon, Steven H},
  journal   = {Nature Physics},
  volume    = {15},
  number    = {2},
  pages     = {164--169},
  year      = {2019},
  publisher = {Nature Publishing Group UK London},
  doi       = {https://doi.org/10.1038/s41567-018-0339-x},
}

@misc{Pilatowsky-Cameo2024_CHSE_CUE,
  title         = {{Hilbert-Space Ergodicity in Driven Quantum Systems: Obstructions and Designs}},
  author        = {Saúl Pilatowsky-Cameo and Iman Marvian and Soonwon Choi and Wen Wei Ho},
  year          = {2024},
  eprint        = {2402.06720},
  archiveprefix = {arXiv},
  primaryclass  = {quant-ph},
  url           = {https://arxiv.org/abs/2402.06720}
}

@article{Wigner1967,
  author  = {Wigner, Eugene P.},
  title   = {{Random Matrices in Physics}},
  journal = {SIAM Review},
  volume  = {9},
  number  = {1},
  pages   = {1-23},
  year    = {1967},
  doi     = {10.1137/1009001},
  url     = { 
             
             https://doi.org/10.1137/1009001
             
             
             
             },
  eprint  = { 
             
             https://doi.org/10.1137/1009001
             
             
             
             }
}

@book{Mehta2004,
  added-at  = {2009-01-29T04:00:56.000+0100},
  author    = {Mehta, Madan Lal},
  edition   = {2nd},
  interhash = {f04a054fdbdc8526dc26a2bd8e8cd5a1},
  intrahash = {181b1470f1265b6865feadc8f92f9245},
  publisher = {Academic Press, New York},
  keywords  = {RMT},
  timestamp = {2009-01-29T04:00:56.000+0100},
  title     = {Random Matrices},
  year      = 1990,
  url = {https://doi.org/10.1016/C2009-0-22297-5}
}

@article{Pilatowsky-Cameo2023_CHSE_Fibonacci,
  title     = {{Complete Hilbert-Space Ergodicity in Quantum Dynamics of Generalized Fibonacci Drives}},
  author    = {Pilatowsky-Cameo, Sa\'ul and Dag, Ceren B. and Ho, Wen Wei and Choi, Soonwon},
  journal   = {Phys. Rev. Lett.},
  volume    = {131},
  issue     = {25},
  pages     = {250401},
  numpages  = {7},
  year      = {2023},
  month     = {Dec},
  publisher = {American Physical Society},
  doi       = {10.1103/PhysRevLett.131.250401},
  url       = {https://link.aps.org/doi/10.1103/PhysRevLett.131.250401}
}

@article{niroula2023MagicPT,
  title   = {Phase transition in magic with random quantum circuits},
  author  = {Niroula, Pradeep and White, Christopher David and Wang, Qingfeng and Johri, Sonika and Zhu, Daiwei and Monroe, Christopher and Noel, Crystal and Gullans, Michael J},
  year    = {2024},
  journal = {Nature Phys.},
  volume  = {20}, 
  pages   = {1786},
  doi     = {10.1038/s41567-024-02637-3},
}

@article{bejan2023Magictrans,
  title   = {{Dynamical magic transitions in monitored Clifford+T circuits}},
  author  = {Bejan, Mircea and McLauchlan, Campbell and B{\'e}ri, Benjamin},
  journal = {PRX Quantum},
  volume  = {5},
  pages   = {030332},
  year    = {2024},
  doi     = {10.1103/PRXQuantum.5.030332},
}

@article{Pal2010,
  author   = {Arijeet Pal and David A. Huse},
  doi      = {10.1103/PhysRevB.82.174411},
  issn     = {1098-0121},
  issue    = {17},
  journal  = {Phys. Rev. B},
  keywords = {Atoms in optical lattices,Localization effects,Quantum phase transitions},
  month    = {10},
  pages    = {174411},
  title    = {The many-body localization phase transition},
  volume   = {82},
  url      = {http://dx.doi.org/10.1103/PhysRevB.82.174411},
  year     = {2010}
}

@article{moudgalya2022Scars_HSF,
  title     = {Quantum many-body scars and Hilbert space fragmentation: a review of exact results},
  author    = {Moudgalya, Sanjay and Bernevig, B Andrei and Regnault, Nicolas},
  journal   = {Reports on Progress in Physics},
  volume    = {85},
  number    = {8},
  pages     = {086501},
  year      = {2022},
  publisher = {IOP Publishing},
  doi       = {10.1088/1361-6633/ac73a0},
}

@article{Essler2020OSF,
  title     = {Integrability of one-dimensional Lindbladians from operator-space fragmentation},
  author    = {Essler, Fabian H. L. and Piroli, Lorenzo},
  journal   = {Phys. Rev. E},
  volume    = {102},
  issue     = {6},
  pages     = {062210},
  numpages  = {7},
  year      = {2020},
  month     = {Dec},
  publisher = {American Physical Society},
  doi       = {10.1103/PhysRevE.102.062210},
  url       = {https://link.aps.org/doi/10.1103/PhysRevE.102.062210}
}

@article{YahuiLi2023_OpenHSF,
  title     = {Hilbert space fragmentation in open quantum systems},
  author    = {Li, Yahui and Sala, Pablo and Pollmann, Frank},
  journal   = {Phys. Rev. Res.},
  volume    = {5},
  issue     = {4},
  pages     = {043239},
  numpages  = {17},
  year      = {2023},
  month     = {Dec},
  publisher = {American Physical Society},
  doi       = {10.1103/PhysRevResearch.5.043239},
  url       = {https://link.aps.org/doi/10.1103/PhysRevResearch.5.043239}
}

@article{Paszko2024_OpenSPT,
  title     = {Edge Modes and Symmetry-Protected Topological States in Open Quantum Systems},
  author    = {Paszko, Dawid and Rose, Dominic C. and Szyma\ifmmode \acute{n}\else \'{n}\fi{}ska, Marzena H. and Pal, Arijeet},
  journal   = {PRX Quantum},
  volume    = {5},
  issue     = {3},
  pages     = {030304},
  numpages  = {22},
  year      = {2024},
  month     = {Jul},
  publisher = {American Physical Society},
  doi       = {10.1103/PRXQuantum.5.030304},
  url       = {https://link.aps.org/doi/10.1103/PRXQuantum.5.030304}
}

@article{Gornyi2005,
  author    = {I. V. Gornyi and A. D. Mirlin and D. G. Polyakov},
  doi       = {10.1103/PHYSREVLETT.95.206603},
  issn      = {00319007},
  issue     = {20},
  journal   = {Phys. Rev. Lett.},
  month     = {11},
  pages     = {206603},
  publisher = {American Physical Society},
  title     = {Interacting electrons in disordered wires: Anderson localization and low-T transport},
  volume    = {95},
  url       = {https://journals.aps.org/prl/abstract/10.1103/PhysRevLett.95.206603},
  year      = {2005}
}

@article{Basko2006,
  author   = {D. M. Basko and I. L. Aleiner and B. L. Altshuler},
  doi      = {10.1016/j.aop.2005.11.014},
  issn     = {00034916},
  issue    = {5},
  journal  = {Ann. Phys. (NY)},
  keywords = {Anderson localization,Fock space,Metal-insulator transition},
  month    = {5},
  pages    = {1126-1205},
  title    = {Metal–insulator transition in a weakly interacting many-electron system with localized single-particle states},
  volume   = {321},
  url      = {http://dx.doi.org/10.1016/j.aop.2005.11.014},
  year     = {2006}
}

@article{Luitz2015,
  author    = {David J. Luitz and Nicolas Laflorencie and Fabien Alet},
  doi       = {10.1103/PHYSREVB.91.081103},
  issn      = {1550235X},
  issue     = {8},
  journal   = {Phys. Rev. B},
  month     = {2},
  pages     = {081103(R)},
  publisher = {American Physical Society},
  title     = {Many-body localization edge in the random-field Heisenberg chain},
  volume    = {91},
  url       = {https://journals.aps.org/prb/abstract/10.1103/PhysRevB.91.081103},
  year      = {2015}
}

@article{TarabungaPRX2023_Magic,
  title     = {Many-Body Magic Via Pauli-Markov Chains---From Criticality to Gauge Theories},
  author    = {Tarabunga, Poetri Sonya and Tirrito, Emanuele and Chanda, Titas and Dalmonte, Marcello},
  journal   = {PRX Quantum},
  volume    = {4},
  issue     = {4},
  pages     = {040317},
  numpages  = {19},
  year      = {2023},
  month     = {Oct},
  publisher = {American Physical Society},
  doi       = {10.1103/PRXQuantum.4.040317},
  url       = {https://link.aps.org/doi/10.1103/PRXQuantum.4.040317}
}

@article{Bravyi2016nonClifford,
  title     = {Improved Classical Simulation of Quantum Circuits Dominated by Clifford Gates},
  author    = {Bravyi, Sergey and Gosset, David},
  journal   = {Phys. Rev. Lett.},
  volume    = {116},
  issue     = {25},
  pages     = {250501},
  numpages  = {5},
  year      = {2016},
  month     = {Jun},
  publisher = {American Physical Society},
  doi       = {10.1103/PhysRevLett.116.250501},
  url       = {https://link.aps.org/doi/10.1103/PhysRevLett.116.250501}
}

@article{Li2024CliffordMonitor,
  title     = {Statistical mechanics model for Clifford random tensor networks and monitored quantum circuits},
  author    = {Li, Yaodong and Vasseur, Romain and Fisher, Matthew P. A. and Ludwig, Andreas W. W.},
  journal   = {Phys. Rev. B},
  volume    = {109},
  issue     = {17},
  pages     = {174307},
  numpages  = {23},
  year      = {2024},
  month     = {May},
  publisher = {American Physical Society},
  doi       = {10.1103/PhysRevB.109.174307},
  url       = {https://link.aps.org/doi/10.1103/PhysRevB.109.174307}
}

@article{Makki2024,
  title     = {Absorbing state phase transition with Clifford circuits},
  author    = {Makki, Nastasia and Lang, Nicolai and B\"uchler, Hans Peter},
  journal   = {Phys. Rev. Res.},
  volume    = {6},
  issue     = {1},
  pages     = {013278},
  numpages  = {11},
  year      = {2024},
  month     = {Mar},
  publisher = {American Physical Society},
  doi       = {10.1103/PhysRevResearch.6.013278},
  url       = {https://link.aps.org/doi/10.1103/PhysRevResearch.6.013278}
}

@article{Wahl2017MBL-TN,
  title     = {Efficient Representation of Fully Many-Body Localized Systems Using Tensor Networks},
  author    = {Wahl, Thorsten B. and Pal, Arijeet and Simon, Steven H.},
  journal   = {Phys. Rev. X},
  volume    = {7},
  issue     = {2},
  pages     = {021018},
  numpages  = {17},
  year      = {2017},
  month     = {May},
  publisher = {American Physical Society},
  doi       = {10.1103/PhysRevX.7.021018},
  url       = {https://link.aps.org/doi/10.1103/PhysRevX.7.021018}
}

@article{doggen2021MBL-MPS,
  title     = {Many-body localization in large systems: Matrix-product-state approach},
  author    = {Doggen, Elmer VH and Gornyi, Igor V and Mirlin, Alexander D and Polyakov, Dmitry G},
  journal   = {Annals of Physics},
  volume    = {435},
  pages     = {168437},
  year      = {2021},
  publisher = {Elsevier},
  doi       = {10.1016/j.aop.2021.168437},
}

@article{Doggen2018MBL,
  title     = {Many-body localization and delocalization in large quantum chains},
  author    = {Doggen, Elmer V. H. and Schindler, Frank and Tikhonov, Konstantin S. and Mirlin, Alexander D. and Neupert, Titus and Polyakov, Dmitry G. and Gornyi, Igor V.},
  journal   = {Phys. Rev. B},
  volume    = {98},
  issue     = {17},
  pages     = {174202},
  numpages  = {13},
  year      = {2018},
  month     = {Nov},
  publisher = {American Physical Society},
  doi       = {10.1103/PhysRevB.98.174202},
  url       = {https://link.aps.org/doi/10.1103/PhysRevB.98.174202}
}

@article{Chan2021SpectralLyapunov,
  title     = {Spectral Lyapunov exponents in chaotic and localized many-body quantum systems},
  author    = {Chan, Amos and {De Luca}, Andrea and Chalker, J. T.},
  journal   = {Phys. Rev. Res.},
  volume    = {3},
  issue     = {2},
  pages     = {023118},
  numpages  = {16},
  year      = {2021},
  month     = {May},
  publisher = {American Physical Society},
  doi       = {10.1103/PhysRevResearch.3.023118},
  url       = {https://link.aps.org/doi/10.1103/PhysRevResearch.3.023118}
}

@article{WenWei2022StateDesign,
  title     = {Exact Emergent Quantum State Designs from Quantum Chaotic Dynamics},
  author    = {Ho, Wen Wei and Choi, Soonwon},
  journal   = {Phys. Rev. Lett.},
  volume    = {128},
  issue     = {6},
  pages     = {060601},
  numpages  = {6},
  year      = {2022},
  month     = {Feb},
  publisher = {American Physical Society},
  doi       = {10.1103/PhysRevLett.128.060601},
  url       = {https://link.aps.org/doi/10.1103/PhysRevLett.128.060601}
}

@article{Eisert2021CircuitComplexity,
  title     = {Entangling Power and Quantum Circuit Complexity},
  author    = {Eisert, J.},
  journal   = {Phys. Rev. Lett.},
  volume    = {127},
  issue     = {2},
  pages     = {020501},
  numpages  = {5},
  year      = {2021},
  month     = {Jul},
  publisher = {American Physical Society},
  doi       = {10.1103/PhysRevLett.127.020501},
  url       = {https://link.aps.org/doi/10.1103/PhysRevLett.127.020501}
}

@article{roberts2017chaos,
  title     = {Chaos and complexity by design},
  author    = {Roberts, Daniel A and Yoshida, Beni},
  journal   = {Journal of High Energy Physics},
  volume    = {2017},
  number    = {4},
  pages     = {1--64},
  year      = {2017},
  publisher = {Springer},
  doi       = {10.1007/JHEP04%282017%29121},
  url       = {https://doi.org/10.1007/JHEP04%282017%29121},
}

@article{haferkamp2022complexity,
  title     = {Linear growth of quantum circuit complexity},
  author    = {Haferkamp, Jonas and Faist, Philippe and Kothakonda, Naga BT and Eisert, Jens and Yunger Halpern, Nicole},
  journal   = {Nature Physics},
  volume    = {18},
  number    = {5},
  pages     = {528--532},
  year      = {2022},
  publisher = {Nature Publishing Group UK London},
  doi       = {10.1038/s41567-022-01539-6},
  url       = {https://doi.org/10.1038/s41567-022-01539-6},
}

@article{Jeyaretnam2023_MBLresonance,
  title     = {Renormalization view on resonance proliferation between many-body localized phases},
  author    = {Jeyaretnam, Jared and Turner, Christopher J. and Pal, Arijeet},
  journal   = {Phys. Rev. B},
  volume    = {108},
  issue     = {9},
  pages     = {094205},
  numpages  = {13},
  year      = {2023},
  month     = {Sep},
  publisher = {American Physical Society},
  doi       = {10.1103/PhysRevB.108.094205},
  url       = {https://link.aps.org/doi/10.1103/PhysRevB.108.094205}
}

@article{Richter2023_LRClifford,
  title     = {Transport and entanglement growth in long-range random Clifford circuits},
  author    = {Richter, Jonas and Lunt, Oliver and Pal, Arijeet},
  journal   = {Phys. Rev. Res.},
  volume    = {5},
  issue     = {1},
  pages     = {L012031},
  numpages  = {8},
  year      = {2023},
  month     = {Mar},
  publisher = {American Physical Society},
  doi       = {10.1103/PhysRevResearch.5.L012031},
  url       = {https://link.aps.org/doi/10.1103/PhysRevResearch.5.L012031}
}

@article{Sels2022,
  author    = {Dries Sels},
  doi       = {10.1103/PhysRevB.106.L020202},
  issn      = {2469-9950},
  issue     = {2},
  journal   = {Phys. Rev. B},
  month     = {7},
  pages     = {L020202},
  publisher = {American Physical Society (APS)},
  title     = {Bath-induced delocalization in interacting disordered spin chains},
  volume    = {106},
  url       = {https://link.aps.org/doi/10.1103/PhysRevB.106.L020202},
  year      = {2022}
}

@article{Morningstar2022,
  author    = {Alan Morningstar and Luis Colmenarez and Vedika Khemani and David J. Luitz and David A. Huse},
  doi       = {10.1103/PhysRevB.105.174205},
  issn      = {2469-9950},
  issue     = {17},
  journal   = {Phys. Rev. B},
  month     = {5},
  pages     = {174205},
  publisher = {American Physical Society},
  title     = {Avalanches and many-body resonances in many-body localized systems},
  volume    = {105},
  url       = {https://link.aps.org/doi/10.1103/PhysRevB.105.174205},
  year      = {2022}
}

@article{DeRoeck2017,
  title     = {Stability and instability towards delocalization in many-body localization systems},
  author    = {De Roeck, Wojciech and Huveneers, Fran\c{c}ois},
  journal   = {Phys. Rev. B},
  volume    = {95},
  issue     = {15},
  pages     = {155129},
  numpages  = {14},
  year      = {2017},
  month     = {Apr},
  publisher = {American Physical Society},
  doi       = {10.1103/PhysRevB.95.155129},
  url       = {https://link.aps.org/doi/10.1103/PhysRevB.95.155129}
}

@article{Rademaker2016,
  title     = {Explicit Local Integrals of Motion for the Many-Body Localized State},
  author    = {Rademaker, Louk and Ortu\~no, Miguel},
  journal   = {Phys. Rev. Lett.},
  volume    = {116},
  issue     = {1},
  pages     = {010404},
  numpages  = {5},
  year      = {2016},
  month     = {Jan},
  publisher = {American Physical Society},
  doi       = {10.1103/PhysRevLett.116.010404},
  url       = {https://link.aps.org/doi/10.1103/PhysRevLett.116.010404}
}

@article{Luitz2017,
  title     = {How a Small Quantum Bath Can Thermalize Long Localized Chains},
  author    = {Luitz, David J. and Huveneers, Fran\c{c}ois and De Roeck, Wojciech},
  journal   = {Phys. Rev. Lett.},
  volume    = {119},
  issue     = {15},
  pages     = {150602},
  numpages  = {6},
  year      = {2017},
  month     = {Oct},
  publisher = {American Physical Society},
  doi       = {10.1103/PhysRevLett.119.150602},
  url       = {https://link.aps.org/doi/10.1103/PhysRevLett.119.150602}
}

@article{Crowley2022b,
  title     = {Mean-field theory of failed thermalizing avalanches},
  author    = {Crowley, P. J. D. and Chandran, A.},
  journal   = {Phys. Rev. B},
  volume    = {106},
  issue     = {18},
  pages     = {184208},
  numpages  = {13},
  year      = {2022},
  month     = {Nov},
  publisher = {American Physical Society},
  doi       = {10.1103/PhysRevB.106.184208},
  url       = {https://link.aps.org/doi/10.1103/PhysRevB.106.184208}
}

@article{Thomson2018,
  title     = {Time evolution of many-body localized systems with the flow equation approach},
  author    = {Thomson, S. J. and Schir\'o, M.},
  journal   = {Phys. Rev. B},
  volume    = {97},
  issue     = {6},
  pages     = {060201(R)},
  numpages  = {5},
  year      = {2018},
  month     = {Feb},
  publisher = {American Physical Society},
  doi       = {10.1103/PhysRevB.97.060201},
  url       = {https://link.aps.org/doi/10.1103/PhysRevB.97.060201}
}

@article{Ha2023,
  title     = {Many-Body Resonances in the Avalanche Instability of Many-Body Localization},
  author    = {Ha, Hyunsoo and Morningstar, Alan and Huse, David A.},
  journal   = {Phys. Rev. Lett.},
  volume    = {130},
  issue     = {25},
  pages     = {250405},
  numpages  = {6},
  year      = {2023},
  month     = {Jun},
  publisher = {American Physical Society},
  doi       = {10.1103/PhysRevLett.130.250405},
  url       = {https://link.aps.org/doi/10.1103/PhysRevLett.130.250405}
}

@article{Goihl2018,
  title     = {Construction of exact constants of motion and effective models for many-body localized systems},
  author    = {Goihl, M. and Gluza, M. and Krumnow, C. and Eisert, J.},
  journal   = {Phys. Rev. B},
  volume    = {97},
  issue     = {13},
  pages     = {134202},
  numpages  = {8},
  year      = {2018},
  month     = {Apr},
  publisher = {American Physical Society},
  doi       = {10.1103/PhysRevB.97.134202},
  url       = {https://link.aps.org/doi/10.1103/PhysRevB.97.134202}
}

@article{Hahn2024,
  title     = {{Absence of localization in weakly interacting Floquet circuits}},
  author    = {Hahn, Dominik and Colmenarez, Luis},
  journal   = {Phys. Rev. B},
  volume    = {109},
  issue     = {9},
  pages     = {094207},
  numpages  = {10},
  year      = {2024},
  month     = {Mar},
  publisher = {American Physical Society},
  doi       = {10.1103/PhysRevB.109.094207},
  url       = {https://link.aps.org/doi/10.1103/PhysRevB.109.094207}
}

@article{Imbrie2016,
  author   = {Imbrie, John Z.},
  title    = {{On Many-Body Localization for Quantum Spin Chains}},
  journal  = {J. Stat. Phys.},
  year     = {2016},
  month    = {Jun},
  day      = {01},
  volume   = {163},
  number   = {5},
  pages    = {998-1048},
  abstract = {For a one-dimensional spin chain with random local interactions, we prove that many-body localization follows from a physically reasonable assumption that limits the amount of level attraction in the system. The construction uses a sequence of local unitary transformations to diagonalize the Hamiltonian and connect the exact many-body eigenfunctions to the original basis vectors.},
  issn     = {1572-9613},
  doi      = {10.1007/s10955-016-1508-x},
  url      = {https://doi.org/10.1007/s10955-016-1508-x}
}

@article{Harrow2009,
  author   = {Harrow, Aram W. and Low, Richard A.},
  title    = {{Random Quantum Circuits are Approximate 2-designs}},
  journal  = {Commun. Math. Phys.},
  year     = {2009},
  month    = {Oct},
  day      = {01},
  volume   = {291},
  number   = {1},
  pages    = {257-302},
  issn     = {1432-0916},
  doi      = {10.1007/s00220-009-0873-6},
  url      = {https://doi.org/10.1007/s00220-009-0873-6}
}

@article{Kitaev2003,
  title    = {Fault-tolerant quantum computation by anyons},
  journal  = {Annals of Physics},
  volume   = {303},
  number   = {1},
  pages    = {2-30},
  year     = {2003},
  issn     = {0003-4916},
  doi      = {10.1016/S0003-4916(02)00018-0},
  url      = {https://www.sciencedirect.com/science/article/pii/S0003491602000180},
  author   = {Alexei Y. Kitaev},
}

@article{Claeys2020,
  title     = {Maximum velocity quantum circuits},
  author    = {Claeys, Pieter W. and Lamacraft, Austen},
  journal   = {Phys. Rev. Res.},
  volume    = {2},
  issue     = {3},
  pages     = {033032},
  numpages  = {20},
  year      = {2020},
  month     = {Jul},
  publisher = {American Physical Society},
  doi       = {10.1103/PhysRevResearch.2.033032},
  url       = {https://link.aps.org/doi/10.1103/PhysRevResearch.2.033032}
}

@article{Bertini2021,
  author   = {Bertini, Bruno
              and Kos, Pavel
              and Prosen, Toma{\v{z}}},
  title    = {{Random Matrix Spectral Form Factor of Dual-Unitary Quantum Circuits}},
  journal  = {Commun. Math. Phys.},
  year     = {2021},
  month    = {Oct},
  day      = {01},
  volume   = {387},
  number   = {1},
  pages    = {597-620},
  abstract = {We investigate a class of local quantum circuits on chains of d-level systems (qudits) that share the so-called `dual unitarity' property. In essence, the latter property implies that these systems generate unitary dynamics not only when propagating in time, but also when propagating in space. We consider space-time homogeneous (Floquet) circuits and perturb them with a quenched single-site disorder, i.e. by applying independent single site random unitaries drawn from arbitrary non-singular distribution over {\$}{\$}{\backslash}mathrm{\{}SU{\}}(d){\$}{\$}, e.g. one concentrated around the identity, after each layer of the circuit. We identify the spectral form factor at time t in the limit of long chains as the dimension of the commutant of a finite set of operators on a qudit ring of t sites. For general dual unitary circuits of qubits {\$}{\$}(d=2){\$}{\$}and a family of their extensions to higher {\$}{\$}d>2{\$}{\$}, we provide an explicit construction of the commutant and prove that spectral form factor exactly matches the prediction of circular unitary ensemble for all t, if only the local 2-qubit gates are different from a SWAP (non-interacting gate).},
  issn     = {1432-0916},
  doi      = {10.1007/s00220-021-04139-2},
  url      = {https://doi.org/10.1007/s00220-021-04139-2}
}

@article{Bertini2019,
  title     = {{Entanglement Spreading in a Minimal Model of Maximal Many-Body Quantum Chaos}},
  author    = {Bertini, Bruno and Kos, Pavel and Prosen, Toma{\v{z}}},
  journal   = {Phys. Rev. X},
  volume    = {9},
  issue     = {2},
  pages     = {021033},
  numpages  = {27},
  year      = {2019},
  month     = {May},
  publisher = {American Physical Society},
  doi       = {10.1103/PhysRevX.9.021033},
  url       = {https://link.aps.org/doi/10.1103/PhysRevX.9.021033}
}

@article{Nahum2018,
  title     = {{Operator Spreading in Random Unitary Circuits}},
  author    = {Nahum, Adam and Vijay, Sagar and Haah, Jeongwan},
  journal   = {Phys. Rev. X},
  volume    = {8},
  issue     = {2},
  pages     = {021014},
  numpages  = {30},
  year      = {2018},
  month     = {Apr},
  publisher = {American Physical Society},
  doi       = {10.1103/PhysRevX.8.021014},
  url       = {https://link.aps.org/doi/10.1103/PhysRevX.8.021014}
}

@article{CurtvK2018,
  title     = {Operator Hydrodynamics, OTOCs, and Entanglement Growth in Systems without Conservation Laws},
  author    = {von Keyserlingk, C. W. and Rakovszky, Tibor and Pollmann, Frank and Sondhi, S. L.},
  journal   = {Phys. Rev. X},
  volume    = {8},
  issue     = {2},
  pages     = {021013},
  numpages  = {19},
  year      = {2018},
  month     = {Apr},
  publisher = {American Physical Society},
  doi       = {10.1103/PhysRevX.8.021013},
  url       = {https://link.aps.org/doi/10.1103/PhysRevX.8.021013}
}

@article{Tolar2018,
  doi       = {10.1088/1742-6596/1071/1/012022},
  url       = {https://dx.doi.org/10.1088/1742-6596/1071/1/012022},
  year      = {2018},
  month     = {aug},
  publisher = {IOP Publishing},
  volume    = {1071},
  number    = {1},
  pages     = {012022},
  author    = {J Tolar},
  title     = {{On Clifford groups in quantum computing}},
  journal   = {J. Phys.: Conf. Ser.},
}

@article{Huangjun2017,
  title     = {{Multiqubit Clifford groups are unitary 3-designs}},
  author    = {Zhu, Huangjun},
  journal   = {Phys. Rev. A},
  volume    = {96},
  issue     = {6},
  pages     = {062336},
  numpages  = {7},
  year      = {2017},
  month     = {Dec},
  publisher = {American Physical Society},
  doi       = {10.1103/PhysRevA.96.062336},
  url       = {https://link.aps.org/doi/10.1103/PhysRevA.96.062336}
}

@article{Grier2022,
  title   = {{Multiqubit Clifford groups are unitary 3-designs}},
  author  = {Grier, Daniel  and Schaeffer, Luke},
  journal = {Quantum},
  volume  = {6},
  pages   = {734},
  year    = {2022},
  month   = {June},
  doi     = {10.22331/q-2022-06-13-734},
  url     = {	https://doi.org/10.22331/q-2022-06-13-734}
}

@article{Bravy2019,
  title   = {{Simulation of quantum circuits by low-rank stabilizer decompositions}},
  author  = {Bravyi, Sergey and Browne, Dan and  Calpin, Padraic and Campbell, Earl and Gosset, David and Howard, Mark},
  journal = {Quantum},
  volume  = {3},
  pages   = {181},
  year    = {2019},
  month   = {September},
  doi     = {10.22331/q-2019-09-02-181},
  url     = {https://doi.org/10.22331/q-2019-09-02-181}
}

@article{Christoph2018,
  title     = {{Localization with random time-periodic quantum circuits}},
  author    = {S\"underhauf, Christoph and P\'erez-Garc\'{\i}a, David and Huse, David A. and Schuch, Norbert and Cirac, J. Ignacio},
  journal   = {Phys. Rev. B},
  volume    = {98},
  issue     = {13},
  pages     = {134204},
  numpages  = {16},
  year      = {2018},
  month     = {Oct},
  publisher = {American Physical Society},
  doi       = {10.1103/PhysRevB.98.134204},
  url       = {https://link.aps.org/doi/10.1103/PhysRevB.98.134204}
}

@article{Zhou2020,
  title     = {{Single T gate in a Clifford circuit drives transition to universal entanglement spectrum statistics}},
  author    = {Shiyu Zhou and Zhi-Cheng Yang and Alioscia Hamma and Claudio Chamon},
  journal   = {SciPost Phys.},
  volume    = {9},
  pages     = {087},
  year      = {2020},
  publisher = {SciPost},
  doi       = {10.21468/SciPostPhys.9.6.087},
  url       = {https://scipost.org/10.21468/SciPostPhys.9.6.087}
}

@article{Lunt2021,
  title     = {{Measurement-induced criticality and entanglement clusters: A study of one-dimensional and two-dimensional Clifford circuits}},
  author    = {Lunt, Oliver and Szyniszewski, Marcin and Pal, Arijeet},
  journal   = {Phys. Rev. B},
  volume    = {104},
  issue     = {15},
  pages     = {155111},
  numpages  = {17},
  year      = {2021},
  month     = {Oct},
  publisher = {American Physical Society},
  doi       = {10.1103/PhysRevB.104.155111},
  url       = {https://link.aps.org/doi/10.1103/PhysRevB.104.155111}
}

@article{Haake1999,
  author   = {Fritz Haake and Hans-Jürgen Sommers and Joachim Weber},
  journal  = {J. Phys. A: Math. Gen},
  pages    = {6903-6913},
  title    = {Fluctuations and ergodicity of the form factor of quantum propagators and random unitary matrices},
  volume   = {32},
  year     = {1999},
  doi      = {10.1088/0305-4470/32/40/301},
}

@article{Mele2023,
  author  = {Antonio Anna Mele},
  journal = {{Quantum}},
  volume  = {8},
  pages   = {1340},
  title   = {{Introduction to Haar Measure Tools in Quantum Information: A Beginner's Tutorial}},
  url     = {10.22331/q-2024-05-08-1340},
  year    = {2024}
}

@article{Roberts2017,
  author   = {Roberts, Daniel A.
              and Yoshida, Beni},
  title    = {Chaos and complexity by design},
  journal  = {J. High Energy Phys.},
  year     = {2017},
  month    = {Apr},
  day      = {20},
  volume   = {2017},
  number   = {4},
  pages    = {121},
  abstract = {We study the relationship between quantum chaos and pseudorandomness by developing probes of unitary design. A natural probe of randomness is the ``frame poten-tial,'' which is minimized by unitary k-designs and measures the 2-norm distance between the Haar random unitary ensemble and another ensemble. A natural probe of quantum chaos is out-of-time-order (OTO) four-point correlation functions. We show that the norm squared of a generalization of out-of-time-order 2k-point correlators is proportional to the kth frame potential, providing a quantitative connection between chaos and pseudorandomness. Additionally, we prove that these 2k-point correlators for Pauli operators completely determine the k-fold channel of an ensemble of unitary operators. Finally, we use a counting argument to obtain a lower bound on the quantum circuit complexity in terms of the frame potential. This provides a direct link between chaos, complexity, and randomness.},
  issn     = {1029-8479},
  doi      = {10.1007/JHEP04(2017)121},
  url      = {https://doi.org/10.1007/JHEP04(2017)121}
}

@article{Haferkamp2022,
  title   = {{Random quantum circuits are approximate unitary $t$-designs in depth $O\left(nt^{5+o(1)}\right)$}},
  author  = {Jonas Haferkamp},
  doi     = {10.22331/q-2022-09-08-795},
  journal = {Quantum},
  volume  = {6},
  pages   = {795},
  url     = {http://arxiv.org/abs/2203.16571 http://dx.doi.org/10.22331/q-2022-09-08-795},
  year    = {2022}
}

@article{Brandao2016,
  abstract = {Randomness is both a useful way to model natural systems and a useful tool for engineered systems, e.g. in computation, communication and control. Fully random transformations require exponential time for either classical or quantum systems, but in many case pseudorandom operations can emulate certain properties of truly random ones. Indeed in the classical realm there is by now a well-developed theory of such pseudorandom operations. However the construction of such objects turns out to be much harder in the quantum case. Here we show that random quantum circuits are a powerful source of quantum pseudorandomness. This gives the for the first time a polynomialtime construction of quantum unitary designs, which can replace fully random operations in most applications, and shows that generic quantum dynamics cannot be distinguished from truly random processes. We discuss applications of our result to quantum information science, cryptography and to understanding self-equilibration of closed quantum dynamics.},
  author   = {Fernando G. S. L. Brandão and Aram W. Harrow and Michal Horodecki},
  doi      = {10.1103/PhysRevLett.116.170502},
  journal  = {{Phys. Rev. Lett.}},
  volume   = {116},
  pages    = {170502},
  month    = {5},
  title    = {{Efficient Quantum Pseudorandomness}},
  url      = {http://arxiv.org/abs/1605.00713 http://dx.doi.org/10.1103/PhysRevLett.116.170502},
  year     = {2016}
}

@article{Chan2018,
  author    = {Amos Chan and Andrea {De Luca} and J. T. Chalker},
  doi       = {10.1103/PhysRevLett.121.060601},
  issn      = {10797114},
  issue     = {6},
  journal   = {Phys. Rev. Lett.},
  month     = {8},
  pages     = {060601},
  pmid      = {30141660},
  publisher = {American Physical Society},
  title     = {{Spectral Statistics in Spatially Extended Chaotic Quantum Many-Body Systems}},
  volume    = {121},
  year      = {2018}
}

@article{Nahum2017,
  title     = {{Quantum Entanglement Growth under Random Unitary Dynamics}},
  author    = {Nahum, Adam and Ruhman, Jonathan and Vijay, Sagar and Haah, Jeongwan},
  journal   = {Phys. Rev. X},
  volume    = {7},
  issue     = {3},
  pages     = {031016},
  numpages  = {30},
  year      = {2017},
  month     = {Jul},
  publisher = {American Physical Society},
  doi       = {10.1103/PhysRevX.7.031016},
  url       = {https://link.aps.org/doi/10.1103/PhysRevX.7.031016}
}

@article{Li2018,
  title     = {{Quantum Zeno effect and the many-body entanglement transition}},
  author    = {Li, Yaodong and Chen, Xiao and Fisher, Matthew P. A.},
  journal   = {Phys. Rev. B},
  volume    = {98},
  issue     = {20},
  pages     = {205136},
  numpages  = {9},
  year      = {2018},
  month     = {Nov},
  publisher = {American Physical Society},
  doi       = {10.1103/PhysRevB.98.205136},
  url       = {https://link.aps.org/doi/10.1103/PhysRevB.98.205136}
}

@article{Anderson1958,
  title     = {{Absence of Diffusion in Certain Random Lattices}},
  author    = {Anderson, P. W.},
  journal   = {Phys. Rev.},
  volume    = {109},
  issue     = {5},
  pages     = {1492--1505},
  numpages  = {0},
  year      = {1958},
  month     = {Mar},
  publisher = {American Physical Society},
  doi       = {10.1103/PhysRev.109.1492},
  url       = {https://link.aps.org/doi/10.1103/PhysRev.109.1492}
}

@article{Chao-Ming2020,
  title     = {Measurement-induced criticality in random quantum circuits},
  author    = {Jian, Chao-Ming and You, Yi-Zhuang and Vasseur, Romain and Ludwig, Andreas W. W.},
  journal   = {Phys. Rev. B},
  volume    = {101},
  issue     = {10},
  pages     = {104302},
  numpages  = {11},
  year      = {2020},
  month     = {Mar},
  publisher = {American Physical Society},
  doi       = {10.1103/PhysRevB.101.104302},
  url       = {https://link.aps.org/doi/10.1103/PhysRevB.101.104302}
}

@article{Chandran2015,
  abstract  = {We introduce a semiclassical limit for many-body localization in the absence of global symmetries. Microscopically, this limit is realized by disordered Floquet circuits composed of Clifford gates. In d=1, the resulting dynamics are always many-body localized with a complete set of strictly local integrals of motion. In d≥2, the system realizes both localized and delocalized phases separated by a continuous transition in which ergodic puddles percolate. We argue that the phases are stable to deformations away from the semiclassical limit and estimate the resulting phase boundary. The Clifford circuit model is a distinct tractable limit from that of free fermions and suggests bounds on the critical exponents for the generic transition.},
  author    = {Anushya Chandran and C. R. Laumann},
  doi       = {10.1103/PhysRevB.92.024301},
  issn      = {1098-0121},
  issue     = {2},
  journal   = {Phys. Rev. B},
  month     = {7},
  pages     = {024301},
  publisher = {American Physical Society},
  title     = {{Semiclassical limit for the many-body localization transition}},
  volume    = {92},
  url       = {https://link.aps.org/doi/10.1103/PhysRevB.92.024301},
  year      = {2015}
}

@article{Fisher2023,
  author  = {Fisher, Matthew P.A. and Khemani, Vedika and Nahum, Adam and Vijay, Sagar},
  title   = {Random Quantum Circuits},
  journal = {Annu. Rev. Condens. Matter Phys.},
  volume  = {14},
  number  = {1},
  pages   = {335-379},
  year    = {2023},
  doi     = {10.1146/annurev-conmatphys-031720-030658},
  url     = {https://doi.org/10.1146/annurev-conmatphys-031720-030658}
}

@article{Farshi2022_1D,
  abstract  = {How much do local and time-periodic dynamics resemble a random unitary? In the present work, we address this question by using the Clifford formalism from quantum computation. We analyze a Floquet model with disorder, characterized by a family of local, time-periodic, and random quantum circuits in one spatial dimension. We observe that the evolution operator enjoys an extra symmetry at times that are a half-integer multiple of the period. With this, we prove that after the scrambling time, namely, when any initial perturbation has propagated throughout the system, the evolution operator cannot be distinguished from a (Haar) random unitary when all qubits are measured with Pauli operators. This indistinguishability decreases as time goes on, which is in high contrast to the more studied case of (time-dependent) random circuits. We also prove that the evolution of Pauli operators displays a form of mixing. These results require the dimension of the local subsystem to be large. In the opposite regime, our system displays a novel form of localization, produced by the appearance of effective one-sided walls, which prevent perturbations from crossing the wall in one direction but not the other.},
  author    = {Tom Farshi and Daniele Toniolo and Carlos E. González-Guillén and  {\'A}lvaro M. Alhambra and Lluis Masanes},
  doi       = {10.1063/5.0054863},
  issn      = {00222488},
  issue     = {3},
  journal   = {J. Math. Phys.},
  month     = {3},
  publisher = {American Institute of Physics Inc.},
  title     = {{Mixing and localization in random time-periodic quantum circuits of Clifford unitaries}},
  volume    = {63},
  year      = {2022},
  pages     = {032201},
  url       = {https://pubs.aip.org/aip/jmp/article/63/3/032201/2845843/Mixing-and-localization-in-random-time-periodic}
}

@article{Prozen2020,
  title     = {{Operator Entanglement in Local Quantum Circuits I: Chaotic Dual-Unitary  Circuits}},
  author    = {Bruno Bertini and Pavel Kos and Toma\ifmmode\check{z}\else\v{z}\fi{} Prosen},
  journal   = {SciPost Phys.},
  volume    = {8},
  pages     = {067},
  year      = {2020},
  publisher = {SciPost},
  doi       = {10.21468/SciPostPhys.8.4.067},
  url       = {https://scipost.org/10.21468/SciPostPhys.8.4.067}
}

@article{Farshi2022_2D,
  title     = {{Absence of Localization in Two-Dimensional Clifford Circuits}},
  author    = {Farshi, Tom and Richter, Jonas and Toniolo, Daniele and Pal, Arijeet and Masanes, Lluis},
  journal   = {PRX Quantum},
  volume    = {4},
  issue     = {3},
  pages     = {030302},
  numpages  = {18},
  year      = {2023},
  month     = {Jul},
  publisher = {American Physical Society},
  doi       = {10.1103/PRXQuantum.4.030302},
  url       = {https://link.aps.org/doi/10.1103/PRXQuantum.4.030302}
}

@article{Moudgalya2022,
  title     = {Hilbert Space Fragmentation and Commutant Algebras},
  author    = {Moudgalya, Sanjay and Motrunich, Olexei I.},
  journal   = {Phys. Rev. X},
  volume    = {12},
  issue     = {1},
  pages     = {011050},
  numpages  = {44},
  year      = {2022},
  month     = {Mar},
  publisher = {American Physical Society},
  doi       = {10.1103/PhysRevX.12.011050},
  url       = {https://link.aps.org/doi/10.1103/PhysRevX.12.011050}
}

@article{Aaronson2004,
  title     = {Improved simulation of stabilizer circuits},
  author    = {Aaronson, Scott and Gottesman, Daniel},
  journal   = {Phys. Rev. A},
  volume    = {70},
  issue     = {5},
  pages     = {052328},
  numpages  = {14},
  year      = {2004},
  month     = {Nov},
  publisher = {American Physical Society},
  doi       = {10.1103/PhysRevA.70.052328},
  url       = {https://link.aps.org/doi/10.1103/PhysRevA.70.052328}
}

@article{Brezin1997,
  title     = {Spectral form factor in a random matrix theory},
  author    = {Br\'ezin, E. and Hikami, S.},
  journal   = {Phys. Rev. E},
  volume    = {55},
  issue     = {4},
  pages     = {4067--4083},
  numpages  = {0},
  year      = {1997},
  month     = {Apr},
  publisher = {American Physical Society},
  doi       = {10.1103/PhysRevE.55.4067},
  url       = {https://link.aps.org/doi/10.1103/PhysRevE.55.4067}
}

@article{Gross2007,
  author  = {D Gross and K Audenaert and J Eisert},
  journal = {{J. Math. Phys.}},
  volume  = {48},
  pages   = {052104},
  title   = {{Evenly distributed unitaries: on the structure of unitary designs}},
  year    = {2007},
  doi     = {10.1063/1.2716992},
}

@misc{Hunter-Jones2019,
  abstract = {Random quantum circuits are proficient information scramblers and efficient generators of randomness, rapidly approximating moments of the unitary group. We study the convergence of local random quantum circuits to unitary $k$-designs. Employing a statistical mechanical mapping, we give an exact expression of the distance to forming an approximate design as a lattice partition function. In the statistical mechanics model, the approach to randomness has a simple interpretation in terms of domain walls extending through the circuit. We analytically compute the second moment, showing that random circuits acting on $n$ qudits form approximate 2-designs in $O(n)$ depth, as is known. Furthermore, we argue that random circuits form approximate unitary $k$-designs in $O(nk)$ depth and are thus essentially optimal in both $n$ and $k$. We can show this in the limit of large local dimension, but more generally rely on a conjecture about the dominance of certain domain wall configurations.},
  author   = {Nicholas Hunter-Jones},
  month    = {5},
  eprint   = {arXiv:1905.12053},
  doi      = {10.48550/arXiv.1905.12053},
  title    = {Unitary designs from statistical mechanics in random quantum circuits},
  url      = {http://arxiv.org/abs/1905.12053},
  year     = {2019}
}

\clearpage
\appendix





\section{Consistency conditions for \texorpdfstring{$\FSWAP$}{FSWAP}-like 2-walls}
\label{app:fswap_consistency}

In this section, we explicitly construct $\FSWAP$-like two-walls (Figure \ref{fig:fswap_wall_reshaped}.) by looking at assignments of the single-qubit Clifford degrees of freedom that preserve localisation from Section \ref{sec:wall_configs}.
The common feature of wall configurations is that they do not allow the spreading of an arbitrary single-qubit Pauli (and thus the entire Pauli group) within the $k$-qubit inner subspace of the wall.
As discussed in the main text, there are two qualitatively different types of assignments which we call \enquote*{interference-free} and \enquote*{interfering}.

\begin{figure}[htp]
  \centering\includegraphics[width=0.75\linewidth]{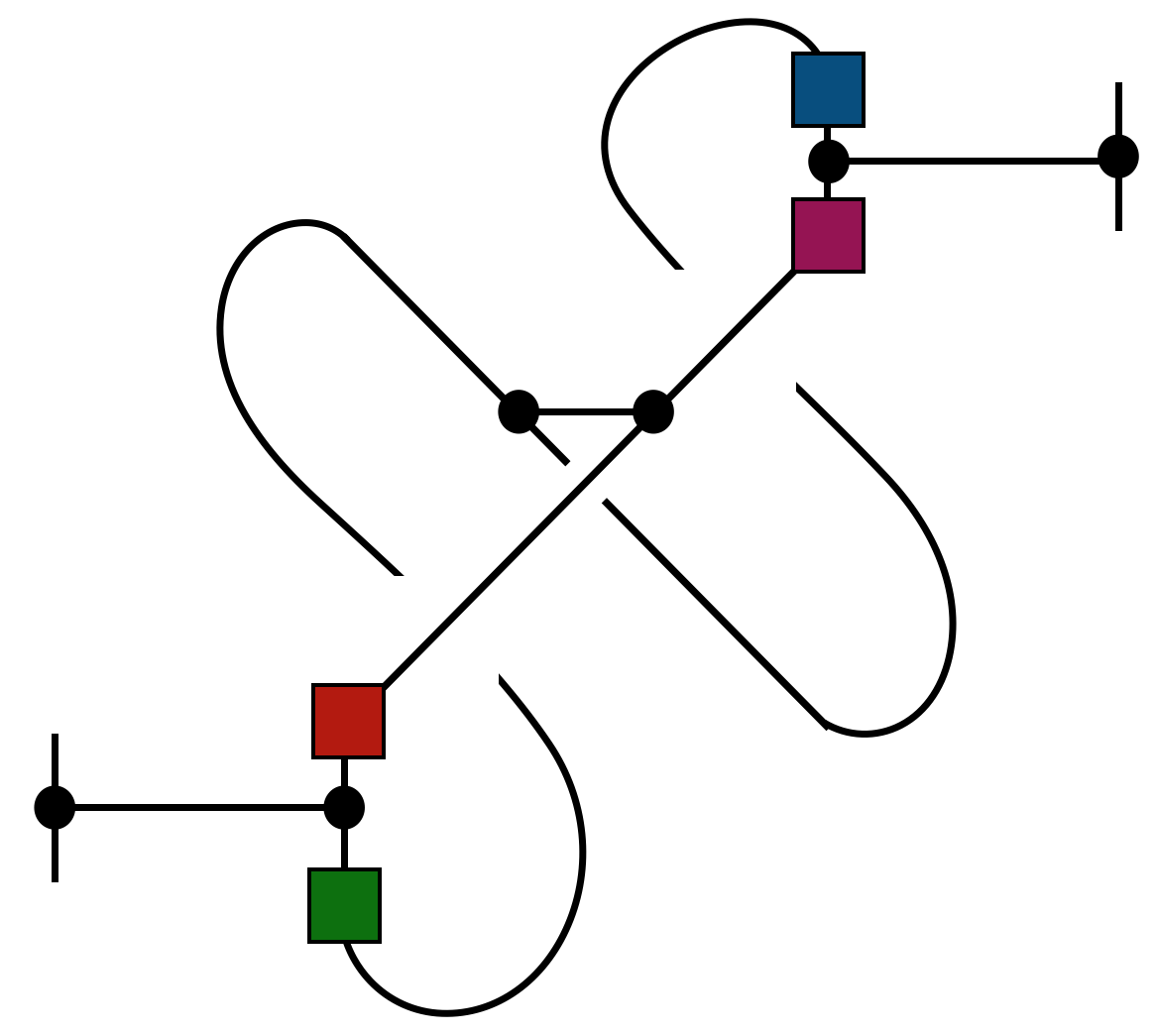}
  \caption{\small General form of $\FSWAP$-like $2$-walls. We have reshaped the tensor such that counting equivalences reduces to consistently assigning Pauli subgroups to the $\FSWAP$ gate's tensor legs.}
\label{fig:fswap_wall_reshaped}
\end{figure}

For the interference-free case, there are $5$ cases shown on Figure \ref{fig:fswap_interference_free}. In these, we ensure that only a single-qubit Pauli subgroup $\{ \mathds{1}, \sigma_c\}$ propagates on each tensor leg. With the width of the wall larger than $1$, however, $\mathcal{C}_1$ elements can change the $\sigma_c$ as the operator spreads through the $\FSWAP$ gates. In these cases, the subgroup of $2$-qubit Paulis that can be generated are given by two local generators.

In Figure \ref{fig:fswap_interference} we illustrate how the interfering case works, and find there are 4 inequivalent choices. One can take the $\mathrm{CNOT}$ gate as the representative $\CZ$-class element to generate interference with $\FSWAP$. Then, we may insert $S$ gates around the target qubit of $\mathrm{CNOT}$-s to change the Pauli inserted into the wall's central subspace from $X$ to $Y$ and vice versa.
In this example, the left and right internal subspaces have the appearance of Jordan-Wigner fermions with parity strings stretching back to the side they were inserted from.
The intersection of these internal subspaces are therefore even collections of fermions such that the strings cancel at the boundary and these form the conserved charges.
Therefore, there are in total $9$ assignments for $\FSWAP$-like walls which gives the $1/9$ consistency probability in the expression for sampling probability in \Cref{eq:sampling_prob_fswap}. Constructing interfering assignments for $k>2$ becomes increasingly complicated as one has to look for ever-larger subgroup assignments of the tensor legs that fulfil the localisation constraint at the end of the walls. 

\begin{figure}
  \centering
\includegraphics[width=\linewidth]{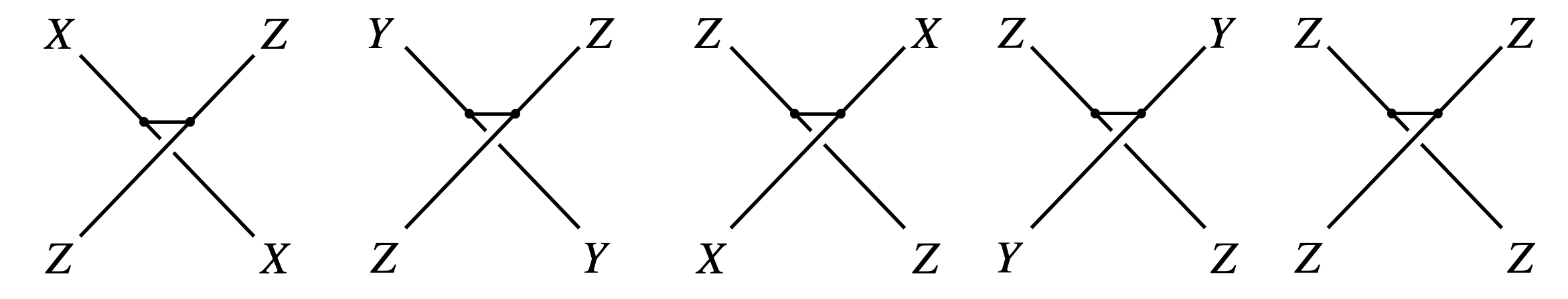}
  \caption{\small Inequivalent single-qubit Clifford assignment for $\FSWAP$-like $2$-walls. In these cases without interference, each tensor leg carries a Pauli subgroup $\{\mathds{1}, \sigma_c\}$. The diagrams should be understood that the Pauli shown on the legs or the identity may propagate in the loops of the wall-diagram.}
\label{fig:fswap_interference_free}
\end{figure}
\begin{figure}
  \centering
  \includegraphics[width=\linewidth]{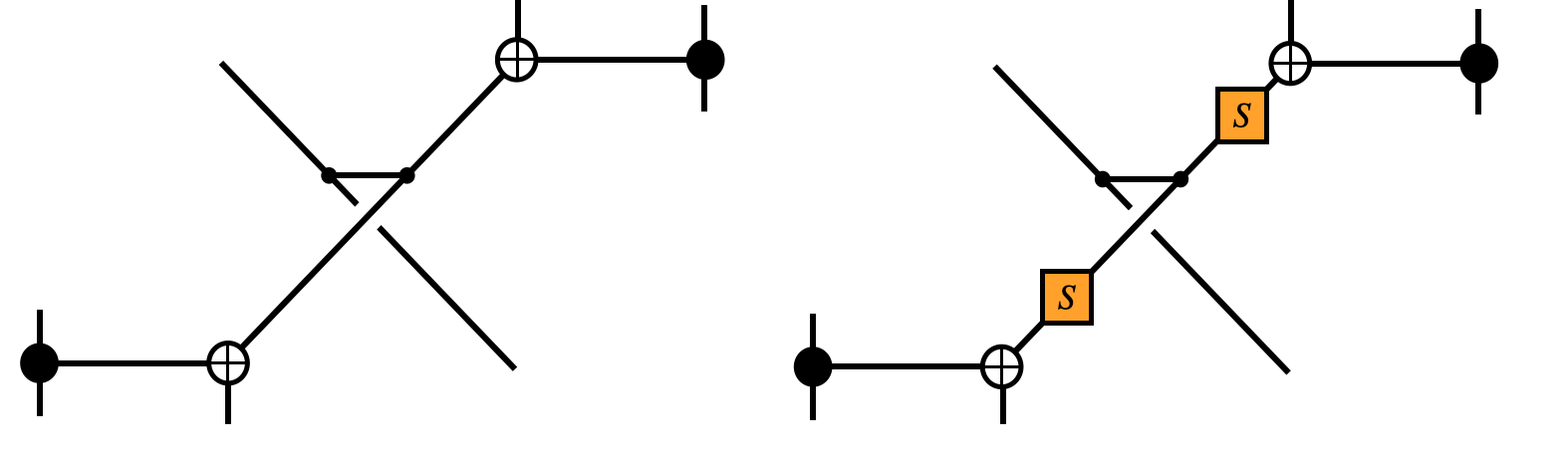}
  \caption{\small Pauli assignment to $2$-walls exhibiting interference. From the left gate, we construct equivalent configurations by applying $S$ gates on the diagonal. $2$ additional instances can be created by adding $S$ gates on the adjacent $\FSWAP$ legs, or on all legs, yielding the $1/4$ consistency probability for this class. Neither of these gates host $2$-local conservation laws.
  }
\label{fig:fswap_interference}
\end{figure}

\section{The 0-walls are exactly the product unitaries}
\label{sec:0-walls}

That product unitaries form 0-walls can be easily and directly verified, so we focus on the other direction of the equivalence.

Without loss of generality, take a complete orthonormal basis of Hermitian operators for each of the Hilbert-Schmidt spaces on $L$ and $R$.
In what follows $A$ and $B$ will come from these two bases respectively.
For example, in the case of qubits these basis could be chosen to be the standard Pauli basis.
From the definition of a $0$-wall we have,
\begin{align}
  0 &= \frac{1}{2} \|[B,\mathrm{Ad}_U A]\|_F^2 \\
  &= \tr[U A A^{\dagger} U^\dagger B B^{\dagger}] - \tr[U A U^\dagger B U A^{\dagger} U^\dagger B^{\dagger}]
  \text{.}
  \nonumber
\end{align}
The first term is a kind of normalisation term and the second term is an out-of-time-order correlator.
We take a singular value decomposition of $U$ between $L \otimes L^\ast$ and $R \otimes R^\ast$ with singular values $\sigma$.
This can be represented diagrammatically as follows,
\begin{align}
  \includegraphics[width=0.8\linewidth]{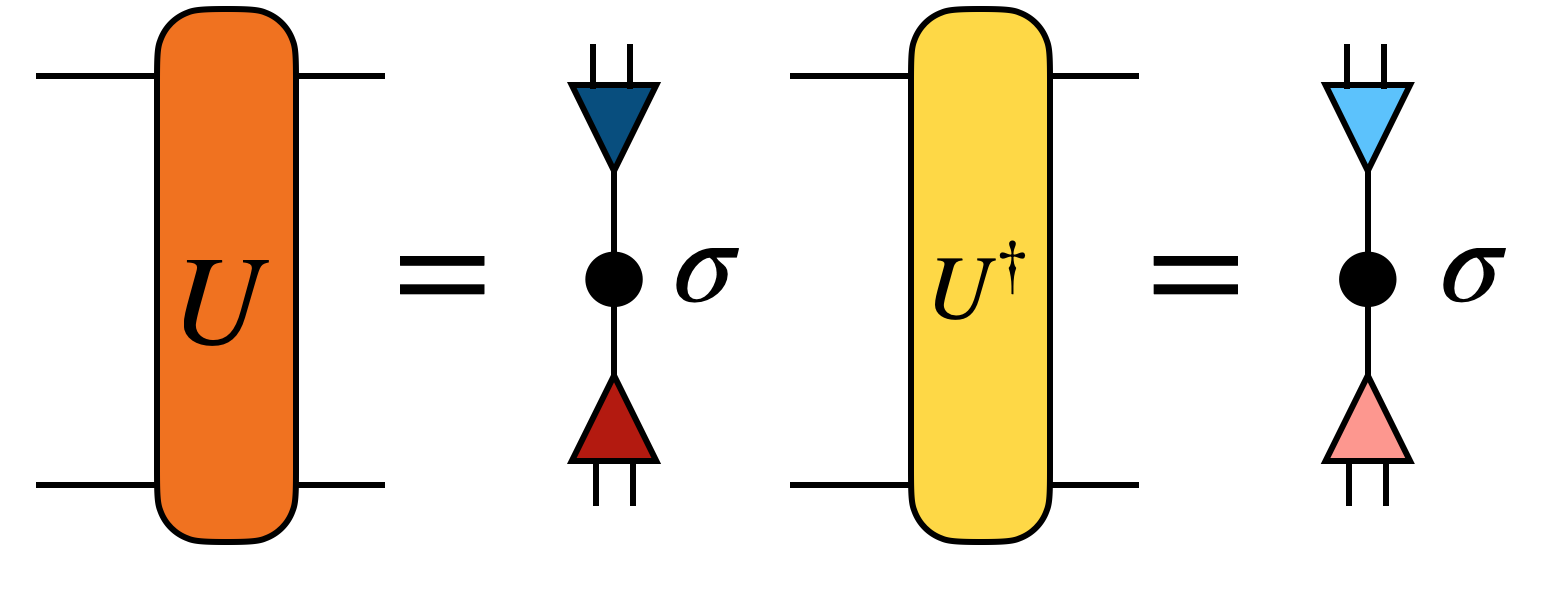}
  \nonumber
\end{align}
where the triangles represent the unitary isometries and the black dot represents the singular values.
The blue triangle pair are adjoints of one another and also horizontally flipped to avoid having to cross lines in diagrams.
By integrating over the two bases, one produces diagrams involving partial traces because this creates a resolution of the identity on Hilbert-Schmidt space.
Putting this together, for the out-of-time-order contribution,
\begin{align}
  \sum_{A,B}& \tr\left[ B U A U^\dagger B^\dagger U A U^\dagger \right] \\
  &= \vcenter{\hbox{\includegraphics[scale=0.35]{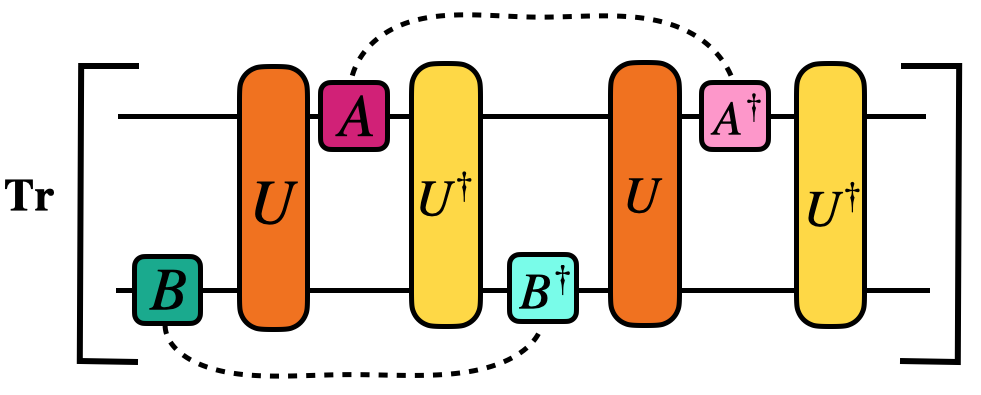}}} \\
  &= \vcenter{\hbox to 0.75\linewidth{\hfil\includegraphics[scale=0.35]{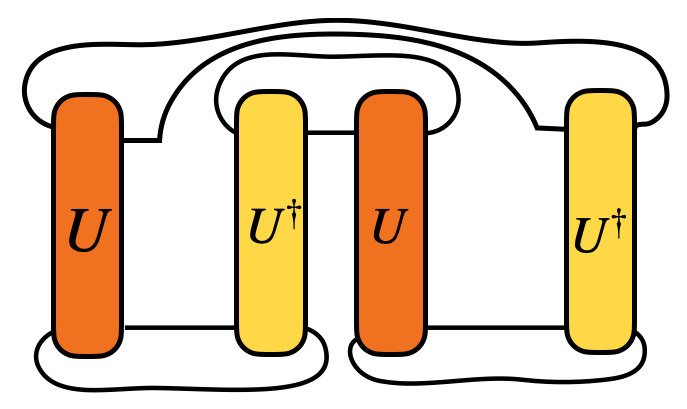}\hfil}} \\
  &= \vcenter{\hbox{\includegraphics[scale=0.35]{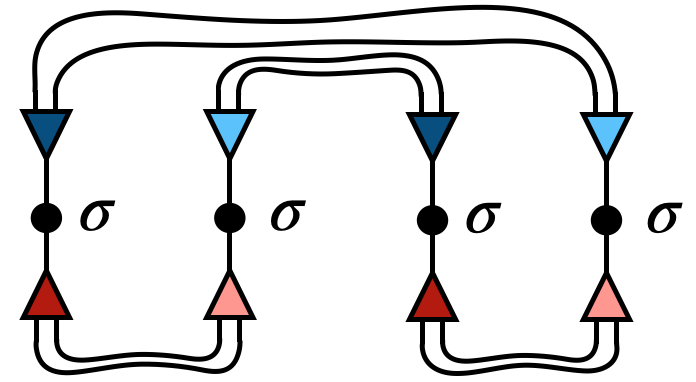}}}
  = \sum_i \sigma_i^4
\end{align}
and then for the time-ordered contribution,
\begin{align}
  \sum_{A,B}& \tr\left[ B B^\dagger U A A^\dagger U^\dagger \right] \\
  &= \vcenter{\hbox{\includegraphics[scale=0.35]{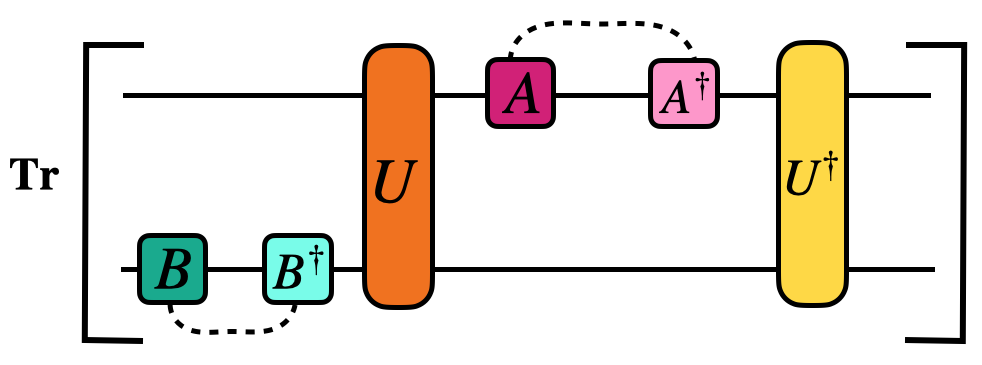}}} \\
  &= \vcenter{\hbox{\includegraphics[scale=0.35]{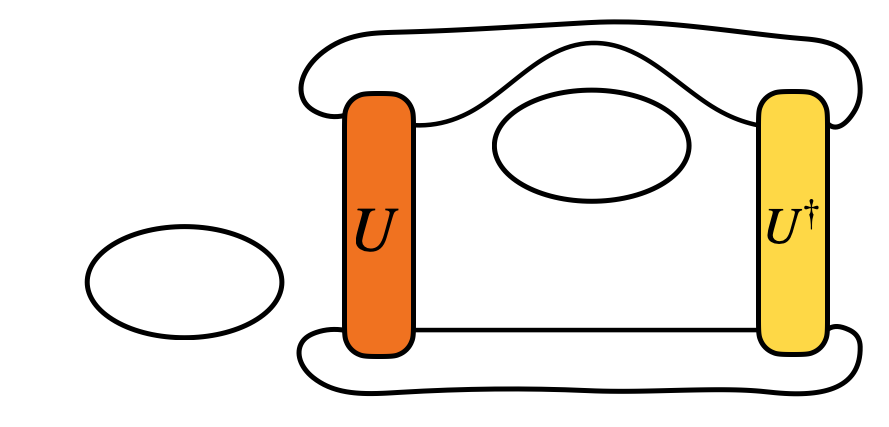}}} \\
  &= \vcenter{\hbox{\includegraphics[scale=0.35]{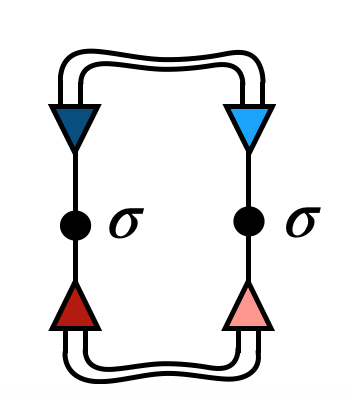}}}
  \vcenter{\hbox{\includegraphics[scale=0.35]{sigmas_2c.png}}}
  = \left( \sum_i \sigma_i^2 \right)^2
\end{align}
where for the final line we use the fact that we could contract the two unitaries to make a diagram consisting only of four loops.
Hence, the two loops in the original diagram is equal to the other portion of the diagram.
Combining the time-ordered and out-of-time-order contributions we see that,
\begin{align}
  0 = \left (\sum_i \sigma_i^2\right)^2 - \sum_i \sigma_i^4
  \text{.}
\end{align}
Hence, there is only one non-zero singular value.
This occurs exactly when $U$ is a product unitary over $L$ and $R$.
Essentially, in this calculation, we have computed an analogue to the entanglement R\'enyi 2-entropy for $U$ and find it to be unentangled.

\raggedbottom

\end{document}